\documentclass[11pt,letterpaper]{article}

\usepackage[letterpaper,margin=1in]{geometry}
\usepackage{setspace}
\usepackage[T1]{fontenc}
\usepackage{newtxtext}

\usepackage{amsmath}
\usepackage{amsthm}
\usepackage{amssymb}
\usepackage{mathtools}
\usepackage{microtype}

\usepackage{authblk}
\usepackage{graphicx}
\usepackage{booktabs}
\usepackage{makecell}
\usepackage{diagbox}
\usepackage[font=normalsize,labelfont=bf]{caption}
\usepackage{algorithm}
\usepackage{float}
\usepackage{placeins}
\usepackage[noend]{algpseudocode}
\algrenewcommand\algorithmicrequire{\textbf{Input:}}
\algrenewcommand\algorithmicensure{\textbf{Output:}}

\AtBeginEnvironment{algorithmic}{\normalsize}

\usepackage{enumitem}
\setlist[itemize]{leftmargin=1.8em,itemsep=0.12em,topsep=0.30em}
\setlist[enumerate]{leftmargin=1.8em,itemsep=0.12em,topsep=0.30em}

\usepackage[numbers,sort&compress]{natbib}
\usepackage[colorlinks=true,citecolor=blue,linkcolor=blue,urlcolor=blue]{hyperref}
\usepackage[nameinlink,noabbrev]{cleveref}

\theoremstyle{plain}
\newtheorem{theorem}{Theorem}[section]
\newtheorem{lemma}[theorem]{Lemma}

\newtheorem{observation}[theorem]{Observation}

\theoremstyle{definition}
\newtheorem{definition}[theorem]{Definition}

\makeatletter
\renewenvironment{abstract}
{%
	\normalsize
	\begin{center}
		\bfseries \abstractname
	\end{center}
	\quotation
}
{%
	\endquotation
}
\makeatother

\newcommand{\supp}{\operatorname{supp}}

\title{\textbf{%
		Breaking the 1/3 Barrier for 
		\(\boldsymbol{k}\)-Submodular Maximization under Matroid and Knapsack Constraints:
		A Proportional Top-2 Randomized Framework}\thanks{Authors are listed in alphabetical order by surname.}}

\author[1]{Siyuan Chen}

\author[2,3]{Shengminjie Chen}

\author[1,4]{Suixiang Gao}

\author[1]{Zheyu Jiang}

\author[5,6]{Chenhao Wang}

\author[1]{Wenguo Yang}

\affil[1]{School of Mathematical Sciences,
	University of Chinese Academy of Sciences,
	Beijing 100049, China}

\affil[2]{State Key Lab of Processors,
	Institute of Computing Technology,
	Chinese Academy of Sciences,
	Beijing 100190, China}

\affil[3]{School of Computer Science and Technology,
	University of Chinese Academy of Sciences,
	Beijing 100049, China}

\affil[4]{Zhongguancun Laboratory,
	Beijing 100000, China}

\affil[5]{Beijing Normal University-Zhuhai,
	Zhuhai, China}

\affil[6]{Beijing Normal-Hong Kong Baptist University,
	Zhuhai, China}

\date{\vspace{-1.5em}
	\small\texttt{
		chensiyuan221@mails.ucas.ac.cn,
		csmj@ict.ac.cn,
		sxgao@ucas.ac.cn,
		jiangzheyu24@mails.ucas.ac.cn,
		chenhwang@bnu.edu.cn,
		yangwg@ucas.ac.cn
}}

\begin{document}
\hypersetup{pageanchor=false}

\maketitle
\thispagestyle{empty}

\vspace{1.5em}
\begin{abstract}
    \(k\)-submodularity generalizes submodularity by allowing each selected element to be assigned one of \(k\) labels, rather than being merely selected or not selected. We study the problem of maximizing a nonnegative non-monotone \(k\)-submodular function, where \(k\ge 2\), under classical support constraints, including a single matroid constraint and a single knapsack constraint. Previously, the best known approximation guarantees for non-monotone constrained \(k\)-submodular maximization had long remained at \(1/3\) or \(1/3-\varepsilon\), even in basic settings such as cardinality, matroid, and knapsack constraints. We show that this \(1/3\) barrier is not inherent: for both the matroid and knapsack settings considered here, we give randomized polynomial-time algorithms achieving an approximation ratio of \(\sqrt{2}-1\approx 0.4142\). The algorithms use a simple randomized greedy rule: once an element is selected, its label is chosen only from the two labels with the largest marginal gains, with probabilities proportional to the positive parts of these two gains. The value-oracle query complexity is \(O(n^2k)\) in the matroid setting and \(O(n^3k^2)\) in the knapsack setting. These results give the first approximation guarantees exceeding
    \(1/3\) for non-monotone \(k\)-submodular maximization under matroid and knapsack constraints.
    
    \par\medskip
    \textbf{Keywords:}
    \(k\)-submodular maximization; non-monotone; randomized greedy; matroid constraints; knapsack constraints; approximation algorithms.
\end{abstract}


\clearpage
\hypersetup{pageanchor=true}
\pagenumbering{arabic}
\setcounter{page}{1}


\section{Introduction}
Let \(V= \{e_1,\dots,e_n\}\) be a ground set of \(n\) elements and $2^V$ denote the power set of $V$. A function
\(f:2^V\to \mathbb{R}\) is \textit{submodular} if, for all \(X,Y\in 2^V\),
\[
f(X)+f(Y)\ge f(X\cup Y)+f(X\cap Y).
\]
Submodularity captures the principle of diminishing returns and plays an important role in combinatorial optimization and operations research. Classical examples of submodular functions include cut functions, matroid rank functions, and entropy functions. 

Motivated by a question of Lov\'asz~\cite{lovasz1983submodular} on generalizations of submodularity that preserve useful algorithmic properties, \textit{\(k\)-submodularity} has been studied as a natural extension of submodularity. The term was first used by Huber and Kolmogorov~\cite{huber2012towards}, although closely related notions had appeared earlier~\cite{cohen2006complexity}. Informally, a \(k\)-submodular function assigns each selected element of a ground set to one of \(k\) labels and evaluates the resulting \(k\) pairwise disjoint subsets.

\begin{definition}\label{D1.1}
	Define
	$
	(k+1)^V :=
	\left\{
	(X_1,\ldots,X_k) :
	X_i\subseteq V (\forall  i\in[k]),
	X_i\cap X_j=\emptyset (\forall i\neq j)
	\right\},
	$where \([k]:=\{1,\ldots,k\}\) is a label set.
	For \(\mathbf{x}=(X_1,\ldots,X_k)\) and \(\mathbf{y}=(Y_1,\ldots,Y_k)\) in \((k+1)^V\), define
	\(\mathbf{x}\sqcap\mathbf{y}:=(X_1\cap Y_1,\ldots,X_k\cap Y_k)\) and
	\(\mathbf{x}\sqcup\mathbf{y}:=(Z_1,\ldots,Z_k)\), where
	$
	Z_i := (X_i\cup Y_i)\setminus
	\bigcup_{j\in[k]\setminus\{i\}}(X_j\cup Y_j).
	$
	A function \(f:(k+1)^V\to\mathbb{R}\) is \(k\)-submodular if, for all \(\mathbf{x},\mathbf{y}\in(k+1)^V\),
	\[
	f(\mathbf{x})+f(\mathbf{y})
	\ge
	f(\mathbf{x}\sqcap\mathbf{y})+f(\mathbf{x}\sqcup\mathbf{y}).
	\]
\end{definition}

Notably,  this definition reduces to a submodular function when $k=1$ and a bisubmodular function when $k=2$~\cite{mccormick2010strongly,huber2014skew}. Throughout this paper, we assume that  \(f(\mathbf{0})=0\), where \(\mathbf{0}:=(\emptyset,\ldots,\emptyset)\). We write \(\mathbf{x}\preceq\mathbf{y}\) if \(X_i\subseteq Y_i\) for all \(i\in[k]\). The function \(f\) is \textit{monotone} if \(f(\mathbf{x})\le f(\mathbf{y})\) whenever \(\mathbf{x}\preceq\mathbf{y}\).

Extensive research has focused on nonnegative \(k\)-submodular maximization under  support constraints. Given a \(k\)-submodular function \(f:(k+1)^V\to\mathbb{R}_{\ge 0}\), the problem can be formulated as
\[
\max \left\{ f(\mathbf{x}) : \mathbf{x}\in(k+1)^V \text{ with } \operatorname{supp}(\mathbf{x})\in \mathcal{P} \right\},
\]
where \(\operatorname{supp}(\mathbf{x}):=\bigcup_{i\in[k]}X_i\) denotes the \textit{support}  of \(\mathbf{x}\), and \(\mathcal{P}\subseteq 2^V\) is a family of feasible supports.

When \(\mathcal{P}=2^V\) (the unconstrained setting), the problem is already NP-hard as it captures cut problems such as Max-Cut and Max-\(k\)-Cut~\cite{ward2016maximizing,papadimitriou1988optimization}. For non-monotone \(f\), Ward and \v{Z}ivn\'y~\cite{ward2014maximizing} gave a deterministic \(\frac{1}{3}\)-approximation algorithm. They also analyzed a randomized greedy algorithm that assigns an element to a label with probability proportional to its marginal gain, obtaining a \(1/(1+\alpha)\)-approximation, where \(\alpha=\max\{1,\sqrt{(k-1)/4}\}\). This result provided a reusable randomized greedy framework. By choosing the probability distribution more carefully, Iwata, Tanigawa, and Yoshida~\cite{iwata2016improved} obtained a \(\frac{1}{2}\)-approximation. Oshima~\cite{oshima2021improved} later improved the ratio to \(\frac{k^2+1}{2k^2+1}\) for \(k\ge 3\). For monotone \(f\), the authors in~\cite{iwata2016improved} proved that any \((\frac{k+1}{2k}+\varepsilon)\)-approximation requires exponentially many value queries, and gave a randomized \(\frac{k}{2k-1}\)-approximation algorithm.

When \(\mathcal{P}\) is a cardinality, single matroid, or single knapsack constraint, several algorithms achieve a \(\frac{1}{2}\) or \((\frac{1}{2}-\epsilon)\)-approximation for monotone \(f\)~\cite{ohsaka2015monotone,niu2023fast,sakaue2017maximizing,wang2025approximation}; the multilinear-extension approach further gives a \((\frac{1}{2}-\varepsilon)\)-approximation for a constant number of knapsack constraints~\cite{zhou2025improved}. In view of the  oracle hardness threshold of \(\frac{k+1}{2k}\) mentioned above, these guarantees are asymptotically tight in \(k\).
For non-monotone \(f\), however, the best known guarantees under cardinality, matroid, and knapsack constraints have remained at \(1/3\) or \(1/3-\varepsilon\). This holds both for combinatorial algorithms based on greedy or local search~\cite{nguyen2020streaming,niu2023fast,sun2022maximization,wang2025approximation} and for multilinear-extension-based algorithms~\cite{zhou2025improved}. The gap between \(\frac{1}{3}\) and \(\frac{k+1}{2k}\) raises a natural question:

\emph{Is \(\frac{1}{3}\) an inherent barrier for constrained non-monotone \(k\)-submodular maximization, or only a limitation of current algorithms and analyses?}

We show that the \(1/3\) barrier is not inherent. Our main result is a unified framework that achieves a \((\sqrt2-1)\)-approximation for non-monotone \(k\)-submodular maximization under either a matroid constraint or a knapsack constraint. Since \(\sqrt2-1\approx 0.4142\), this gives the first improvement over the \(1/3\) barrier for these constrained \(k\)-submodular settings.
It is also worth noting that, for non-monotone submodular maximization under the same types of constraints, corresponding to the case \(k=1\), the best known approximation ratio is \(0.401\), achieved through substantially involved continuous techniques~\cite{buchbinder2024constrained}. Our result does not apply to \(k=1\), since it uses a characteristic property available only when \(k\ge2\). Nevertheless, it shows that the additional label structure in \(k\)-submodularity is not merely a source of difficulty: it can also provide exploitable local structure for a simple randomized greedy framework to achieve a \(0.4142\)-approximation.
Table~\ref{Table1} summarizes our results and compares them with prior work.

\begin{table}[t]
	\centering
	\caption{Approximation guarantees for non-monotone \(k\)-submodular maximization under classical support constraints. Our results apply to \(k\ge2\). Submodular \((k=1)\) results are shown only for comparison.}
	\label{Table1}
	
	\renewcommand{\arraystretch}{1.18}
	\begin{tabular}{@{}lccc@{}}
		\toprule
		Constraint type
		& Submodular \((k=1)\) prior 
		& \(k\)-submodular prior 
		& This work \\
		\midrule
		Cardinality\(^\dagger\) 
		& \(0.401\)~\cite{buchbinder2024constrained} 
		& \(1/3\)~\cite{nguyen2020streaming} 
		& \(\boldsymbol{\sqrt2-1}\) \\
		\midrule
		Single matroid 
		& \(0.401\)~\cite{buchbinder2024constrained} 
		& \makecell{\(1/3\)~\cite{sun2022maximization} \\[0.15em] \(1/3-\varepsilon\)~\cite{niu2023fast,zhou2025improved}}
		& \(\boldsymbol{\sqrt2-1}\) \\
		\midrule
		Single knapsack 
		& \(0.401\)~\cite{buchbinder2024constrained} 
		& \makecell{\(1/3\)~\cite{wang2025approximation} \\[0.15em] \(1/3-\varepsilon\)~\cite{zhou2025improved}}
		& \(\boldsymbol{\sqrt2-1}\) \\
		\bottomrule
	\end{tabular}
	
	\vspace{0.4em}
	\begin{minipage}{0.92\linewidth}
		\(^\dagger\) Cardinality constraints can be viewed as uniform matroid constraints.
	\end{minipage}
\end{table}

\begin{theorem}[Informal]
	For nonnegative non-monotone \(k\)-submodular maximization with \(k\ge 2\), there is a polynomial-time randomized algorithm with the following guarantees:
	\begin{enumerate}
		\item a \((\sqrt{2}-1)\)-approximation under a single matroid constraint, using \(O(n^2k)\) value-oracle queries,
		\item a \((\sqrt{2}-1)\)-approximation under a single knapsack constraint, using \(O(n^3k^2)\) value-oracle queries.
	\end{enumerate}
\end{theorem}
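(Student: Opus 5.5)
We prove both parts through one randomized greedy template. We then account for the loss separately for the matroid and knapsack settings.

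\textbf{The algorithm.} Elements are processed in greedy order. In the matroid case, we repeatedly pick an element \(e\) that keeps the support independent and maximizes \(\max_i \Delta_{e,i}f(\mathbf{s})\). In the knapsack case, we guess a constant number of the largest-value elements of an optimal solution and then run a density-greedy with partial enumeration; this guessing is the source of the extra factor \(nk\) in the query count. Once \(e\) is chosen, let \(a\ge b\) be the two largest marginal gains \(\Delta_{e,i}f(\mathbf{s})\). We assign label \(i_1\) with probability \(a^+/(a^++b^+)\) and label \(i_2\) with probability \(b^+/(a^++b^+)\).

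\textbf{Key analytic tool: the hybrid-solution argument.} Following Ward–Živný and Iwata–Tanigawa–Yoshida, we keep a hybrid solution \(\mathbf{o}^{(j)}\). It agrees with the algorithm's labels on the elements processed so far, agrees with OPT elsewhere, and is kept feasible by a matroid exchange bijection between the greedy choices and OPT elements. The goal is the per-step inequality
\[
\mathbb{E}[f(\mathbf{o}^{(j-1)})-f(\mathbf{o}^{(j)})]\le c\,\mathbb{E}[f(\mathbf{s}^{(j)})-f(\mathbf{s}^{(j-1)})]
\]
with \(c=\sqrt2\). Summing over steps gives \(f(\mathrm{OPT})-\mathbb{E}f(\mathbf{s})\le \sqrt2\,\mathbb{E} f(\mathbf{s})\), hence ratio \(1/(1+\sqrt2)=\sqrt2-1\).

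\textbf{Bounding the loss.} At each step, the loss splits into two parts:
\begin{itemize}
\item removing the OPT element \(o\) that is exchanged out, which by orthant submodularity costs at most \(\Delta_{o,i^*}f(\mathbf{s})\le a\) by the greedy choice;
\item relabelling \(e\), which by pairwise monotonicity (the \(k\ge2\) property \(\Delta_{e,i}+\Delta_{e,j}\ge0\)) costs at most \(\Delta_{e,i^*}-\Delta_{e,\ell}\) for the chosen label \(\ell\).
\end{itemize}
With probabilities \(p=a/(a+b)\) and \(q=b/(a+b)\), the expected gain is \((a^2+b^2)/(a+b)\). We then maximize the expected loss over the optimal label \(i^*\): either \(i^*\) is among the top two, or its gain is at most \(b\). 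Each resulting ratio bound reduces to an inequality such as \(2ab\le(\sqrt2-1)(a^2+b^2)+\ldots\), which holds because \(\min_t (1+t^2)/(2t+\cdots)\) equals the \(\sqrt2\) constant. When \(b\le0\) the labelling is deterministic, and pairwise monotonicity bounds the loss by \(a\).

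\textbf{Main obstacle.} The hardest step is combining the exchange loss (the removed OPT element) with the relabel loss within a single step while keeping the constant at \(\sqrt2\). Naively they add to give \(c=2\), which yields only the old \(1/3\). I would first try to charge the removed element's loss against a half of the gain \(a\), using the fact that the hybrid's OPT element values are dominated by the greedy's top gain. For the knapsack case, the additional difficulty is the final partially fitting element. We handle it by the standard enumeration of the top three OPT elements, which bounds any single leftover element by \(f(\mathrm{OPT})/3\) of the residual, and we argue the density-greedy analysis telescopes in the same way. The query counts follow: \(O(nk)\) per step over \(n\) steps in the matroid case, multiplied by the enumeration overhead in the knapsack case.
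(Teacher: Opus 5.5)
\textbf{Gap 1: the loss accounting.} The gap sits exactly at the step you flag as the main obstacle, and your proposed remedy would not close it. In the constrained hybrid, the two losses you list never occur in the same step.

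If the chosen $e$ lies in the residual optimum, it is its own exchange partner, so nothing else is deleted. The step is a pure relabel. Its loss, with marginals taken at the hybrid, is at most $\max\{2(1-p),2pz\}\,a=\tfrac{2z}{1+z}\,a\le G_e$, where $z=b^+/a$ and $G_e=\tfrac{1+z^2}{1+z}\,a$ is the expected gain.

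If $e$ is not in the residual optimum, nothing is relabelled: $e$ is \emph{added} to the hybrid and $o$ is deleted. The missing estimate is that the pure-addition loss $-\mathbb{E}_i[\Delta_{e,i}f(\mathbf{h})]$ is small under the proportional rule. Since $p=1/(1+z)\ge 1/2$, the facts $\Delta_{e,1}f(\mathbf{h})+\Delta_{e,2}f(\mathbf{h})\ge 0$ and $\Delta_{e,2}f(\mathbf{h})\le za$ bound it by $(2p-1)za=\tfrac{z(1-z)}{1+z}\,a$. Adding your deletion bound $a$, the step loss is at most $\tfrac{1+2z-z^2}{1+z}\,a$. Moreover $\tfrac{1+2z-z^2}{1+z^2}\le\sqrt2$, with equality at $z=\sqrt2-1$.

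So the per-step constant is the \emph{maximum} over the two cases, not a sum. This is the paper's $\eta=\max\{C,A+Q\}$ with $C=1$, $A=\tfrac{\sqrt2-1}{2}$, $Q=\tfrac{\sqrt2+1}{2}$. Charging the deleted element against half of $a$ cannot work, because that deletion loss alone can equal $a\ge G_e$.

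With this case split, your one-element top-marginal greedy does give $\sqrt2$ for the matroid. The paper instead samples uniformly from a maximum-$G$ base of $\mathcal{M}/S$ and bounds the deletion by $Q\,G_{\pi(e)}$ via the max-weight-base property; that difference is not essential.

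\textbf{Gap 2: the knapsack part.} This part is not yet an argument. There is no exchange bijection to keep the hybrid feasible, and ``the density-greedy analysis telescopes in the same way'' skips what must replace it.

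When $a$ is accepted, the paper deletes \emph{fractional} residual mass of total cost at most $c_a$ from the remaining optimum, taking $a$'s own mass $\theta$ first. The step loss is then at most $(\theta C+(1-\theta)(A+Q))G_a\le\sqrt2\,G_a$. The needed domination $G_r/c_r\le G_a/c_a$ for the other deleted elements holds only until the first element of the residual optimum is rejected. One must therefore show that the accepted cost covers the residual by that time, and this covering is exactly what discarding a maximum-cost element $z$ is for.

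Separately, enumerating the top three OPT elements gives $(nk)^3$ branches and $O(n^5k^4)$ queries. That contradicts both the claimed $O(n^3k^2)$ and your own remark that guessing costs a factor $nk$. A single guess suffices: take $\mathbf{y}$ to be a maximum-value singleton contained in $\mathbf{o}$, so that $f_{\mathbf{y}\sqcup\mathbf{q}}(\mathbf{z})\le f(\mathbf{z})\le f(\mathbf{y})$. Then absorb this term using $(1-\rho)f(\mathbf{y})\ge\rho f(\mathbf{y})$, which holds since $\rho<1/2$.
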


\subsection{Technical overview}
A useful way to understand the persistent \(1/3\) barrier in constrained non-monotone \(k\)-submodular maximization is through the loss-to-gain accounting underlying existing analyses. These analyses typically maintain a \textit{hybrid solution} \(\mathbf{h}\) that contains part of an optimal solution \(\mathbf{o}\) and part of the algorithmic solution \(\mathbf{s}\), and then gradually transform the former into the latter. This hybrid viewpoint goes back to the double-greedy algorithm for unconstrained non-monotone submodular maximization~\cite{buchbinder2015tight}.

In monotone settings, the loss incurred by removing residual optimal elements can often be charged to the gain of the algorithmic step with coefficient one. In other words, the stepwise accounting has the form
$
\mathrm{Loss}\le \mathrm{Gain}.
$
Summing over all steps, or using a standard potential argument, gives
$
f(\mathbf{o})-f(\mathbf{s})
\le
f(\mathbf{s})-f(\mathbf{0})=f(\mathbf{s});
$
then the \(1/2\)-approximation follows. In the non-monotone case, however, adding an algorithmic element may itself decrease the value of the hybrid solution. In many existing hybrid analyses, this additional negative effect and the loss from deleting residual optimal elements are effectively charged separately, leading to a coarser accounting of the form
$
\mathrm{Loss}\le 2\,\mathrm{Gain},
$
and hence to the familiar \(1/(1+2)=1/3\) guarantee.

Our framework bypasses this bottleneck through two changes. On the algorithmic side, unlike previous randomized algorithms that choose a label from all \(k\) labels according to a distribution~\cite{ward2016maximizing,iwata2016improved}, we only use the two labels with the largest marginal gains. On the analytical side, we decompose the hybrid loss into three components: \textit{replacement loss}, \textit{pure-addition loss}, and \textit{residual-deletion loss}. Informally, these correspond respectively to changing the label of a residual optimal element, adding an algorithmic element not currently present in the hybrid solution, and deleting a residual optimal element to maintain feasibility. The analysis is then organized around proving the local inequality
$
\boldsymbol{\mathrm{Loss}\le \sqrt{2}\,\mathrm{Gain}},
$
and embedding it into the matroid exchange and knapsack density arguments.                                            

\vspace{1em}
\noindent \textbf{Local analysis.} Fix a partial solution \(\mathbf{s}\) and a candidate element \(e\notin \operatorname{supp}(\mathbf{s})\). Let \(\Delta_{e,i}f(\mathbf{s})\) denote the marginal gain of adding \(e\) to \(\mathbf{s}\) with label \(i\in[k]\). By the pairwise monotonicity property recalled in Section~\ref{sec2}, \(\Delta_{e,i}f(\mathbf{s})+\Delta_{e,j}f(\mathbf{s})\ge 0\) for all distinct \(i,j\in[k]\), and hence the loss associated with any residual label can be controlled by the two largest marginal gains. We therefore use only the two best labels; after relabeling, let them be \(1\) and \(2\).
Define \(y_i:=\max\{\Delta_{e,i}f(\mathbf{s}),0\}\) for \(i=1,2\). The top-2 rule chooses label \(1\) with probability \(p\) and label \(2\) with probability \(1-p\), where \(0\le p\le 1\). We assign to \(e\) the score \(G_e(\mathbf{s}):=p y_1+(1-p)y_2\). When this score equals the expected marginal gain of the algorithmic step, we call it the exact score.

Suppose that our algorithm chooses \(p\) so that the score of \(e\) is exact. If there exist nonnegative constants \(C,A,Q\) such that the replacement loss is bounded by \(C G_e(\mathbf{s})\), the pure-addition loss by \(A G_e(\mathbf{s})\), and the residual-deletion loss by \(Q G_e(\mathbf{s})\), then the exact-score top-2 randomized greedy framework yields, for both constraint families studied in this work, a stepwise accounting of the form
\(\operatorname{Loss}\le \eta\,\operatorname{Gain}\), where \(\eta:=\max\{C,A+Q\}\). Indeed, each step can be viewed as an interpolation between a replacement part, charged with coefficient \(C\), and an addition-plus-deletion part, charged with coefficient \(A+Q\). Taking the worst convex combination gives the maximum of these two coefficients.  A standard potential argument then gives a \(1/(1+\eta)\)-approximation.

It remains to choose \(p\) so as to minimize this coefficient $\eta$. We choose the proportional top-2 rule: if \(y_1=0\), set \(p=1\); otherwise set \(p=y_1/(y_1+y_2)\). Equivalently, the rule takes the two labels with the largest marginal gains, truncates non-positive gains to zero, and chooses among the remaining labels with probability proportional to the truncated gains. Under this proportional top-2 rule, the score of $e$ is exact, and one can take \(C=1\), \(A=(\sqrt{2}-1)/2\), and \(Q=(\sqrt{2}+1)/2\). Hence \(\eta=\max\{C,A+Q\}=\sqrt{2}\), giving the approximation ratio \(1/(1+\sqrt{2})=\sqrt{2}-1\). We also prove that, under the above form of scalar score, no top-2 rule, exact-score or not, can achieve \(\eta<\sqrt{2}\).

\vspace{1em}

The preceding discussion is purely local and does not use any particular feasibility constraint. We next show how the proportional top-2 rule leads to algorithms under matroid and knapsack constraints, and how the above local \(\sqrt{2}\)-loss certificate is embedded into the corresponding global charging arguments.

\vspace{1em}

\noindent \textbf{Matroid constraint.}
Let \(\mathcal{M}=(V,\mathcal{P})\) be a matroid. We directly embed the proportional top-2 rule into the randomized greedy framework. At the beginning of iteration \(t\), let \(S_{t-1}\) be the support of the partial solution \(\mathbf{s}^{t-1}\). For each unselected element \(e\), the algorithm computes its \(k\) label marginals, and then uses the proportional top-2 rule to obtain an exact score \(G_e(\mathbf{s}^{t-1})\) together with the corresponding label distribution. The algorithm then does not simply choose a single element of maximum score; instead, it selects a maximum-score base \(M_t\) in the contracted matroid \(\mathcal{M}/S_{t-1}\). Finally, it chooses an element \(e_t\) uniformly at random from \(M_t\), and assigns it a random label according to its top-2 distribution. Repeating this process until the matroid rank is reached yields a feasible solution.

The analysis combines matroid base exchange with the local \(C,A,Q\) bounds developed above for the proportional top-2 rule. We may assume that the support of an optimum is a base. During the proof, we maintain a residual optimal base \(O_{t-1}\) such that \(S_{t-1}\cup O_{t-1}\) remains a base of $\mathcal{M}$, together with a hybrid solution \(\mathbf{h}^{t-1}\). Since both \(M_t\) and \(O_{t-1}\) are bases of the contracted matroid \(\mathcal{M}/S_{t-1}\), base exchange pairs each \(e\in M_t\) with a residual element \(\pi(e)\in O_{t-1}\), so that adding \(e\) and deleting \(\pi(e)\) preserves feasibility.

The gain comparison comes from the maximum-score property of \(M_t\). The exact-score property makes the expected algorithmic gain equal to the average score of \(M_t\). The hybrid loss consists of a replacement loss when \(e\in O_{t-1}\), and otherwise a pure-addition loss together with the residual-deletion loss of \(\pi(e)\). After averaging over the uniformly sampled \(e\in M_t\), the maximum-score property of \(M_t\) ensures that the average of \(G_{\pi(e)}(\mathbf{s})\) is no larger than the average of \(G_e(\mathbf{s})\). Then the proportional top-2 rule gives
\(
\text{expected hybrid loss}
\le
\sqrt{2}\cdot \text{expected algorithmic gain}.
\)

With the potential
\(
\Phi_t
=
(1-\rho)f(\mathbf{s}^t)
+
\rho f(\mathbf{h}^t),
\text{ where }
\rho=\frac{1}{1+\sqrt{2}}=\sqrt{2}-1,
\)
the gain-loss comparison implies that \(\Phi_t\) is nondecreasing in expectation. Initially the hybrid solution is the optimum, while at the end the residual base is empty and the hybrid solution coincides with the algorithmic solution. Comparing the initial and final potentials gives the \((\sqrt{2}-1)\)-approximation.

\vspace{1em}

\noindent \textbf{Knapsack constraints.} Unlike matroid constraints, a knapsack constraint does not provide a one-to-one exchange between the algorithmic solution and the optimal solution: the two solutions may have different cardinalities and different total costs. Thus the standard exchange-based analysis does not apply directly. Wang~\cite{wang2025approximation} recently handled this difficulty through a one-guess density greedy framework and a continuous transformation, defined via the multilinear extension, from the optimal solution to the greedy solution. The loss rate along this transformation is then compared with the greedy gain rate, giving a \(1/3\)-approximation.
We keep the same one-guess greedy outer framework, but replace marginal density by score density. For each element \(e\), the proportional top-2 rule compresses its current \(k\) label marginals into an exact score \(G_e(\mathbf{s})\). The algorithm greedily selects an element according to \(G_e(\mathbf{s})/c_e\), where \(c_e\) is the cost of \(e\), and then assigns it a label using the corresponding proportional top-2 distribution.

Our analysis uses a cost-based residual transformation rather than a one-to-one exchange or Wang's continuous transformation path. Instead of pairing one accepted greedy element with one optimal element, when the algorithm accepts an element \(e\), the analysis removes residual mass of total cost at most \(c_e\) from the remaining optimal solution. This residual mass is allowed to be fractional, but it is introduced only for the proof.
Let \(\mathbf{y}\) be a feasible singleton solution contained in the optimum that maximizes \(f(\mathbf{y})\). We analyze the enumeration
branch that guesses this singleton. Let \(\mathbf{r}\) denote the part of the optimum that remains after deleting \(\mathbf{y}\) and an element \(z\) of maximum cost among the remaining elements. We maintain a hybrid solution consisting of the current greedy solution and the undeleted residual mass of \(\mathbf{r}\). The deletion prioritizes the residual mass of \(e\) when \(e\in\operatorname{supp}(\mathbf{r})\). Before the first element of \(\operatorname{supp}(\mathbf{r})\) is rejected by the knapsack, every element with positive residual mass is still a candidate. Hence the density-maximality of \(e\) allows the loss from deleting residual mass of total cost at most \(c_e\) to be charged to \(G_e(\mathbf{s})\).

The core of the one-step analysis is as follows. Since \(G_e(\mathbf{s})\) is an exact score, the expected gain from accepting \(e\) is exactly \(G_e(\mathbf{s})\). On the other hand, the proportional top-2 rule controls the three types of loss incurred by this step on the hybrid solution by at most \(\sqrt{2}G_e(\mathbf{s})\) in total. Therefore, defining the potential
\[
\Phi
=
(1-\rho)\cdot \text{algorithmic value}
+
\rho\cdot \text{hybrid value},
\qquad
\rho=\frac{1}{1+\sqrt{2}}=\sqrt{2}-1,
\]
we obtain that, after every accepted step, the expected value of this potential does not decrease.

Finally, deleting the maximum cost element \(z\) ensures that the elements accepted during the greedy process have enough total cost to cover \(\mathbf{r}\). After \(z\) is removed, once an element in \(\operatorname{supp}(\mathbf{r})\) is rejected for the first time due to insufficient budget, the total cost of the previously accepted elements has already covered the total cost of \(\mathbf{r}\). If no such rejection occurs, then \(\mathbf{r}\) itself has already been fully processed and accepted. Hence the residual-deletion process can delete all of \(\mathbf{r}\), and the potential comparison implies that the greedy part of this branch obtains at least a \(\rho\)-fraction of the incremental value contributed by \(\mathbf{r}\) on top of \(\mathbf{y}\).
It remains only to account for the element \(z\) that was removed in advance. Its marginal gain is at most its value as a singleton solution, and this singleton value is no larger than that of \(\mathbf{y}\). Thus, \(\mathbf{r}\) is paid for by the greedy branch, while \(z\) is paid for by the guessed singleton \(\mathbf{y}\), yielding a \((\sqrt{2}-1)\)-approximation.

\subsection{Other related works and organization}
\noindent \textbf{Non-monotone submodular maximization.}
The case \(k=1\) has a rich line of research with numerous results. The unconstrained problem admits the tight \(1/2\)-approximation via double greedy~\cite{buchbinder2015tight}. Under standard constraints such as cardinality, matroid, and a constant number of knapsack constraints, the best known general randomized guarantee is \(0.401\)~\cite{buchbinder2024constrained}, while recent deterministic guarantees are \(0.385-\varepsilon\) for matroids and \(1/e-\varepsilon\) for knapsacks~\cite{chen2026deterministic}. Known oracle lower bounds rule out ratios above \(0.478\) for matroid independence constraints and above \(0.491\) for cardinality constraints~\cite{oveisgharan2011submodular}.

\noindent \textbf{Other constraints for \(k\)-submodular maximization.}
There is also a line of work on \(k\)-submodular maximization under more general constraint families, such as multiple matroids, matroid-knapsack intersections, and \(p\)-systems~\cite{yu2023maximizing,zhou2025improved,zhang2026maximizing,ye2026ksubmodular}. The known guarantees are typically constraint-dependent; for example, for the intersection of a \(p\)-system and \(d\) knapsack constraints, ratios of the form \((1-\varepsilon)/(p+\alpha+2d)\) and later \((1-\varepsilon)/(p+\alpha+\frac{7}{4}d)\) are known, where \(\alpha=2\) for monotone functions and \(\alpha=3\) for non-monotone functions~\cite{zhang2026maximizing}. 

\noindent \textbf{Online and streaming \(k\)-submodular maximization.}
Online and streaming variants of \(k\)-submodular maximization have also
been studied. Soma~\cite{soma2019noregret} gave no-regret algorithms for
online \(k\)-submodular maximization. Ene and Nguyen~\cite{ene2022streaming}
developed streaming algorithms for monotone \(k\)-submodular maximization
under cardinality constraints. More recently, Spaeh, Ene, and
Nguyen~\cite{spaeh2025online} studied single-pass streaming and online
algorithms for  \(k\)-submodular maximization under cardinality
and knapsack constraints.

\vspace{1em}

\noindent \textbf{Organization.}
This work is organized as follows. Section~\ref{sec2} introduces the notation, basic properties of
\(k\)-submodular functions, the exact scores and hybrid loss certificates used
throughout the paper. Section~\ref{sec3} develops the local analysis of top-2 rules and
proves that the proportional top-2 rule attains the \(\sqrt2\)-loss certificate.
Section~\ref{sec4} applies this certificate to a randomized greedy algorithm under a
single matroid constraint. Section~\ref{sec5} gives the one-guess score-density greedy
algorithm and the residual-deletion analysis for a single knapsack constraint.
Section~\ref{sec6} concludes with several open directions.

\section{Preliminaries}\label{sec2}
Throughout this paper, we assume that there exists a value oracle \(\mathcal{O}_f\) that returns \(f(\mathbf{x})\) for any query \(\mathbf{x}\in(k+1)^V\). For convenience, we identify \(\mathbf{x}\) with a vector \(\mathbf{x}\in\{0,1,\ldots,k\}^n\), where \(\mathbf{x}_t=j\in[k]\) if \(e_t\in X_j\), and \(\mathbf{x}_t=0\) if \(e_t\notin\operatorname{supp}(\mathbf{x})\). The two notations are used interchangeably when no confusion arises. For \(e\in\operatorname{supp}(\mathbf{x})\), let \(\mathbf{x}_{-e}\) denote deleting \(e\) from \(\operatorname{supp}(\mathbf{x})\), i.e., setting \(\mathbf{x}_e=0\).

We next recall two characteristic properties of \(k\)-submodular functions that will be used throughout the analysis: orthant submodularity and pairwise monotonicity. For \(\mathbf{x}=(X_1,\ldots,X_k)\in(k+1)^V\), \(e\notin\operatorname{supp}(\mathbf{x})\), and \(i\in[k]\), define the marginal gain
\[
\Delta_{e,i}f(\mathbf{x})
:=
f(X_1,\ldots,X_i\cup\{e\},\ldots,X_k)
-
f(X_1,\ldots,X_i,\ldots,X_k).
\]
Under the vector representation, this can be written equivalently as
$
\Delta_{e,i}f(\mathbf{x})
=
f(\mathbf{x}+i\mathbf{1}_e)-f(\mathbf{x}),
$
where \(\mathbf{1}_e\in\{0,1\}^V\) denotes the unit vector with value \(1\) at coordinate \(e\) and \(0\) elsewhere.

The function \(f\) is \textit{orthant submodular} if
\[
\Delta_{e,i}f(\mathbf{x})\ge \Delta_{e,i}f(\mathbf{y})
\quad
\text{for all } \mathbf{x}\preceq\mathbf{y},\ e\notin\operatorname{supp}(\mathbf{y}),\ \text{and } i\in[k],
\]
and it is \textit{pairwise monotone} if
\[
\Delta_{e,i}f(\mathbf{x})+\Delta_{e,j}f(\mathbf{x})\ge 0
\quad
\text{for all } \mathbf{x}\in(k+1)^V,\ e\notin\operatorname{supp}(\mathbf{x}),\ \text{and distinct } i,j\in[k].
\]

Ward and \v{Z}ivn\'y~\cite{ward2016maximizing} showed that these two properties indeed characterize \(k\)-submodular functions.

\begin{theorem}[Ward and \v{Z}ivn\'y~\cite{ward2016maximizing}]
	Let \(k\ge 2\). A function \(f:(k+1)^V\to \mathbb{R}\) is \(k\)-submodular if and only if it is orthant submodular and pairwise monotone.
\end{theorem}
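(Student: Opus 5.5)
The plan is to prove the two directions separately, using only Definition~\ref{D1.1} and the definitions of orthant submodularity and pairwise monotonicity.

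\emph{Necessity} is the easy direction. Fix \(\mathbf{x}\), an element \(e\notin\operatorname{supp}(\mathbf{x})\), and distinct labels \(i\neq j\). Apply \(k\)-submodularity to \(\mathbf{x}+i\mathbf{1}_e\) and \(\mathbf{x}+j\mathbf{1}_e\). Their meet is \(\mathbf{x}\). Their join is also \(\mathbf{x}\), because \(e\) receives conflicting labels and is therefore dropped. Hence
\[
f(\mathbf{x}+i\mathbf{1}_e)+f(\mathbf{x}+j\mathbf{1}_e)\ge 2f(\mathbf{x}),
\]
which is exactly pairwise monotonicity.

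For orthant submodularity, take \(\mathbf{x}\preceq\mathbf{y}\) and \(e\notin\operatorname{supp}(\mathbf{y})\). Apply the inequality to \(\mathbf{x}+i\mathbf{1}_e\) and \(\mathbf{y}\). Since there are no label conflicts, the meet is \(\mathbf{x}\) and the join is \(\mathbf{y}+i\mathbf{1}_e\). This gives
\[
f(\mathbf{x}+i\mathbf{1}_e)+f(\mathbf{y})\ge f(\mathbf{x})+f(\mathbf{y}+i\mathbf{1}_e),
\]
which rearranges to the claimed inequality on marginal gains.

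\emph{Sufficiency} is the main work. Given arbitrary \(\mathbf{x},\mathbf{y}\), I will build a chain from \(\mathbf{x}\sqcap\mathbf{y}\) to \(\mathbf{x}\sqcup\mathbf{y}\) and telescope marginal gains. Partition the elements into four groups:
\begin{itemize}
\item \(A\): elements labelled in \(\mathbf{x}\) only, or with the same label in both;
\item \(B\): elements labelled in \(\mathbf{y}\) only;
\item \(D\): elements labelled in both with different labels;
\item the remaining elements, which lie in neither support.
\end{itemize}
Then \(\mathbf{x}\sqcup\mathbf{y}\) agrees with \(\mathbf{x}\) on \(A\) and with \(\mathbf{y}\) on \(B\), and is \(0\) on \(D\). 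The meet \(\mathbf{m}:=\mathbf{x}\sqcap\mathbf{y}\) consists of the common-label elements.

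Write
\[
f(\mathbf{x})-f(\mathbf{m})=\sum_t \Delta_t
\]
by adding the elements of \(\operatorname{supp}(\mathbf{x})\setminus\operatorname{supp}(\mathbf{m})\) one at a time, first those in \(A\) and then those in \(D\). Similarly, write \(f(\mathbf{y})-f(\mathbf{m})\) by adding \(B\) and then \(D\). Also write \(f(\mathbf{x}\sqcup\mathbf{y})-f(\mathbf{m})\) by adding \(A\) and then \(B\).

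Orthant submodularity compares the \(B\)-gains in the \(\mathbf{y}\)-chain with the \(B\)-gains taken on top of \(\mathbf{m}\) plus \(A\): the latter base is larger, so the latter gains are smaller. The remaining terms are the \(D\)-gains, and this is the main obstacle. Each element of \(D\) is added with label \(x_e\) on the \(\mathbf{x}\) side and with a different label \(y_e\) on the \(\mathbf{y}\) side, but from different bases. To pair them, I would first transport both gains, via orthant submodularity, to a common base: \(\mathbf{x}\sqcup\mathbf{y}\) together with the previously added \(D\)-elements restricted to one side. For this I must order the chains so that the bases are comparable, and I may need an induction on \(|D|\) rather than a single chain.

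With the gains on a common base, pairwise monotonicity gives
\[
\Delta_{e,x_e}+\Delta_{e,y_e}\ge 0.
\]
The delicate point is that the \(\mathbf{x}\)-side and \(\mathbf{y}\)-side partial solutions differ on earlier \(D\)-elements. I would handle this by inducting on \(|D|\): first prove the case \(|D|=1\) directly, then reduce larger \(D\) by applying the inequality to intermediate pairs and summing.
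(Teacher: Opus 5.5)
The paper gives no proof of this theorem (it is quoted from Ward and \v{Z}ivn\'y), so your proposal has to stand on its own. Your necessity direction is correct, and so is your reduction of sufficiency to showing that the $D$-gains of the $\mathbf{x}$-chain and of the $\mathbf{y}$-chain have nonnegative sum: the $A$-gains cancel, and orthant submodularity handles the $B$-gains.

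The gap is that the argument stops exactly at the step that needs an idea, and the common base you propose does not work. Orthant submodularity only lets you lower-bound a gain by the gain at a \emph{larger} base. So for each $d_t\in D$ you need a single $\mathbf{c}$ that dominates both the $\mathbf{x}$-side base and the $\mathbf{y}$-side base at which $d_t$ is added. If both chains add $D=\{d_1,\ldots,d_m\}$ in the same order, then for $t\ge 2$ these two bases already give $d_1$ different labels, so no such $\mathbf{c}$ exists. In particular, $\mathbf{x}\sqcup\mathbf{y}$ plus the $\mathbf{x}$-labelled $d_1,\ldots,d_{t-1}$ does not dominate the $\mathbf{y}$-side base. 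Your fallback, an induction on $|D|$ ``applying the inequality to intermediate pairs and summing,'' does not say which pairs, so as written it does not remove this conflict.

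The missing ingredient is to make the $D$-elements already added on the two sides disjoint. Let $\mathbf{x}_0,\mathbf{y}_0$ denote $\mathbf{x},\mathbf{y}$ with their coordinates in $D$ set to $0$. Let $\mathbf{x}|_S$ denote $\mathbf{x}$ with all coordinates outside $S$ set to $0$. Now let the $\mathbf{y}$-chain add $D$ in reverse order $d_m,\ldots,d_1$. Then $d_t$ is added at $\mathbf{x}_0+\mathbf{x}|_{\{d_1,\ldots,d_{t-1}\}}$ on one side and at $\mathbf{y}_0+\mathbf{y}|_{\{d_{t+1},\ldots,d_m\}}$ on the other. Both bases are $\preceq \mathbf{c}_t:=(\mathbf{x}\sqcup\mathbf{y})+\mathbf{x}|_{\{d_1,\ldots,d_{t-1}\}}+\mathbf{y}|_{\{d_{t+1},\ldots,d_m\}}$. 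This is a valid element of $(k+1)^V$ with $(\mathbf{c}_t)_{d_t}=0$. By orthant submodularity and then pairwise monotonicity, each pair of $d_t$-gains is at least $\Delta_{d_t,\mathbf{x}_{d_t}}f(\mathbf{c}_t)+\Delta_{d_t,\mathbf{y}_{d_t}}f(\mathbf{c}_t)\ge 0$.

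If you prefer your induction, delete a single $d\in D$ from $\mathbf{x}$ only. The pair $(\mathbf{x}_{-d},\mathbf{y})$ has the same meet, join $(\mathbf{x}\sqcup\mathbf{y})+\mathbf{y}_d\mathbf{1}_d$, and one fewer disagreement. The leftover term $\Delta_{d,\mathbf{x}_d}f(\mathbf{x}_{-d})+\Delta_{d,\mathbf{y}_d}f(\mathbf{x}\sqcup\mathbf{y})$ is nonnegative by passing both gains to the common base $(\mathbf{x}\sqcup\mathbf{y})+\mathbf{x}|_{D\setminus\{d\}}$. The base case $D=\emptyset$ is exactly your $B$-chain argument.
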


To make the two constraint families considered in this paper explicit, we introduce the following notation and terminology.
\vspace{1em}

\noindent \textbf{Matroids and contractions.}
A matroid \(\mathcal{M}=(V,\mathcal{P})\) is a set system, where
\(\mathcal{P}\subseteq 2^V\) is a family of independent sets satisfying:
(i) \(\emptyset\in\mathcal{P}\);
(ii) if \(A\subseteq B\) and \(B\in\mathcal{P}\), then \(A\in\mathcal{P}\);
(iii) if \(A,B\in\mathcal{P}\) with \(|A|<|B|\), then there exists
\(e\in B\setminus A\) such that \(A\cup\{e\}\in\mathcal{P}\).
A maximal independent set is called a \textit{base}, and all bases have the same cardinality,
called the rank of \(\mathcal{M}\). A solution \(\mathbf{x}\in(k+1)^V\) is feasible
under \(\mathcal{M}\) if \(\operatorname{supp}(\mathbf{x})\in\mathcal{P}\).

For an independent set \(S\in\mathcal{P}\), the contraction \(\mathcal{M}/S\) is the
matroid on \(V\setminus S\) whose independent sets are
$
\{T\subseteq V\setminus S: S\cup T\in\mathcal{P}\}.
$
Equivalently, the bases of \(\mathcal{M}/S\) are exactly the sets
\(M\subseteq V\setminus S\) such that \(S\cup M\) is a base of \(\mathcal{M}\).

\vspace{1em}

\noindent \textbf{Knapsack constraints.}
Each element \(e\in V\) has a nonnegative cost \(c_e\). The feasible support family under a single knapsack constraint is
$
\mathcal{P}
:=
\left\{
S\subseteq V:
\sum_{e\in S}c_e\le 1
\right\},
$
where the knapsack budget is normalized to one. Since elements with \(c_e>1\) are infeasible, we may assume w.l.o.g. that
\(c_e\in[0,1]\) for all \(e\in V\). For a solution
\(\mathbf{x}\in(k+1)^V\), we write
$
c(\mathbf{x})
:=
\sum_{e\in\operatorname{supp}(\mathbf{x})} c_e 
$
for simplicity.

\subsection{Exact scores and hybrid loss certificates}

We first formalize the local randomized rules used by our algorithms.

\begin{definition}[Exact-score local rule]\label{def:exact-score-rule}
	Given a current algorithmic solution \(\mathbf{s}\in(k+1)^V\) and an unassigned element \(e\notin\operatorname{supp}(\mathbf{s})\), a local randomized rule specifies a distribution \(P_e(\mathbf{s})\) over labels in \([k]\). Its conditional expected marginal gain is
	$
	g_e(\mathbf{s})
	:=
	\mathbb{E}_{i\sim P_e(\mathbf{s})}
	\bigl[\Delta_{e,i}f(\mathbf{s})\bigr].
	$
	The rule is called an \textit{exact-score rule} if the algorithm assigns to \(e\) a nonnegative score \(G_e(\mathbf{s})\) satisfying
	$
	G_e(\mathbf{s})=g_e(\mathbf{s})
	$
	for every state \(\mathbf{s}\) and every \(e\notin\operatorname{supp}(\mathbf{s})\).
\end{definition}

We next define the hybrid solutions used in the analysis.

\begin{definition}[Hybrid solution]\label{def:hybrid-solution}
	Let \(\mathbf{s}=(S_1,\ldots,S_k)\in(k+1)^V\) be the current algorithmic solution and let \(\mathbf{o}=(O_1,\ldots,O_k)\in(k+1)^V\) be the current residual comparison solution maintained by the analysis, often chosen as a part of an optimal solution. The hybrid solution \(\mathbf{h}=\mathbf{h}(\mathbf{s},\mathbf{o})\) is defined by
	\[
	\mathbf{h}(\mathbf{s},\mathbf{o})
	:=
	(H_1,\ldots,H_k),
	\qquad
	H_i:=S_i\cup\bigl(O_i\setminus\operatorname{supp}(\mathbf{s})\bigr)
	\quad\text{for }i\in[k].
	\]
	Equivalently, in vector notation, \(\mathbf{h}_e=\mathbf{s}_e\) if \(\mathbf{s}_e\neq0\), and \(\mathbf{h}_e=\mathbf{o}_e\) otherwise.
\end{definition}

Thus the hybrid solution keeps all current algorithmic assignments and the residual comparison  assignments that have not yet been overwritten.

We now isolate the three local losses appearing in the hybrid analysis. 

\begin{definition}[Local hybrid losses]\label{def:local-losses}
	Fix one step of the algorithm. Let \(\mathbf{s}\) be the current algorithmic solution, \(\mathbf{o}\) the current residual comparison solution, and \(\mathbf{h}=\mathbf{h}(\mathbf{s},\mathbf{o})\) the corresponding hybrid solution. Suppose that the algorithm selects an element \(e\notin\operatorname{supp}(\mathbf{s})\) and assigns it a random label \(i\sim P_e(\mathbf{s})\).
	
	If \(e\in\operatorname{supp}(\mathbf{o})\), let \(\ell:=\mathbf{o}_e\). Since \(\mathbf{h}=\mathbf{h}_{-e}+\ell\mathbf{1}_e\), replacing the residual label \(\ell\) by the randomized algorithmic label incurs the replacement loss
	\[
	L_{\mathrm{rep}}
	:=
	\Delta_{e,\ell}f(\mathbf{h}_{-e})
	-
	\mathbb{E}_{i\sim P_e(\mathbf{s})}
	\bigl[\Delta_{e,i}f(\mathbf{h}_{-e})\bigr].
	\]
	If \(e\notin\operatorname{supp}(\mathbf{o})\), adding \(e\) to the hybrid solution incurs the pure-addition loss
	\[
	L_{\mathrm{add}}
	:=
	f(\mathbf{h})
	-
	\mathbb{E}_{i\sim P_e(\mathbf{s})}
	\bigl[f(\mathbf{h}+i\mathbf{1}_e)\bigr]
	=
	-
	\mathbb{E}_{i\sim P_e(\mathbf{s})}
	\bigl[\Delta_{e,i}f(\mathbf{h})\bigr].
	\]
	Still in the case \(e\notin\operatorname{supp}(\mathbf{o})\), maintaining feasibility may require deleting a residual element \(r\in\operatorname{supp}(\mathbf{o})\setminus\operatorname{supp}(\mathbf{s})\). Let \(\ell:=\mathbf{o}_r\). Conditional on assigning \(e\) label \(i\), the residual-deletion loss is
	\[
	L_{\mathrm{del}}
	:=
	f(\mathbf{h}+i\mathbf{1}_e)
	-
	f(\mathbf{h}_{-r}+i\mathbf{1}_e)
	=
	\Delta_{r,\ell}f(\mathbf{h}_{-r}+i\mathbf{1}_e).
	\]
\end{definition}

Since \(\mathbf{s}\preceq \mathbf{h}_{-r}+i\mathbf{1}_e\), orthant submodularity gives
\[
L_{\mathrm{del}}
\le
\Delta_{r,\ell}f(\mathbf{s}).
\]
Thus the residual-deletion loss is controlled by the current marginal gain of the deleted element.

The following definition abstracts the above estimates into pointwise local bounds. The elements appearing in the replacement and pure-addition bounds are possible algorithmic choices, whereas the element in the deletion bound is a possible residual element deleted by an exchange argument; these elements need not be the same.

\begin{definition}[\((C,A,Q)\)-certificate]\label{d2.5}
	Let \(P\) be a local randomized rule, and let \(G\) be its associated nonnegative score. We say that \(P\) admits a \((C,A,Q)\)-certificate if the following three inequalities hold for every \(k\)-submodular function \(f\).
	\begin{enumerate}
		\item \textit{\(C\)-replacement bound:} for all \(\mathbf{s}\preceq\mathbf{h}\), all \(e\notin\operatorname{supp}(\mathbf{h})\), and all \(\ell\in[k]\),
		\[
		\Delta_{e,\ell}f(\mathbf{h})
		-
		\mathbb{E}_{i\sim P_e(\mathbf{s})}
		\bigl[\Delta_{e,i}f(\mathbf{h})\bigr]
		\le
		C\,G_e(\mathbf{s}).
		\]
		
		\item \textit{\(A\)-addition bound:} for all \(\mathbf{s}\preceq\mathbf{h}\) and all \(e\notin\operatorname{supp}(\mathbf{h})\),
		\[
		-
		\mathbb{E}_{i\sim P_e(\mathbf{s})}
		\bigl[\Delta_{e,i}f(\mathbf{h})\bigr]
		\le
		A\,G_e(\mathbf{s}).
		\]
		
		\item \textit{\(Q\)-deletion domination:} for all \(\mathbf{s}\in(k+1)^V\), all \(r\notin\operatorname{supp}(\mathbf{s})\), and all \(\ell\in[k]\),
		\[
		\Delta_{r,\ell}f(\mathbf{s})
		\le
		Q\,G_r(\mathbf{s}).
		\]
	\end{enumerate}
\end{definition}

\section{Local Analysis of Top-2 Rules}\label{sec3}
This section analyzes the local structure associated with a single element. For a general top-2 rule that only uses the two labels with the largest marginal gains, we derive upper bounds for the replacement and pure-addition losses and show that these bounds are tight on suitable \(k\)-submodular instances. We then prove that, under the scalar score used in this section, any constants \(C,A,Q\) satisfying the certificate in Definition~\ref{d2.5} must have \(A+Q\ge \sqrt{2}\). Finally, we show that the simple proportional top-2 rule is an exact-score rule and achieves
$
C=1,
A=\frac{\sqrt{2}-1}{2},
Q=\frac{\sqrt{2}+1}{2},
$
so that \(A+Q=\sqrt{2}\) is attainable.

Fix a current state \(\mathbf{s}\in(k+1)^V\) and an element \(e\) with \(\mathbf{s}_e=0\). Let
$
w_i=\Delta_{e,i}f(\mathbf{s})
\text{ for all } i\in[k].
$
Choose the two labels with the largest marginal gains and relabel them as \(1\) and \(2\), so that \(w_1\ge w_2\ge w_j\) for all \(j\notin\{1,2\}\). Define
$
y_1=\max\{w_1,0\}
\text{ and }
y_2=\max\{w_2,0\}.
$

\begin{lemma}[Zero-score case]\label{lem:zero-score}
	If \(y_1=0\), then \(w_i=0\) for all \(i\in[k]\). Moreover, if \(\mathbf{s}\preceq\mathbf{h}\) and \(\mathbf{h}_e=0\), then \(\Delta_{e,i}f(\mathbf{h})=0\) for all \(i\in[k]\).
\end{lemma}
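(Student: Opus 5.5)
The argument uses only pairwise monotonicity and orthant submodularity from Section~\ref{sec2}; we handle the two claims in turn.

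\emph{First claim.} Assume \(y_1=0\). Then \(w_1\le 0\), and by the ordering \(w_i\le w_1\le 0\) for every \(i\in[k]\). Since \(k\ge 2\), each label \(i\) has some distinct partner \(j\). Pairwise monotonicity gives \(w_i+w_j\ge 0\). Both terms are nonpositive, so this forces \(w_i=w_j=0\). Hence \(w_i=0\) for all \(i\). Note that this is the place where \(k\ge2\) is essential.

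\emph{Second claim.} Fix \(\mathbf{h}\) with \(\mathbf{s}\preceq\mathbf{h}\) and \(\mathbf{h}_e=0\); then \(e\notin\operatorname{supp}(\mathbf{h})\).
\begin{itemize}
\item \emph{Upper bound.} Orthant submodularity yields \(\Delta_{e,i}f(\mathbf{h})\le\Delta_{e,i}f(\mathbf{s})=w_i=0\) for every \(i\).
\item \emph{Lower bound.} Pairwise monotonicity at \(\mathbf{h}\) gives \(\Delta_{e,i}f(\mathbf{h})+\Delta_{e,j}f(\mathbf{h})\ge 0\) for any distinct \(j\).
\end{itemize}
Both summands are nonpositive, so both vanish, and thus \(\Delta_{e,i}f(\mathbf{h})=0\) for all \(i\).

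There is no real obstacle here. The one point to check is that the orthant submodularity hypothesis \(e\notin\operatorname{supp}(\mathbf{h})\) holds, and it does because \(\mathbf{h}_e=0\).
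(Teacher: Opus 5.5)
Your proof is correct and follows essentially the same route as the paper: all marginals are nonpositive because \(w_1\) is the largest, pairwise monotonicity then forces them to vanish, and the second claim follows from orthant submodularity combined with pairwise monotonicity at \(\mathbf{h}\). The only difference is cosmetic: you argue directly that two nonpositive terms with nonnegative sum must both be zero, whereas the paper assumes some \(w_j<0\) and derives a contradiction.
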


\begin{proof}
	Since \(y_1=0\) and \(w_1\) is the largest marginal gain, we have \(w_i\le 0\) for all \(i\in[k]\). If \(w_j<0\) for some \(j\), then for any \(i\neq j\), we also have \(w_i\le 0\), and hence \(w_i+w_j<0\), contradicting pairwise monotonicity. Thus \(w_i=0\) for all \(i\in[k]\).
	
	Now suppose that \(\mathbf{s}\preceq\mathbf{h}\) and \(h_e=0\). By orthant submodularity, it follows that
	$
	\Delta_{e,j}f(\mathbf{h})\le \Delta_{e,j}f(\mathbf{s})=0.
	$
	If \(\Delta_{e,j}f(\mathbf{h})<0\) for some \(j\), then for any \(i\neq j\), we have \(\Delta_{e,i}f(\mathbf{h})\le 0\), so
	$
	\Delta_{e,i}f(\mathbf{h})+\Delta_{e,j}f(\mathbf{h})<0,
	$
	again contradicting pairwise monotonicity.
\end{proof}

In the remainder of this section, assume that \(y_1>0\). A top-2 rule, not necessarily exact-score, randomizes only over labels \(1\) and \(2\). Let the probability distribution be
\[
\Pr[\mathbf{s}_e=1]=p,\qquad
\Pr[\mathbf{s}_e=2]=1-p,\qquad
0\le p\le 1.
\]
We assign to element \(e\) the score
$
G_e(\mathbf{s})=p y_1+(1-p)y_2.
$
By homogeneity, normalize
\[
y_1=1,\qquad
y_2=z,\qquad
0\le z\le 1.
\]
Then the normalized score of \(e\) is \(G(z,p)=p+(1-p)z\). Clearly, \(G_e(\mathbf{s})=y_1G(z,p)\).

For any \(\mathbf{s}\preceq\mathbf{h}\) with \(h_e=0\), write
$
a_i=\Delta_{e,i}f(\mathbf{h}),
\text{ for all } i\in[k].
$
By orthant submodularity and the definition of the top-2 labels, we have
\begin{equation}
	a_1\le 1,
	\qquad
	a_2\le z,
	\qquad
	a_\ell\le z \quad (\ell\notin\{1,2\}),
	\notag
\end{equation}
and pairwise monotonicity gives
\begin{equation}
	a_i+a_j\ge 0,
	\qquad
	i,j\in[k],\ i\neq j.
	\tag{1}\label{pair_m}
\end{equation}
It is immediate that
$
\Delta_{e,\ell}f(\mathbf{h})
-
\mathbb{E}_{i\sim P_e(\mathbf{s})}
\bigl[\Delta_{e,i}f(\mathbf{h})\bigr]
=
a_\ell-\bigl(pa_1+(1-p)a_2\bigr)
$
and
$
-
\mathbb{E}_{i\sim P_e(\mathbf{s})}
\bigl[\Delta_{e,i}f(\mathbf{h})\bigr]
=
-\bigl(pa_1+(1-p)a_2\bigr).
$

\subsection{Replacement/addition bounds and \(Q\)-certificate}
Under the above normalized notation, the replacement and pure-addition losses admit the following local upper bounds. In addition, the \(Q\)-deletion domination certificate is characterized by the following necessary and sufficient condition.

\begin{lemma}\label{lem:local-upper-envelopes} 
	For any \(p\in[0,1]\), the following statements hold.
	\begin{enumerate}
		\item[(i)] For any residual label \(\ell\in[k]\), the replacement loss satisfies
		\begin{equation}
			a_\ell-\bigl(pa_1+(1-p)a_2\bigr)
			\le
			\max\{2(1-p),2pz\}.
			\tag{2}\label{rep}
		\end{equation}
		
		\item[(ii)] The pure-addition loss satisfies
		\begin{equation}
			-\bigl(pa_1+(1-p)a_2\bigr)
			\le
			U(z,p),
			\tag{3}\label{add}
		\end{equation}
		where
		\begin{equation}
			U(z,p)=
			\begin{cases}
				1-2p, & 0\le p\le \frac{1}{2},\\
				(2p-1)z, & \frac{1}{2}\le p\le 1.
			\end{cases}
			\tag{4}\label{U}
		\end{equation}
		
		\item[(iii)] For fixed \(z\) and \(p\), the score-domination condition for current marginal gains,
		\[
		\Delta_{e,\ell}f(\mathbf{s})\le QG_e(\mathbf{s}),
		\qquad  \ell\in[k],
		\]
		holds for all \(k\)-submodular local instances if and only if
		\begin{equation}
			Q\ge \frac{1}{G(z,p)}.
			\tag{5}\label{Q}
		\end{equation}
		For a local randomized rule \(p=p(z)\) depending on \(z\), a uniform \(Q\)-certificate holds if and only if
		\begin{equation}
			Q\ge \sup_{0\le z\le 1}\frac{1}{G(z,p(z))}.
			\tag{6}\label{Q1}
		\end{equation}
	\end{enumerate}
\end{lemma}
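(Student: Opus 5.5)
The three parts reduce to elementary linear optimization over the constraints $a_1\le 1$, $a_2\le z$, $a_\ell\le z$ for $\ell\notin\{1,2\}$, together with the pairwise inequalities \eqref{pair_m}. Parts (i) and (ii) are upper bounds, which I would derive directly from these linear constraints. Part (iii) is a domination statement, which I would get from the normalization $y_1=1$ plus an explicit instance for necessity.

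\textbf{Part (i).} Write the loss as $a_\ell-pa_1-(1-p)a_2$. If $\ell=1$, the loss is $(1-p)(a_1-a_2)$. Using $a_1\le1$ and $a_2\ge -a_1\ge -1$ gives $a_1-a_2\le 2$, so the loss is at most $2(1-p)$. If $\ell=2$, symmetrically the loss is $p(a_2-a_1)\le p(z+z)=2pz$, using $a_2\le z$ and $a_1\ge -a_2\ge -z$. If $\ell\notin\{1,2\}$, I would use $a_\ell\le z$ together with $a_1\ge -a_\ell$ and $a_2\ge -a_\ell$, which give $pa_1+(1-p)a_2\ge -a_\ell$. The loss is then at most $2a_\ell\le 2z$. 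To bring this under $\max\{2(1-p),2pz\}$, I would instead split the loss convexly as $p(a_\ell-a_1)+(1-p)(a_\ell-a_2)$. Bound $a_\ell-a_1\le 2a_\ell$ and $a_\ell-a_2\le 2a_\ell$? That only gives $2z$, which is too weak when $p$ is near $1/2$. So the sharper route is to use $a_\ell+a_2\ge0$ only partially: $a_\ell-a_1\le z-a_1$ and $a_\ell-a_2\le z-a_2$. Then I would optimize jointly over $(a_1,a_2)$ subject to $a_1+a_2\ge0$, $a_1\ge -z$, $a_2\ge -z$. This linear program has vertices $(a_1,a_2)=(-z,z)$ and $(z,-z)$, giving bounds $2pz$ and $2(1-p)z\le 2(1-p)$. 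I expect this third case to be the main obstacle, namely checking that the optimum of the LP over all vertices is dominated by the stated maximum.

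\textbf{Part (ii).} I would minimize $pa_1+(1-p)a_2$ subject to $a_1\le1$, $a_2\le z$, and $a_1+a_2\ge0$. The feasible region is bounded below only through the pairwise constraint, so the minimizing vertices are $(a_1,a_2)=(1,-1)$ and $(-z,z)$. These give values $2p-1$ and $(1-2p)z$, respectively. The pure-addition loss is therefore bounded by $\max\{1-2p,(2p-1)z\}$, which equals $U(z,p)$ casewise: for $p\le1/2$ the first term is nonnegative and dominates, and for $p\ge1/2$ the second does.

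\textbf{Part (iii).} For sufficiency, note that $\Delta_{e,\ell}f(\mathbf{s})=w_\ell\le w_1=y_1=1$ for every label $\ell$. Hence $Q\ge 1/G$ gives $w_\ell\le 1\le QG$, since $G_e(\mathbf{s})=y_1G(z,p)=G(z,p)$. For necessity, the instance with $w_1=1$ forces $1\le QG(z,p)$. Such an instance exists for every $z$: for example, a $k$-submodular function on the single element $e$ with marginals $(1,z,z,\dots)$, which satisfies pairwise monotonicity because all marginals are nonnegative. Orthant submodularity holds vacuously on a single element. The uniform statement \eqref{Q1} then follows by taking the supremum over $z$.
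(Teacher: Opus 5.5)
Your proposal is correct and follows essentially the paper's proof: the same case split on $\ell$ using pairwise monotonicity and the bounds $a_1\le 1$, $a_2,a_\ell\le z$ for (i), the same vertex/substitution argument for (ii), and the same normalization-plus-singleton-instance argument for (iii). The one small difference is in the case $\ell\notin\{1,2\}$, where you bound $a_\ell\le z$ in the objective and relax $a_1,a_2\ge -a_\ell$ to $a_1,a_2\ge -z$ before optimizing. This gives $\max\{2pz,2(1-p)z\}$ directly and avoids the paper's separate treatment of $a_\ell<0$.
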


\begin{proof}
	We first prove \emph{(i)}. If \(\ell=1\), then
	$
	a_1-\bigl(pa_1+(1-p)a_2\bigr)
	=
	(1-p)(a_1-a_2).
	$
	By~\eqref{pair_m}, we have \(a_2\ge -a_1\), and hence
	\(a_1-a_2\le 2a_1\le 2\), where the last inequality uses \(a_1\le 1\). Thus, in this case, the left-hand side of~\eqref{rep} is at most \(2(1-p)\).
	
	If \(\ell=2\), then
	$
	a_2-\bigl(pa_1+(1-p)a_2\bigr)
	=
	p(a_2-a_1).
	$
	By~\eqref{pair_m}, we have \(a_1\ge -a_2\), and hence
	\(a_2-a_1\le 2a_2\le 2z\). Thus, in this case, the left-hand side of~\eqref{rep} is at most \(2pz\).
	
	It remains to consider \(\ell\notin\{1,2\}\). Let \(a:=a_\ell\). If \(a<0\), then~\eqref{pair_m} gives \(a_1\ge -a\) and \(a_2\ge -a\), so
	$
	pa_1+(1-p)a_2\ge -a.
	$
	Therefore,
	$
	a-\bigl(pa_1+(1-p)a_2\bigr)\le 2a<0.
	$
	Since the right-hand side of~\eqref{rep} is nonnegative, the claim follows in this case.
	
	Now suppose that \(a\ge 0\). If \(p\ge \frac{1}{2}\), then, under the constraints
	\(a_1\ge -a\), \(a_2\ge -a\), and \(a_1+a_2\ge 0\), the linear function
	\(pa_1+(1-p)a_2\) is minimized at \((a_1,a_2)=(-a,a)\), with value
	\((1-2p)a\). Hence
	\[
	a-\bigl(pa_1+(1-p)a_2\bigr)
	\le
	a-(1-2p)a
	=
	2pa
	\le
	2pz,
	\]
	where we use \(a=a_\ell\le z\). If \(p\le \frac{1}{2}\), then the same linear function is minimized at \((a_1,a_2)=(a,-a)\), with value \((2p-1)a\). Hence
	\[
	a-\bigl(pa_1+(1-p)a_2\bigr)
	\le
	a-(2p-1)a
	=
	2(1-p)a
	\le
	2(1-p)z
	\le
	2(1-p).
	\]
	Combining the three cases proves \emph{(i)}.
	
	We next prove \emph{(ii)}. If \(p\ge \frac{1}{2}\), then \(a_1+a_2\ge 0\) implies \(a_1\ge -a_2\). Hence
	\[
	pa_1+(1-p)a_2
	\ge
	p(-a_2)+(1-p)a_2
	=
	(1-2p)a_2.
	\]
	Since \(1-2p\le 0\) and \(a_2\le z\), we have
	$
	(1-2p)a_2\ge (1-2p)z.
	$
	Therefore,
	$
	pa_1+(1-p)a_2\ge (1-2p)z,
	$
	which gives
	$
	-\bigl(pa_1+(1-p)a_2\bigr)\le (2p-1)z.
	$
	
	If \(p\le \frac{1}{2}\), then \(a_1+a_2\ge 0\) implies \(a_2\ge -a_1\). Thus
	\[
	pa_1+(1-p)a_2
	\ge
	pa_1+(1-p)(-a_1)
	=
	(2p-1)a_1.
	\]
	Since \(2p-1\le 0\) and \(a_1\le 1\), we have
	$
	(2p-1)a_1\ge 2p-1.
	$
	Therefore,
	$
	pa_1+(1-p)a_2\ge 2p-1,
	$
	which gives
	$
	-\bigl(pa_1+(1-p)a_2\bigr)\le 1-2p.
	$
	Then the statement of \emph{(ii)} follows.
	
	We finally prove \emph{(iii)}. For sufficiency, note that \(w_i\le y_1\) for every \(i\in[k]\). If \(Q\ge 1/G(z,p)\), then, since \(G_e(\mathbf{s})=y_1G(z,p)\), we have
	$
	QG_e(\mathbf{s})
	=
	y_1QG(z,p)
	\ge
	y_1
	\ge
	\Delta_{e,\ell}f(\mathbf{s}).
	$
	
	For necessity, consider a $k$-submodular instance on the singleton ground set \(V=\{e\}\). Fix the current state \(\mathbf{s}=\mathbf{0}\), and define
	\[
	f(\mathbf{0})=0,\qquad
	f((1))=1,\qquad
	f((2))=z,
	\]
    	and, if \(k>2\), set \(f( (j))=0\) for all \(j\ge3\). It is easy to verify that \(f\) is orthant submodular and pairwise monotone. The marginal gains are \(1\) and \(z\), and hence \(G_e(\mathbf{s})=G(z,p)\). From
	$
	\Delta_{e,1}f(\mathbf{s})=1\le QG_e(\mathbf{s}),
    $
	we obtain~\eqref{Q}. Taking the supremum over \(z\in[0,1]\) gives~\eqref{Q1}.
\end{proof}

We next use two simple bisubmodular functions on the ground set \(V=\{e_1,e_2\}\) to show that inequalities~\eqref{rep} and~\eqref{add} can be tight. Since bisubmodular functions are the \(k=2\) case of \(k\)-submodular functions, these examples already rule out improving the local bounds in the general \(k\)-submodular setting.

\begin{lemma}\label{lem:sharpness-atom}
	For any \(z\in[0,1]\), each of the following two tables defines a bisubmodular function. The entries in the tables give the function values when \(e_1\) and \(e_2\) take the corresponding labels.
	
	\medskip
	\noindent\textbf{Instance A.}
	\begin{center}
		\renewcommand{\arraystretch}{1.15}
		\setlength{\tabcolsep}{8pt}
		\begin{tabular}{c|ccc}
			\diagbox[width=7em,height=2.8em]{\(e_1\)-label}{\(e_2\)-label}
			& \(0\) & \(1\) & \(2\)\\
			\hline
			\(0\) & \(0\) & \(1+z\) & \(0\)\\
			\(1\) & \(1\) & \(1\) & \(1\)\\
			\(2\) & \(z\) & \(1+2z\) & \(z\)
		\end{tabular}
	\end{center}
	At the state \(\mathbf{s}=(\mathbf{s}_{1}=0,\mathbf{s}_{2}=0)\), the current marginal-gain vector of \(e_1\) is \((w_1,w_2)=(1,z)\). At the extended state \(\mathbf{h}=(\mathbf{h}_{1}=0,\mathbf{h}_{2}=1)\), the marginal-gain vector of \(e_1\) is \((a_1,a_2)=(-z,z)\). Therefore:
	\begin{enumerate}
		\item[(a)] If the label of \(e_1\) in the state \(\hat{\mathbf{h}}=(2,1)\) is replaced, then a direct calculation gives the replacement loss
		\[
		a_2-\bigl(pa_1+(1-p)a_2\bigr)=2pz.
		\]
		
		\item[(b)] If \(e_1\) is added to \(\mathbf{h}\), then for \(p\ge \frac{1}{2}\), a direct calculation gives the pure-addition loss 
		\[
		-\bigl(pa_1+(1-p)a_2\bigr)=(2p-1)z.
		\]
	\end{enumerate}
	
	\noindent\textbf{Instance B.}
	\begin{center}
		\renewcommand{\arraystretch}{1.15}
		\setlength{\tabcolsep}{8pt}
		\begin{tabular}{c|ccc}
			\diagbox[width=7em,height=2.8em]{\(e_1\)-label}{\(e_2\)-label}
			& \(0\) & \(1\) & \(2\)\\
			\hline
			\(0\) & \(0\) & \(1+z\) & \(0\)\\
			\(1\) & \(1\) & \(2+z\) & \(1\)\\
			\(2\) & \(z\) & \(z\) & \(z\)
		\end{tabular}
	\end{center}
	At the state \(\mathbf{s}=(\mathbf{s}_{1}=0,\mathbf{s}_{2}=0)\), the current marginal-gain vector of \(e_1\) is \((w_1,w_2)=(1,z)\). At the extended state \(\mathbf{h}=(\mathbf{h}_{1}=0,\mathbf{h}_{2}=1)\), the marginal-gain vector of \(e_1\) is \((a_1,a_2)=(1,-1)\). Therefore:
	\begin{enumerate}
		\item[(a)] If the label of \(e_1\) in the state \(\overline{\mathbf{h}}=(1,1)\) is replaced, then a direct calculation gives the replacement loss 
		\[
		a_1-\bigl(pa_1+(1-p)a_2\bigr)=2(1-p).
		\]
		
		\item[(b)] If \(e_1\) is added to \(\mathbf{h}\), then for \(p\le \frac{1}{2}\), a direct calculation gives the pure-addition loss 
		\[
		-\bigl(pa_1+(1-p)a_2\bigr)=1-2p.
		\]
	\end{enumerate}
\end{lemma}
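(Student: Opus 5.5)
The proof has two parts. First we verify that each table defines a bisubmodular function. Then we read off the marginal gains and substitute them into the loss expressions.

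\textbf{Bisubmodularity.} For $k=2$ we use the characterization of Ward and \v{Z}ivn\'y, so it suffices to check orthant submodularity and pairwise monotonicity. On a two-element ground set every relevant marginal has the form $\Delta_{e_i,j}f(\mathbf{x})$, where $\mathbf{x}$ fixes only the other element to a label in $\{0,1,2\}$.

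Pairwise monotonicity requires $\Delta_{e_i,1}f(\mathbf{x})+\Delta_{e_i,2}f(\mathbf{x})\ge 0$ for each $i$ and each label $c$ of the other element. This gives six inequalities per table.

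Orthant submodularity requires $\Delta_{e_i,j}f(\text{other}=0)\ge \Delta_{e_i,j}f(\text{other}=c)$ for $c\in\{1,2\}$ and $j\in\{1,2\}$. This gives eight inequalities per table.

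For Instance A, the marginals of $e_1$ are:
\begin{itemize}
\item at $e_2=0$: $(1,z)$;
\item at $e_2=1$: $(1-(1+z),\,1+2z-(1+z))=(-z,z)$;
\item at $e_2=2$: $(1,z)$.
\end{itemize}
The marginals of $e_2$ are:
\begin{itemize}
\item at $e_1=0$: $(1+z,0)$;
\item at $e_1=1$: $(0,0)$;
\item at $e_1=2$: $(1+z,0)$.
\end{itemize}
Every pairwise sum is nonnegative: $-z+z=0$, and the remaining sums are $1+z$, $0$, and $1+z$. The marginals at the zero state dominate the marginals at each extended state coordinatewise, since $1\ge -z$, $z\ge z$, $1+z\ge 0$, and so on.

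For Instance B, the marginals of $e_1$ are $(1,z)$ at $e_2=0$, $(1,-1)$ at $e_2=1$, and $(1,z)$ at $e_2=2$. The marginals of $e_2$ are $(1+z,0)$ at $e_1=0$, $(1+z,0)$ at $e_1=1$, and $(0,0)$ at $e_1=2$. The same checks go through: $1-1=0$, $z\ge -1$, and $1+z\ge 0$.

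Since $z\in[0,1]$, every inequality reduces to $z\ge 0$, $z\ge -1$, or $1+z\ge 0$. This routine but tedious verification is the main obstacle. The one delicate point is keeping track of which entry is the base value.

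\textbf{Loss computations.} With the vectors above, $(w_1,w_2)=(1,z)$ at $\mathbf{s}=\mathbf{0}$, so $y_1=1$ and $y_2=z$ as normalized. Labels $1$ and $2$ are the top two, taking label $1$ first if $z=1$.

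For Instance A, $(a_1,a_2)=(-z,z)$ at $\mathbf{h}=(0,1)$.
\begin{itemize}
\item In $\hat{\mathbf{h}}=(2,1)$ the residual label is $\ell=2$, so the replacement loss is $a_2-pa_1-(1-p)a_2=p(a_2-a_1)=2pz$.
\item The addition loss is $-(p(-z)+(1-p)z)=(2p-1)z$.
\end{itemize}

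For Instance B, $(a_1,a_2)=(1,-1)$.
\begin{itemize}
\item In $\overline{\mathbf{h}}=(1,1)$ the residual label is $\ell=1$, so the replacement loss is $(1-p)(a_1-a_2)=2(1-p)$.
\item The addition loss is $-(p-(1-p))=1-2p$.
\end{itemize}

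These values match the upper bounds in \eqref{rep} and \eqref{U} on the indicated ranges of $p$, so those bounds are tight.
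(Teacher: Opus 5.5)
Your proposal is correct and follows the paper's own proof: both reduce bisubmodularity to orthant submodularity plus pairwise monotonicity via the Ward--\v{Z}ivn\'y characterization, and both verify these using the same marginal-gain vectors of $e_1$ and $e_2$ at each state of the other element. Your explicit loss computations, which the paper leaves as ``direct calculations,'' are also correct.
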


\begin{proof}
	It suffices to verify that the functions defined by Instances A and B are orthant submodular and pairwise monotone.
	
	We first check Instance A. Fixing the label of \(e_2\) or $e_1$, the marginal-gain vectors of \(e_1\) are
	\[
	\mathbf{s}_{2}=0:(1,z),\qquad
	\mathbf{s}_{2}=1:(-z,z),\qquad
	\mathbf{s}_{2}=2:(1,z),
	\]
	and the marginal-gain vectors of \(e_2\) are
	\[
	\mathbf{s}_{1}=0:(1+z,0),\qquad
	\mathbf{s}_{1}=1:(0,0),\qquad
	\mathbf{s}_{1}=2:(1+z,0).
	\]
	At each state, the sum of the two marginal gains of the same element is nonnegative, so pairwise monotonicity holds. For orthant submodularity, the pointwise comparisons are
	\[
	\begin{array}{@{}l@{\qquad}l@{}}
		\Delta_{e_1,1}f(0,0)=1\ge -z=\Delta_{e_1,1}f(0,1),
		&
		\Delta_{e_2,1}f(0,0)=1+z\ge 0=\Delta_{e_2,1}f(1,0),
		\\[1mm]
		\Delta_{e_1,2}f(0,0)=z\ge z=\Delta_{e_1,2}f(0,1),
		&
		\Delta_{e_2,2}f(0,0)=0\ge 0=\Delta_{e_2,2}f(1,0),
		\\[1mm]
		\Delta_{e_1,1}f(0,0)=1\ge 1=\Delta_{e_1,1}f(0,2),
		&
		\Delta_{e_2,1}f(0,0)=1+z\ge 1+z=\Delta_{e_2,1}f(2,0),
		\\[1mm]
		\Delta_{e_1,2}f(0,0)=z\ge z=\Delta_{e_1,2}f(0,2),
		&
		\Delta_{e_2,2}f(0,0)=0\ge 0=\Delta_{e_2,2}f(2,0).
	\end{array}
	\]
	Hence Instance A is orthant submodular and pairwise monotone.
	
	We next check Instance B. Fixing the label of \(e_2\) or $e_1$, the marginal-gain vectors of \(e_1\) are
	\[
	\mathbf{s}_{2}=0:(1,z),\qquad
	\mathbf{s}_{2}=1:(1,-1),\qquad
	\mathbf{s}_{2}=2:(1,z),
	\]
	and the marginal-gain vectors of \(e_2\) are
	\[
	\mathbf{s}_{1}=0:(1+z,0),\qquad
	\mathbf{s}_{1}=1:(1+z,0),\qquad
	\mathbf{s}_{1}=2:(0,0).
	\]
	Again, the sum of the two marginal gains of the same element is nonnegative at every state. The pointwise comparisons for orthant submodularity are
	\[
	\begin{array}{@{}l@{\qquad}l@{}}
		\Delta_{e_1,1}f(0,0)=1\ge 1=\Delta_{e_1,1}f(0,1),
		&
		\Delta_{e_2,1}f(0,0)=1+z\ge 1+z=\Delta_{e_2,1}f(1,0),
		\\[1mm]
		\Delta_{e_1,2}f(0,0)=z\ge -1=\Delta_{e_1,2}f(0,1),
		&
		\Delta_{e_2,2}f(0,0)=0\ge 0=\Delta_{e_2,2}f(1,0),
		\\[1mm]
		\Delta_{e_1,1}f(0,0)=1\ge 1=\Delta_{e_1,1}f(0,2),
		&
		\Delta_{e_2,1}f(0,0)=1+z\ge 0=\Delta_{e_2,1}f(2,0),
		\\[1mm]
		\Delta_{e_1,2}f(0,0)=z\ge z=\Delta_{e_1,2}f(0,2),
		&
		\Delta_{e_2,2}f(0,0)=0\ge 0=\Delta_{e_2,2}f(2,0).
	\end{array}
	\]
	Therefore Instance B is also  bisubmodular.
\end{proof}

\subsection{$A+Q \ge \sqrt{2}$}
For fixed \(z\) and \(p\), Lemma~\ref{lem:local-upper-envelopes}\emph{(ii)} and the tight instances in Lemma~\ref{lem:sharpness-atom} imply that, if the \(A\)-addition bound in Definition~\ref{d2.5} holds, then
$
U(z,p)\le A G(z,p),
$
or equivalently,
$
A\ge \frac{U(z,p)}{G(z,p)}.
$
Moreover, if the \(Q\)-deletion domination holds, then Lemma~\ref{lem:local-upper-envelopes}\emph{(iii)} gives
$
Q\ge \frac{1}{G(z,p)}.
$
Therefore,
\begin{equation}\label{eq:AQ-lower}
	A+Q
	\ge
	\frac{1+U(z,p)}{G(z,p)}.
	\tag{7}
\end{equation}
The above inequality must hold for all \(z\in[0,1]\). For each fixed \(z\), however, the rule may choose \(p=p(z)\) so as to minimize the right-hand side. Thus the problem reduces to the following max-min expression:
\[
\sup_{0\le z\le 1}\inf_{0\le p\le 1}
\frac{1+U(z,p)}{G(z,p)}.
\]

\begin{lemma}\label{lem:AQ-lower-bound}
	Consider any top-2 rule, with the scalar score defined in this section.
	If the \(A\)-addition bound and the \(Q\)-deletion domination certificate in
	Definition~\ref{d2.5} hold uniformly over all normalized $k$-submodular local instances, then
	$
	A+Q\ge \sqrt{2}.
	$
\end{lemma}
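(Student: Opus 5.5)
The plan is to bound the max-min expression derived just before the statement from below by choosing one specific $z$. By \eqref{eq:AQ-lower}, every pair $(z,p(z))$ must satisfy $A+Q\ge (1+U(z,p))/G(z,p)$. It therefore suffices to exhibit a single $z^\ast\in[0,1]$ such that
\[
\frac{1+U(z^\ast,p)}{G(z^\ast,p)}\ \ge\ \sqrt{2}\qquad\text{for every }p\in[0,1].
\]

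First I justify \eqref{eq:AQ-lower}, since it rests on tightness. For fixed $z$ and $p$, Lemma~\ref{lem:sharpness-atom} gives a bisubmodular (hence $k$-submodular) instance whose pure-addition loss equals $U(z,p)$ while the current marginals are $(1,z)$. Instance A covers $p\ge\tfrac12$ and Instance B covers $p\le\tfrac12$. Hence the $A$-addition bound forces $A\,G(z,p)\ge U(z,p)$. Lemma~\ref{lem:local-upper-envelopes}(iii) forces $Q\,G(z,p)\ge 1$. Adding the two inequalities gives \eqref{eq:AQ-lower}. (For $k>2$, the instances can be padded with extra labels whose marginal is zero, as in the proof of Lemma~\ref{lem:local-upper-envelopes}(iii).)

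Next I compute $\inf_p$ for fixed $z$, splitting at $p=\tfrac12$ according to \eqref{U}.
\begin{itemize}
\item For $p\le\tfrac12$, the ratio is $(2-2p)/(p+(1-p)z)$. The numerator is nonincreasing in $p$, and the denominator $z+p(1-z)$ is nondecreasing in $p$ because $z\le 1$. So the ratio is nonincreasing, and its minimum on this branch is at $p=\tfrac12$, with value $2/(1+z)$.
\item For $p\ge\tfrac12$, the ratio is $(1-z+2pz)/(z+p(1-z))$. This is a linear-fractional function of $p$ with positive denominator, so it is monotone on $[\tfrac12,1]$ and its minimum is at an endpoint. The endpoint values are $2/(1+z)$ at $p=\tfrac12$ and $1+z$ at $p=1$.
\end{itemize}
Hence $\inf_{p}(1+U)/G=\min\{2/(1+z),\,1+z\}$.

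Finally I balance the two terms: $2/(1+z)=1+z$ exactly when $(1+z)^2=2$, that is, at $z^\ast=\sqrt2-1\in[0,1]$. There both quantities equal $\sqrt2$. So for any rule, taking $z=z^\ast$ in \eqref{eq:AQ-lower} yields $A+Q\ge\sqrt2$.

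The only delicate points are the monotonicity claims and the edge cases. For the linear-fractional branch, I would simply note that the derivative has constant sign. The degenerate case $z=0$ does not arise because $z^\ast>0$, so $G(z^\ast,p)>0$ for all $p$ and no division by zero occurs.
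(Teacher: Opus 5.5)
Your argument is correct and is essentially the paper's. You obtain \eqref{eq:AQ-lower} from Instances A/B and Lemma~\ref{lem:local-upper-envelopes}(iii), minimize over $p$ separately on the two branches of $U$, and read off $\sqrt2$ at $z^\ast=\sqrt2-1$. The differences are cosmetic: you evaluate only at $z^\ast$ rather than computing the envelope $m(z)$ and maximizing it, and you use the endpoint property of a linear-fractional function instead of the derivative numerator $z^2+2z-1$.

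One side remark should be dropped: the parenthetical claim that for $k>2$ the instances can be padded with extra labels of zero marginal. That padding violates pairwise monotonicity. Both instances have a negative marginal at $\mathbf{h}$ ($a_1=-z$ in A, $a_2=-1$ in B), and pairing it with a zero-marginal label gives a negative sum. For Instance B, no extension to $k\ge3$ exists at all when $z<1$. Pairwise monotonicity would force $a_3\ge -a_2=1$, whereas orthant submodularity and the top-2 ordering give $a_3\le w_3\le z$. The paper avoids the issue by treating the bisubmodular instances directly as members of the $k$-submodular family.

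If you want the bound for a fixed $k\ge3$, Instance B is unnecessary at $z^\ast$. For $p\le\tfrac12$ you already have $Q\ge 1/G(z^\ast,p)\ge 2/(1+z^\ast)=\sqrt2$, using $A\ge0$. Instance A extends to any $k$ by letting every label $j\ge3$ duplicate label $2$. Its label-$2$ marginals are always nonnegative ($z$ for $e_1$, $0$ for $e_2$), so pairwise monotonicity survives. Orthant submodularity is inherited, because collapsing labels $j\ge2$ to $2$ preserves $\preceq$.
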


\begin{proof}
	Fix \(z\in[0,1]\). We minimize the right-hand side of \eqref{eq:AQ-lower} over \(p\).
	
	If \(0\le p\le \frac{1}{2}\), then by~\eqref{U}, we have \(U(z,p)=1-2p\). Hence
	$
	\frac{1+U(z,p)}{G(z,p)}
	=
	\frac{2-2p}{z+p(1-z)}.
	$
	Taking the derivative with respect to \(p\), we get
	\[
	\frac{\partial}{\partial p}
	\frac{2-2p}{z+p(1-z)}
	=
	\frac{-2(z+p(1-z))-(2-2p)(1-z)}
	{(z+p(1-z))^2}
	=
	\frac{-2}{(z+p(1-z))^2}<0.
	\]
	Thus the minimum on this interval is attained at \(p=\frac{1}{2}\), with value
	$
	\frac{2}{1+z}.
	$
	
	If \(\frac{1}{2}\le p\le 1\), then \(U(z,p)=(2p-1)z\). Let
	$
	F(p,z)
	=
	\frac{1+(2p-1)z}{p+(1-p)z}.
	$
	Writing \(N=1-z+2pz\) and \(D=z+p(1-z)>0\), we have
	$
	\frac{\partial F}{\partial p}
	=
	\frac{2zD-N(1-z)}{D^2}.
	$
	The numerator expands as
	\[
	2zD-N(1-z)
	=
	2z\bigl(z+p(1-z)\bigr)
	-
	(1-z+2pz)(1-z)
	=
	z^2+2z-1.
	\]
	Let \(z_*=\sqrt{2}-1\), so that \(z_*^2+2z_*-1=0\). If \(0\le z<z_*\), then \(F\) is decreasing in \(p\), and its minimum is attained at \(p=1\), with value
$
	F(1,z)=1+z.
	$
	Moreover, \(1+z\le \frac{2}{1+z}\) in this range. If \(z_*<z\le 1\), then \(F\) is increasing in \(p\), and its minimum is attained at \(p=\frac{1}{2}\), with value
	$
	F\left(\frac{1}{2},z\right)=\frac{2}{1+z}.
	$
	When \(z=z_*\), the function \(F\) is constant in \(p\), and
	$
	1+z=\frac{2}{1+z}.
	$
	
	Combining the two ranges of \(p\), for each fixed \(z\), no top-2 rule can reduce the right-hand side of \eqref{eq:AQ-lower} below
	\[
	m(z)=
	\begin{cases}
		1+z, & 0\le z\le \sqrt{2}-1,\\[1mm]
		\dfrac{2}{1+z}, & \sqrt{2}-1\le z\le 1.
	\end{cases}
	\]
	The function \(1+z\) is increasing on \([0,\sqrt{2}-1]\), while \(2/(1+z)\) is decreasing on \([\sqrt{2}-1,1]\), and the two values coincide at \(z=z_*\), where they equal \(\sqrt{2}\). Hence
	$
	\max_{0\le z\le 1}m(z)=\sqrt{2}.
	$
	Therefore,
	$
	A+Q\ge \max_{0\le z\le 1}m(z)=\sqrt{2}.
	$
\end{proof}

\subsection{Proportional top-2 rule}
We now construct a simple rule attaining the lower bound in Lemma~\ref{lem:AQ-lower-bound}. Set the two terms on the right-hand side of Lemma~\ref{lem:local-upper-envelopes}\emph{(i)} equal, namely \(2(1-p)=2pz\). This linear equation uniquely gives
\begin{equation}\label{proportional}
p(z)=\frac{1}{1+z},
\qquad
1-p(z)=\frac{z}{1+z}.
\tag{8}
\end{equation}
Equivalently, the probabilities of the two largest marginal-gain labels are proportional to their current nonnegative marginal gains: if \(y_1+y_2>0\), set
$
p_1=\frac{y_1}{y_1+y_2}
\text{ and }
p_2=\frac{y_2}{y_1+y_2}.
$
If \(y_1=0\), then by Lemma~\ref{lem:zero-score}, all current marginal gains are zero, and the label choice is irrelevant; we set \(p_1=1\) and \(p_2=0\). We call this rule the \emph{proportional top-2 rule}.

The proportional top-2 rule is also an exact-score rule. Indeed, if \(y_1=0\), then both the expected marginal gain and the score are zero. If \(y_1>0\), the rule assigns positive probability only to top-2 labels with \(y_i>0\), and hence every selected label satisfies \(y_i=w_i\). Therefore,
$
\mathbb{E}_{i\sim P_e(\mathbf{s})}
\bigl[\Delta_{e,i}f(\mathbf{s})\bigr]
=
\sum_i p_i w_i
=
\sum_i p_i y_i
=
G_e(\mathbf{s}).
$

We next prove that, under the proportional top-2 rule, there exist \(A\) and \(Q\) such that \(A+Q=\sqrt{2}\).

\begin{lemma}\label{lem:proportional-top2-certificates}
	For the proportional top-2 rule, there is a \((C,A,Q)\)-certificate with
	\[
	C=1,\qquad
	A=\frac{\sqrt{2}-1}{2},\qquad
	Q=\frac{1+\sqrt{2}}{2}.
	\]
	Consequently, \(A+Q=\sqrt{2}\).
\end{lemma}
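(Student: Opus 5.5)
Plan: plug the proportional choice $p=p(z)=1/(1+z)$ into the three envelopes of Lemma~\ref{lem:local-upper-envelopes}, in the normalized setting $y_1=1$, $y_2=z\in[0,1]$. We first dispose of the degenerate case $y_1=0$ with Lemma~\ref{lem:zero-score}. There all current marginals vanish, and every $\Delta_{e,i}f(\mathbf{h})$ with $\mathbf{s}\preceq\mathbf{h}$, $\mathbf{h}_e=0$ vanishes as well. Hence the replacement loss, the addition loss and $\Delta_{r,\ell}f(\mathbf{s})$ are all $0$, and each of the three inequalities holds with score $0$. In the case $y_1>0$ everything is homogeneous of degree one in $y_1$, since $G_e(\mathbf{s})=y_1G(z,p)$ and the $a_i$ scale likewise. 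So it suffices to compare the normalized bounds with the normalized score
\[
G(z,p(z))=\frac{1}{1+z}+\frac{z^2}{1+z}=\frac{1+z^2}{1+z}.
\]

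\emph{Replacement ($C=1$).} By \eqref{rep} with $p=1/(1+z)$, both terms equal $2z/(1+z)$. We therefore need $2z/(1+z)\le (1+z^2)/(1+z)$, i.e.\ $2z\le 1+z^2$, which is $(1-z)^2\ge0$.

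\emph{Deletion ($Q=(1+\sqrt2)/2$).} By \eqref{Q1} we need $Q\ge\sup_z (1+z)/(1+z^2)$. Maximizing $\phi(z)=(1+z)/(1+z^2)$ gives $\phi'=0 \iff 1+z^2-2z(1+z)=0 \iff z^2+2z-1=0$, so $z=\sqrt2-1$. There $1+z^2=4-2\sqrt2$ and $1+z=\sqrt2$, so $\phi=\sqrt2/(4-2\sqrt2)=1/(2\sqrt2-2)=(\sqrt2+1)/2$. The endpoint values $\phi(0)=\phi(1)=1$ are smaller. Note that item (iii) of Definition~\ref{d2.5} is stated for general $\mathbf{s}$ and $r$, but the sufficiency part of Lemma~\ref{lem:local-upper-envelopes}(iii) applies verbatim to any element, so this suffices.

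\emph{Addition ($A=(\sqrt2-1)/2$).} Since $p(z)\ge 1/2$, formula \eqref{U} gives $U=(2p-1)z=z(1-z)/(1+z)$. We need
\[
\frac{z(1-z)}{1+z}\le A\,\frac{1+z^2}{1+z},
\qquad\text{i.e.}\qquad
A\ge \sup_{z\in[0,1]}\psi(z),\quad \psi(z):=\frac{z-z^2}{1+z^2}.
\]
Setting $\psi'=0$ gives $(1-2z)(1+z^2)-2z(z-z^2)=0$, i.e.\ $1-2z-z^2=0$, so again $z=\sqrt2-1$. There $z-z^2=(\sqrt2-1)(2-\sqrt2)=3\sqrt2-4$, so $\psi=(3\sqrt2-4)/(4-2\sqrt2)$. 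Rationalizing the denominator yields $(\sqrt2-1)/2$. It is a reassuring consistency check that both extremizers sit at the same $z_*=\sqrt2-1$ found in Lemma~\ref{lem:AQ-lower-bound}; at that point $A+Q=\sqrt2$, which gives the final claim.

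The only real work is this pair of one-variable maximizations, with care over the endpoints $z=0,1$. The main subtlety is bookkeeping rather than analysis: checking that the normalization (relabeling the top-2 labels, truncation to $y_i$) matches the definitions. In particular, the bounds from Lemma~\ref{lem:local-upper-envelopes}, proved for arbitrary $\mathbf{s}\preceq\mathbf{h}$ with $\mathbf{h}_e=0$, are exactly the hypotheses of items 1 and 2 of Definition~\ref{d2.5}.
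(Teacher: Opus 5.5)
Your proposal is correct and follows essentially the same route as the paper: substitute $p=1/(1+z)$ into the envelopes of Lemma~\ref{lem:local-upper-envelopes}, reduce $C=1$ to $(1-z)^2\ge 0$, and maximize $z(1-z)/(1+z^2)$ and $(1+z)/(1+z^2)$, both of which peak at $z_*=\sqrt2-1$. You also treat the degenerate case $y_1=0$ explicitly via Lemma~\ref{lem:zero-score}, which the paper's proof leaves implicit.
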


\begin{proof}
	Under the normalization \(y_1=1\) and \(y_2=z\), by~\eqref{proportional}, we have
	$
	G(z)=p+(1-p)z=\frac{1+z^2}{1+z}.
	$
	
	We first prove \(C=1\). By Lemma~\ref{lem:local-upper-envelopes}\emph{(i)}, the replacement loss is at most
	\[
	\max\{2(1-p),2pz\}=\frac{2z}{1+z}\le G(z),
	\]
	where the last inequality follows from
	$
	G(z)-\frac{2z}{1+z}
	=
	\frac{(1-z)^2}{1+z}\ge 0.
	$
	Thus \(C=1\) is valid.
	
	We next prove \(A=(\sqrt{2}-1)/2\). Since \(p=1/(1+z)\ge 1/2\), Lemma~\ref{lem:local-upper-envelopes}\emph{(ii)} gives that the pure-addition loss is at most \((2p-1)z=z(1-z)/(1+z)\). Dividing by \(G(z)\), we obtain
	\[
	r(z):=\frac{z(1-z)}{1+z^2},
	\qquad
	r'(z)=\frac{1-2z-z^2}{(1+z^2)^2}.
	\]
	The numerator has a unique zero in \([0,1]\), namely \(z_*=\sqrt{2}-1\). Hence \(r\) is maximized at \(z_*\), and
	$
	r(z_*)=\frac{\sqrt{2}-1}{2}.
	$
	Therefore \(A=(\sqrt{2}-1)/2\) is valid. Moreover, Instance A in Lemma~\ref{lem:sharpness-atom} attains this ratio at \(z=z_*\), so \(A\) cannot be decreased for the proportional top-2 rule.
	
	Finally, we prove \(Q=(1+\sqrt{2})/2\). By Lemma~\ref{lem:local-upper-envelopes}\emph{(iii)}, the smallest valid uniform \(Q\) is
	\[
	\max_{0\le z\le 1}\frac{1}{G(z)}
	=
	\max_{0\le z\le 1}\frac{1+z}{1+z^2}.
	\]
	Let \(q(z):=(1+z)/(1+z^2)\). Then
	$
	q'(z)=(1-2z-z^2)/(1+z^2)^2,
	$
	so \(q\) is also maximized at \(z_*=\sqrt{2}-1\), and
	$
	q(z_*)=\frac{1+\sqrt{2}}{2}.
	$
	Thus \(Q=(1+\sqrt{2})/2\) is valid. By the necessity part of Lemma~\ref{lem:local-upper-envelopes}\emph{(iii)}, this \(Q\) cannot be decreased. Combining the bounds gives
	$
	A+Q
	=
	\frac{\sqrt{2}-1}{2}
	+
	\frac{1+\sqrt{2}}{2}
	=
	\sqrt{2}.
	$
\end{proof}

\section{Single matroid constraints}\label{sec4}
This section gives a randomized greedy algorithm based on the proportional top-2 rule for \textit{matroid-constrained non-monotone \(k\)-submodular maximization} (M\(k\)SM) with \(k \ge 2\), summarized in Algorithm~\ref{alo1}, and proves that it achieves a \((\sqrt{2}-1)\)-approximation.

Let \(\mathcal{M}=(V,\mathcal{P})\) be a matroid of rank \(B\). The main idea of Algorithm~\ref{alo1} is as follows. Suppose that the algorithm is at iteration \(t\), and let \(\mathbf{s}^{t-1}\) be the current partial solution, initialized as \(\mathbf{s}^0=\mathbf{0}\). Throughout the algorithm, we maintain that \(S_{t-1}:=\operatorname{supp}(\mathbf{s}^{t-1})\) is an independent set. Using the proportional top-2 rule, we assign an exact score \(G_e\) to every element \(e\in V\setminus S_{t-1}\). We then compute a maximum-score base \(M_t\) of the contraction \(\mathcal{M}/S_{t-1}\).  By the Rado--Edmonds greedy theorem for matroids~\cite{edmonds1971matroids}, \(M_t\) can be found greedily: scan the elements of \(V\setminus S_{t-1}\) in nonincreasing order of score and add \(e\) to the current set \(T\) whenever
$
S_{t-1}\cup T\cup\{e\}\in\mathcal{P}.
$ Finally, the algorithm chooses an element uniformly at random from \(M_t\), assigns it a label according to the proportional top-2 rule, and adds it to \(\mathbf{s}^{t-1}\). Since \(\mathcal{M}\) has rank \(B\), the algorithm terminates after at most \(B\) iterations and returns a feasible solution \(\mathbf{s}^B\).

\begin{algorithm}[H]
	\caption{Matroid proportional top-2 randomized greedy}
	\label{alo1}
	\begin{algorithmic}[1]
		\State \textbf{Input:} value oracle \(\mathcal{O}_f\); independence oracle for the matroid \(\mathcal{M}=(V,\mathcal{P})\); rank \(B\).
		\State \textbf{Output:} a feasible solution \(\mathbf{s}^B\).
		\State \(\mathbf{s}^0\gets \mathbf{0}\), \(S_0\gets\emptyset\).
		\For{\(t=1,2,\ldots,B\)}
		\ForAll{\(e\in V\setminus S_{t-1}\)}
		\State Compute \(w_{e,i}\gets f(\mathbf{s}^{t-1}+i\mathbf{1}_e)-f(\mathbf{s}^{t-1})\) for all \(i\in[k]\).
		\State Let \(\alpha_e,\beta_e\) be two labels with the largest values of \(w_{e,i}\), with \(w_{e,\alpha_e}\ge w_{e,\beta_e}\).
		\State \(y_{e,1}\gets\max\{w_{e,\alpha_e},0\}\), \(y_{e,2}\gets\max\{w_{e,\beta_e},0\}\).
		\If{\(y_{e,1}=0\)}
		\State \(p_{e,1}\gets1\), \(p_{e,2}\gets0\), \(G_e\gets0\).
		\Else
		\State \(p_{e,1}\gets y_{e,1}/(y_{e,1}+y_{e,2})\), \(p_{e,2}\gets y_{e,2}/(y_{e,1}+y_{e,2})\).
		\State \(G_e\gets p_{e,1}y_{e,1}+p_{e,2}y_{e,2}
		=(y_{e,1}^2+y_{e,2}^2)/(y_{e,1}+y_{e,2})\).
		\EndIf
		\EndFor
		\State Choose any
		\[
		M_t\in\arg\max
		\left\{
		\sum_{e\in M}G_e:
		M \text{ is a base of } \mathcal{M}/S_{t-1}
		\right\}.
		\]
		\State Choose an element \(e_t\) uniformly at random from \(M_t\).
		\State Choose a label \(i_t\) by setting
		\[
		\Pr[i_t=\alpha_{e_t}]=p_{e_t,1},
		\qquad
		\Pr[i_t=\beta_{e_t}]=p_{e_t,2}.
		\]
		\State Update \(\mathbf{s}^t\gets \mathbf{s}^{t-1}+i_t\mathbf{1}_{e_t}\), \(S_t\gets S_{t-1}\cup\{e_t\}\).
		\EndFor
		\State \Return \(\mathbf{s}^B\).
	\end{algorithmic}
\end{algorithm}

\FloatBarrier

Algorithm~\ref{alo1} has the following approximation guarantee and query complexity.

\begin{theorem}\label{mat_approx}
	Algorithm~\ref{alo1} outputs a solution \(\mathbf{s}^B\) satisfying
	$
	\mathbb{E}\bigl[f(\mathbf{s}^B)\bigr]
	\ge
	(\sqrt{2}-1)f(\mathbf{o}),
	$
	using at most \(O(n^2k)\) calls to the value oracle \(\mathcal{O}_f\), where \(\mathbf{o}\) is an optimal solution to MkSM.
\end{theorem}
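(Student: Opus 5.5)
We would follow the potential argument sketched in the technical overview. First, since \(f\ge0\) and \(k\ge2\), we may assume the optimal solution \(\mathbf{o}\) has a base as its support. Indeed, extending \(\operatorname{supp}(\mathbf{o})\) to a base and giving each added element its best label does not decrease \(f\), because pairwise monotonicity ensures that some label has nonnegative marginal gain. We then maintain a residual set \(O_{t-1}\subseteq V\setminus S_{t-1}\) such that \(S_{t-1}\cup O_{t-1}\) is a base, with \(O_0=\operatorname{supp}(\mathbf{o})\). We let \(\mathbf{o}^{t-1}\) be the restriction of \(\mathbf{o}\) to \(O_{t-1}\), and set \(\mathbf{h}^{t-1}=\mathbf{h}(\mathbf{s}^{t-1},\mathbf{o}^{t-1})\). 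Thus \(\mathbf{h}^0=\mathbf{o}\), and \(\mathbf{h}^B=\mathbf{s}^B\) since \(O_B=\emptyset\).

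At iteration \(t\), both \(M_t\) and \(O_{t-1}\) are bases of \(\mathcal{M}/S_{t-1}\). By the standard bijective base exchange, there is a bijection \(\pi:M_t\to O_{t-1}\) such that \(O_{t-1}-\pi(e)+e\) is a base for every \(e\), and \(\pi(e)=e\) for \(e\in M_t\cap O_{t-1}\). After \(e_t\) is sampled, we set \(O_t=O_{t-1}\setminus\{\pi(e_t)\}\). Then \(S_t\cup O_t=S_{t-1}\cup O_{t-1}-\pi(e_t)+e_t\) is a base, as required. Conditioned on the history, the algorithmic gain has expectation \(\frac1{|M_t|}\sum_{e\in M_t}G_e(\mathbf{s}^{t-1})\) by the exact-score property. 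The hybrid loss splits into two cases. If \(e\in O_{t-1}\), it is a replacement loss of the form in Definition~\ref{def:local-losses} with \(\mathbf{h}_{-e}\succeq\mathbf{s}\), so it is at most \(C\,G_e=G_e\) by Lemma~\ref{lem:proportional-top2-certificates}. Otherwise it is a pure-addition loss, at most \(A\,G_e\), plus a deletion loss for \(r=\pi(e)\). By orthant submodularity the deletion loss is at most \(\Delta_{r,\ell}f(\mathbf{s})\le Q\,G_{\pi(e)}\). (Here Lemma~\ref{lem:zero-score} covers the zero-score case: all the relevant losses vanish there.)

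The key averaging step is to show \(\sum_{e\in M_t\setminus O_{t-1}}G_{\pi(e)}\le\sum_{e\in M_t\setminus O_{t-1}}G_e\). This is the step I expect to be the main obstacle, since the max-score property only controls a single swap. I would apply the exchange property in the reverse direction: choose \(\pi\) so that \(M_t-e+\pi(e)\) is also a base of \(\mathcal{M}/S_{t-1}\), which a bijection satisfying the symmetric (both-direction) exchange guarantees, using Brualdi's theorem. Maximality of \(M_t\) then gives \(G_{\pi(e)}\le G_e\) pointwise, which is even stronger than needed. Note that \(\pi\) has to satisfy the feasibility direction used for \(O_t\) as well. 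If a single bijection achieving both directions is unavailable, I would fall back on the fact that maximum-weight bases are characterized by \(G_x\le G_e\) for every valid swap. I would then construct \(\pi\) by Brualdi's bijection with \(O_{t-1}-\pi(e)+e\) a base, and check that the swap pair also satisfies the swap condition in \(M_t\) via matroid-intersection-type reasoning.

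Given these bounds, the expected loss is at most \(\frac1{|M_t|}\sum_e \max\{C,A+Q\}\,G_e=\sqrt2\,\mathbb{E}[\text{gain}]\). So \(\mathbb{E}[f(\mathbf{h}^{t-1})-f(\mathbf{h}^t)]\le\sqrt2\,\mathbb{E}[f(\mathbf{s}^t)-f(\mathbf{s}^{t-1})]\). With \(\Phi_t=(1-\rho)f(\mathbf{s}^t)+\rho f(\mathbf{h}^t)\) and \(\rho=\sqrt2-1\), we have \((1-\rho)=\sqrt2\rho\), so \(\mathbb{E}\Phi_t\ge\mathbb{E}\Phi_{t-1}\). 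Comparing endpoints gives \(f(\mathbf{s}^B)=\Phi_B\) in expectation and \(\Phi_0=\rho f(\mathbf{o})\), which proves the ratio. For the query count, each of the \(B\le n\) iterations uses \(O(nk)\) value queries to compute the marginals, for \(O(n^2k)\) in total. The greedy base computation uses only independence queries.
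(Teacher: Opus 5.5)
Your overall architecture matches the paper's: an optimum with base support, a residual base $O_{t-1}$ with $S_{t-1}\cup O_{t-1}$ a base, and Brualdi's bijection $\pi$ fixing $M_t\cap O_{t-1}$. The $C$/$A$/$Q$ charging and the potential with $\rho=\sqrt2-1$ are also the same. The gap is the averaging step you flagged, and the route you propose for it fails.

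Brualdi's theorem gives one bijection with $O-\pi(e)+e$ a base for all $e$. Applied with the roles swapped, it gives a possibly different bijection with $M-e+\pi(e)$ a base. It does not give a single bijection satisfying both conditions; matroids where such a two-sided bijection always exists are the base-orderable ones, and not every matroid is base-orderable. In the graphic matroid of $K_4$, take $M=\{ab,bc,cd\}$ and $O=\{ac,ad,bd\}$. The only $r\in O$ for which both $O-r+ab$ and $M-ab+r$ are bases is $ad$. The same is true for $cd$, so no such bijection exists. Your fallback, checking that each Brualdi pair also satisfies the swap condition in $M_t$, is the same false claim. The appeal to ``matroid-intersection-type reasoning'' does not supply it.

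The missing idea is that you never need a pointwise bound. The maximum-score property is also not just a single-swap statement: $M_t$ maximizes total score over \emph{all} bases of $\mathcal{M}/S_{t-1}$. Since $O_{t-1}$ is one of these bases, $\sum_{e\in M_t}G_e\ge\sum_{r\in O_{t-1}}G_r$. Because $\pi$ is a bijection fixing $M_t\cap O_{t-1}$, it maps $M_t\setminus O_{t-1}$ onto $O_{t-1}\setminus M_t$. Hence $\sum_{e\in M_t\setminus O_{t-1}}G_{\pi(e)}=\sum_{r\in O_{t-1}\setminus M_t}G_r$. Subtracting the common term $\sum_{e\in M_t\cap O_{t-1}}G_e$ from both sides gives exactly the inequality you need. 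This is the paper's Observation, and with it the rest of your argument goes through as written.

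(A pointwise bound for a one-sided exchange bijection can in fact be obtained via Hall's theorem. The key input is that $M_t\cap\{G>\tau\}$ spans $\{G>\tau\}$ for every threshold $\tau$. But this is unnecessary here.)
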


To prove Theorem~\ref{mat_approx}, we need the following facts.

\begin{lemma}\label{base_opt}
	Given a matroid \(\mathcal{M}\) of rank \(B\), there exists an optimal solution \(\mathbf{o}\) to MkSM such that \(\operatorname{supp}(\mathbf{o})\) is a base of \(\mathcal{M}\).
\end{lemma}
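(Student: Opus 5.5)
The plan is to start from an arbitrary optimal solution and extend its support to a base, one element at a time, without decreasing the value. Let \(\mathbf{o}\) be any optimal solution to MkSM, so \(O:=\operatorname{supp}(\mathbf{o})\in\mathcal{P}\). Among all optimal solutions, pick one whose support has maximum cardinality; call it \(\mathbf{o}\). We then argue by contradiction that \(O\) must be a base.

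Suppose \(O\) is not a base. Since \(O\) is independent but not maximal, there is an element \(e\in V\setminus O\) with \(O\cup\{e\}\in\mathcal{P}\). This follows from the augmentation axiom (iii), applied to \(O\) and any base of size \(B>|O|\). We now use pairwise monotonicity at the state \(\mathbf{o}\) with \(e\notin\operatorname{supp}(\mathbf{o})\). Since \(k\ge2\), pick two distinct labels \(i,j\in[k]\). Then
\[
\Delta_{e,i}f(\mathbf{o})+\Delta_{e,j}f(\mathbf{o})\ge 0,
\]
so at least one of the two, say \(\Delta_{e,i}f(\mathbf{o})\), is nonnegative. The solution \(\mathbf{o}+i\mathbf{1}_e\) has support \(O\cup\{e\}\in\mathcal{P}\), so it is feasible. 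Its value satisfies \(f(\mathbf{o}+i\mathbf{1}_e)\ge f(\mathbf{o})\), so it is also optimal, and its support is strictly larger. This contradicts the choice of \(\mathbf{o}\), so \(O\) is a base.

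Equivalently, one can phrase this as an iterative procedure. Repeat the augmentation step, which never decreases \(f\), until the support is maximal in \(\mathcal{P}\), i.e., a base. The procedure terminates after at most \(B-|O|\) steps.

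There is no real obstacle. The only point needing care is that non-monotonicity rules out adding elements with an arbitrary label. This is exactly where pairwise monotonicity, and hence the assumption \(k\ge2\), is used. For \(k=1\) the lemma can fail.
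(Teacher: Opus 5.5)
Your proposal is correct and follows essentially the same route as the paper: augment the support of an optimal solution by an element $e$ with $O\cup\{e\}\in\mathcal{P}$, using pairwise monotonicity to find a label with $\Delta_{e,i}f(\mathbf{o})\ge 0$. The only difference is cosmetic. You phrase it as an extremal argument (an optimal solution of maximum support size), whereas the paper repeats the augmentation at most $B$ times.
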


\begin{proof}
	Pick an arbitrary optimal solution \(\mathbf{o}\). If \(\operatorname{supp}(\mathbf{o})\) is not a base of \(\mathcal{M}\), then there exists an element \(e\notin \operatorname{supp}(\mathbf{o})\) such that
	$
	\operatorname{supp}(\mathbf{o})\cup\{e\}\in\mathcal{P}.
	$
	If \(\Delta_{e,i}f(\mathbf{o})<0\) for all labels \(i\in[k]\), then for any two distinct labels \(i,j\), we have
	$
	\Delta_{e,i}f(\mathbf{o})+\Delta_{e,j}f(\mathbf{o})<0,
	$
	which contradicts pairwise monotonicity. Therefore, there exists a label \(i\) such that \(\Delta_{e,i}f(\mathbf{o})\ge 0\). Assigning this label to \(e\) does not decrease the function value, and the new support remains independent. Repeating this process increases the support size by \(1\) in each step, and after at most \(B\) steps, the support becomes a base.
\end{proof}

\begin{lemma}[\cite{brualdi1969comments}]\label{lem:base-exchange}
	Let \(M\) and \(O\) be two bases of a matroid \(\mathcal{M}\). There exists a bijection
	\[
	\pi:M\to O
	\]
	such that:
	\begin{enumerate}
		\item[(i)] if \(e\in M\cap O\), then \(\pi(e)=e\);
		\item[(ii)] for every \(e\in M\), the set \(O-\pi(e)+e\) \footnote{Here, we employ the shorthand \(O+e\) for \(O \cup \{e\}\) and  \(O-e\) for \(O\setminus\{e\}\).} is a base of \(\mathcal{M}\).
	\end{enumerate}
\end{lemma}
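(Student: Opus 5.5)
The plan is the standard Hall's-theorem argument on the exchange graph. On the common part \(M\cap O\) we set \(\pi(e)=e\). Condition (ii) then holds trivially there, since \(O-e+e=O\). It therefore suffices to build a bijection between \(M\setminus O\) and \(O\setminus M\). These two sets have the same cardinality because all bases have size \(B\).

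First I would set up a bipartite graph with left side \(M\setminus O\) and right side \(O\setminus M\). There is an edge \((e,o)\) exactly when \(O-o+e\) is a base of \(\mathcal{M}\). I would characterize edges through fundamental circuits. For \(e\in M\setminus O\), the set \(O+e\) is dependent because \(O\) is a base. Hence \(O+e\) contains a unique circuit \(C(O,e)\), and \(e\in C(O,e)\). For any \(o\in C(O,e)-e\), the set \(O-o+e\) contains no circuit: any circuit in it would lie in \(O+e\), hence would equal \(C(O,e)\), but \(C(O,e)\) contains \(o\). So \(O-o+e\) is independent. It has size \(B\), so it is a base. Thus every \(o\in (C(O,e)-e)\cap(O\setminus M)\) is a neighbor of \(e\).

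Next I would verify Hall's condition, which is the main step. Fix \(X\subseteq M\setminus O\), and let \(N(X)\) be the set of neighbors of \(X\) obtained from the circuits above, that is,
\[
N(X)=\bigcup_{e\in X}(C(O,e)-e)\cap(O\setminus M).
\]
Each \(e\in X\) lies in the span of \(C(O,e)-e\). This set is contained in \(N(X)\cup(O\cap M)\), because \(C(O,e)-e\subseteq O\) and its elements in \(O\setminus M\) belong to \(N(X)\) by definition. Hence
\[
X\cup(O\cap M)\subseteq \operatorname{span}\bigl(N(X)\cup(O\cap M)\bigr).
\]
The set \(X\cup(O\cap M)\) is a subset of \(M\), so it is independent. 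Taking ranks gives
\[
|X|+|O\cap M|\le |N(X)|+|O\cap M|,
\]
that is, \(|N(X)|\ge|X|\).

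Finally, Hall's theorem yields a matching that saturates \(M\setminus O\). Since \(|M\setminus O|=|O\setminus M|\), this matching is perfect and defines the bijection \(\pi\) on \(M\setminus O\). Gluing it with the identity on \(M\cap O\) gives \(\pi:M\to O\) with properties (i) and (ii). The only delicate point is making sure that circuit elements lying in \(O\cap M\) do not spoil the rank count. Placing \(O\cap M\) on both sides of the span inclusion, as above, handles this.
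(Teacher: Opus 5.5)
Your proof is correct. The paper itself gives no proof of this lemma and only cites Brualdi (1969), so there is no argument in the paper to compare against. What you have written is the standard Hall's-theorem proof of Brualdi's exchange bijection, and it makes the lemma self-contained.

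The key steps all hold:
\begin{itemize}
\item For $o\in C(O,e)-e$, the set $O-o+e$ is a base, by uniqueness of the fundamental circuit in $O+e$.
\item Your circuit-based $N(X)$ is a subset of the true neighbourhood in the exchange graph. Hall's condition for it therefore suffices, so you never need the converse that every exchange partner of $e$ lies in $C(O,e)$. Equivalently, you may apply Hall's theorem directly to the subgraph of circuit edges, and every matched pair then satisfies (ii).
\item Placing $O\cap M$ on both sides of the span inclusion is what the rank count needs. The set $X\cup(O\cap M)\subseteq M$ is independent of size $|X|+|O\cap M|$, and $N(X)\cup(O\cap M)$ is a disjoint union of size $|N(X)|+|O\cap M|$. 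This gives $|N(X)|\ge|X|$. Without the $O\cap M$ term on the left, you would only get the useless bound $|X|\le|N(X)|+|O\cap M|$.
\end{itemize}
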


By Lemma~\ref{lem:base-exchange}, we obtain the following observation.

\begin{observation}\label{observation1}
	Let \(M\) be a maximum-weight base with respect to a nonnegative weight function \(G\), and let \(O\) be any base. Then clearly
	$
	\sum_{e\in M}G_e
	\ge
	\sum_{r\in O}G_r.
	$
	Moreover, if \(\pi:M\to O\) is the base-exchange bijection from Lemma~\ref{lem:base-exchange}, then subtracting the common weight \(\sum_{e\in M\cap O}G_e\) from both sides gives
	$
		\sum_{e\in M\setminus O}G_e
		\ge
		\sum_{e\in M\setminus O}G_{\pi(e)}.
	$
\end{observation}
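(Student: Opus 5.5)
The statement to prove is Observation~\ref{observation1}. It has two parts, and I will treat them in order.

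The first inequality, \(\sum_{e\in M}G_e\ge\sum_{r\in O}G_r\), holds by the definition of \(M\). Since \(O\) is a base of the same matroid, it is a feasible competitor in the maximization that defines \(M\), so its total weight cannot exceed that of \(M\). No exchange argument is needed here. Nonnegativity of \(G\) plays no role in this step; it only guarantees that a maximum-weight base is a sensible object (for example, one computable by the Rado--Edmonds greedy).

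For the second inequality, I will take the bijection \(\pi:M\to O\) from Lemma~\ref{lem:base-exchange} and use its two properties.
\begin{itemize}
\item Because \(\pi\) is a bijection, the total weight of \(O\) can be reindexed over \(M\):
\[
\sum_{r\in O}G_r=\sum_{e\in M}G_{\pi(e)}.
\]
\item By property (i), \(\pi(e)=e\) for every \(e\in M\cap O\). Hence
\[
\sum_{e\in M\cap O}G_{\pi(e)}=\sum_{e\in M\cap O}G_e,
\]
so this common quantity appears on both sides.
\end{itemize}

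Next I split each side of the first inequality over \(M\cap O\) and \(M\setminus O\). Subtracting the common finite quantity \(\sum_{e\in M\cap O}G_e\) from both sides leaves
\[
\sum_{e\in M\setminus O}G_e\ge\sum_{e\in M\setminus O}G_{\pi(e)}.
\]

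I do not expect a real obstacle. The only point needing care is that \(\pi\) maps \(M\setminus O\) bijectively onto \(O\setminus M\). This follows from property (i) together with bijectivity: elements of \(M\cap O\) are fixed points, so no element of \(M\setminus O\) can map into \(M\cap O\). Consequently the reindexed sum over \(M\setminus O\) is exactly the weight of \(O\setminus M\), and the subtraction is legitimate.

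Property (ii) of the exchange lemma is not used for this inequality. It matters only later, for feasibility of the hybrid solution in the proof of Theorem~\ref{mat_approx}, where the observation is applied with \(M=M_t\) and \(O=O_{t-1}\) as bases of \(\mathcal{M}/S_{t-1}\).
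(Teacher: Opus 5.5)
Your proof is correct and follows the paper's own (essentially one-line) argument. The first inequality comes from the maximality of $M$ over all bases, and the second comes from reindexing $\sum_{r\in O}G_r$ through the bijection $\pi$ and cancelling the fixed-point contribution over $M\cap O$.
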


\begin{proof}[Proof of Theorem~\ref{mat_approx}]
	By Lemma~\ref{base_opt}, fix an optimal solution \(\mathbf{o}\) whose support is a base. In the analysis, let
	$
	S_t:=\operatorname{supp}(\mathbf{s}^t),
	$
	and maintain a residual base \(O_t\subseteq \operatorname{supp}(\mathbf{o})\) and the hybrid solution
	$
	\mathbf{h}^t=\mathbf{h}(\mathbf{s}^t,\mathbf{o}^t),
	$
	such that \(S_t\cup O_t\) is a base of the original matroid. Here \(\mathbf{o}^t\) is the restriction of \(\mathbf{o}\) to \(O_t\), namely \(\mathbf{o}^t_e=\mathbf{o}_e\) for \(e\in O_t\), and \(\mathbf{o}^t_e=0\) for \(e\notin O_t\). Initially, \(O_0=\operatorname{supp}(\mathbf{o})\) and \(\mathbf{h}^0=\mathbf{o}\). At the end, \(O_B=\emptyset\), and hence \(\mathbf{h}^B=\mathbf{s}^B\).
	
	Fix the history \(\mathcal{F}_{t-1}\) before iteration \(t\), and write
	\[
	\mathbf{s}=\mathbf{s}^{t-1},\qquad
	S=S_{t-1},\qquad
	O=O_{t-1},\qquad
	\mathbf{h}=\mathbf{h}^{t-1}.
	\]
	Let \(M=M_t\) be the maximum-score base chosen by Algorithm~\ref{alo1} in the contraction \(\mathcal{M}/S\). Note that both \(M\) and \(O\) are bases of \(\mathcal{M}/S\). By Lemma~\ref{lem:base-exchange}, there exists a bijection \(\pi:M\to O\) such that \(O-\pi(e)+e\) is a base of \(\mathcal{M}/S\) for every \(e\in M\). The algorithm chooses \(e\) uniformly at random from \(M\). Once \(e\) is chosen, the analysis sets
	$
	O'=O-\pi(e).
	$
	Since \(O-\pi(e)+e\) is a base of \(\mathcal{M}/S\), we have
	\[
	S_t\cup O_t
	=
	S\cup\{e\}\cup O'
	=
	S\cup (O-\pi(e)+e)
	\]
	is still a base of the original matroid, and hence the maintained residual base can be used in the next iteration.
	
	We first compute the expected gain of the algorithmic solution. Since the proportional top-2 rule is an exact-score rule, conditional on choosing \(e\), we have
	$
	\mathbb{E}\bigl[f(\mathbf{s}+i\mathbf{1}_e)-f(\mathbf{s})\mid e\bigr]=G_e(\mathbf{s}).
	$
	Since \(e\) is chosen uniformly from \(M\),
	\begin{equation}\label{eq:matroid-gain}
		g:=
		\mathbb{E}\bigl[f(\mathbf{s}^t)-f(\mathbf{s}^{t-1})\mid\mathcal{F}_{t-1}\bigr]
		=
		\frac{1}{|M|}\sum_{e\in M}G_e(\mathbf{s}),
		\tag{9}
	\end{equation}
	where \(|M|=B-t+1\).
	
	We next control the hybrid loss. Decompose \(M\) into \(I=M\cap O\) and \(X=M\setminus O\). If \(e\in I\), then \(\pi(e)=e\). Let \(\ell\) be the  label of \(e\) in \(\mathbf{h}\). Since \(\mathbf{s}\preceq \mathbf{h}_{-e}\), the \(C\)-replacement bound gives
	\[
	\mathbb{E}\bigl[f(\mathbf{h})-f(\mathbf{h}^t)\mid e,\mathcal{F}_{t-1}\bigr]
	=
	\Delta_{e,\ell}f(\mathbf{h}_{-e})
	-
	\mathbb{E}_i\bigl[\Delta_{e,i}f(\mathbf{h}_{-e})\bigr]
	\le
	C G_e(\mathbf{s}).
	\]
	If \(e\in X\), since \(\pi\) fixes \(M\cap O\), we have \(r=\pi(e)\in O\setminus M\). We split the hybrid loss into two parts: first adding \(e\) with a random label, and then deleting \(r\). The first part is bounded by the \(A\)-addition bound:
	\[
	f(\mathbf{h})-\mathbb{E}_i\bigl[f(\mathbf{h}+i\mathbf{1}_e)\bigr]
	=
	-\mathbb{E}_i\bigl[\Delta_{e,i}f(\mathbf{h})\bigr]
	\le
	A G_e(\mathbf{s}).
	\]
	For the second part, for every random label \(i\),
	$
	f(\mathbf{h}+i\mathbf{1}_e)-f(\mathbf{h}_{-r}+i\mathbf{1}_e)
	=
	\Delta_{r,\mathbf{o}_r}f(\mathbf{h}_{-r}+i\mathbf{1}_e).
	$
	Since \(\mathbf{s}\preceq \mathbf{h}_{-r}+i\mathbf{1}_e\), orthant submodularity gives
	$
	\Delta_{r,\mathbf{o}_r}f(\mathbf{h}_{-r}+i\mathbf{1}_e)
	\le
	\Delta_{r,\mathbf{o}_r}f(\mathbf{s}).
	$
	Taking expectation over the random label \(i\), and then using the \(Q\)-deletion domination, we get
	\[
	\mathbb{E}_i\bigl[\Delta_{r,\mathbf{o}_r}f(\mathbf{h}_{-r}+i\mathbf{1}_e)\bigr]
	\le
	\Delta_{r,\mathbf{o}_r}f(\mathbf{s})
	\le
	QG_r(\mathbf{s})
	=
	QG_{\pi(e)}(\mathbf{s}).
	\]
	Therefore, when \(e\in X\),
	\[
	\mathbb{E}\bigl[f(\mathbf{h})-f(\mathbf{h}^t)\mid e,\mathcal{F}_{t-1}\bigr]
	\le
	A G_e(\mathbf{s})+QG_{\pi(e)}(\mathbf{s}).
	\]
	Taking expectation over the uniformly chosen \(e\in M\), we obtain
	\begin{equation}\label{eq:matroid-loss}
		\begin{aligned}
			L
			&:=
			\mathbb{E}\bigl[f(\mathbf{h}^{t-1})-f(\mathbf{h}^t)\mid\mathcal{F}_{t-1}\bigr] \\
			&\le
			\frac{1}{|M|}
			\left(
			C\sum_{e\in I}G_e(\mathbf{s})
			+
			A\sum_{e\in X}G_e(\mathbf{s})
			+
			Q\sum_{e\in X}G_{\pi(e)}(\mathbf{s})
			\right).
		\end{aligned}
		\tag{10}
	\end{equation}
	By Observation~\ref{observation1}, we have
	$
	\sum_{e\in X}G_{\pi(e)}(\mathbf{s})
	\le
	\sum_{e\in X}G_e(\mathbf{s}).
	$
	Substituting this into~\eqref{eq:matroid-loss}, we get
	\[
	L
	\le
	\frac{1}{|M|}
	\left(
	C\sum_{e\in I}G_e(\mathbf{s})
	+
	(A+Q)\sum_{e\in X}G_e(\mathbf{s})
	\right)
	\le
	\frac{\eta}{|M|}\sum_{e\in M}G_e(\mathbf{s}),
	\]
	where \(\eta=\max\{C,A+Q\}\). Combining this with~\eqref{eq:matroid-gain} yields
	\begin{equation}\label{eq:matroid-loss-gain}
		L\le \eta g.
		\tag{11}
	\end{equation}
	
	Define the potential
	\[
	\Phi_t=(1-\rho)f(\mathbf{s}^t)+\rho f(\mathbf{h}^t),
	\qquad
	\rho=\frac{1}{1+\eta}.
	\]
	By~\eqref{eq:matroid-loss-gain}, conditioning on the history gives
	\[
	\mathbb{E}\bigl[\Phi_t-\Phi_{t-1}\mid\mathcal{F}_{t-1}\bigr]
	=
	(1-\rho)g-\rho L
	\ge
	(1-\rho)g-\rho\eta g
	=
	\bigl(1-\rho(1+\eta)\bigr)g
	=
	0.
	\]
	Thus \(\{\Phi_t\}_{t=0}^B\) is a submartingale. Initially, \(\mathbf{s}^0=\mathbf{0}\), \(\mathbf{h}^0=\mathbf{o}\), and \(f(\mathbf{0})=0\), so
	$
	\Phi_0=(1-\rho)f(\mathbf{0})+\rho f(\mathbf{o})=\rho f(\mathbf{o}).
	$
	Finally, \(\mathbf{h}^B=\mathbf{s}^B\), and hence \(\Phi_B=f(\mathbf{s}^B)\). Therefore,
	\[
	\mathbb{E}\bigl[f(\mathbf{s}^B)\bigr]
	=
	\mathbb{E}[\Phi_B]
	\ge
	\Phi_0
	=
	\rho f(\mathbf{o}).
	\]
	By the \((C,A,Q)\)-certificate in Lemma~\ref{lem:proportional-top2-certificates}, we have \(\eta=\sqrt{2}\), and therefore
	$
	\rho=\frac{1}{1+\sqrt{2}}=\sqrt{2}-1.
	$
	This proves the approximation guarantee. 
	
	The value-oracle complexity is \(O(n^2k)\). Indeed, in iteration \(t\), scores are computed only for elements in \(V\setminus S_{t-1}\). For each such element \(e\), the algorithm evaluates the \(k\) marginal gains
	$
	\Delta_{e,i}f(\mathbf{s}^{t-1}) \text{ for all } i \in [k],
	$
	using one reusable query for \(f(\mathbf{s}^{t-1})\) and \(k\) additional value-oracle queries. Hence the total number of value-oracle calls is
$
	O\left(\sum_{t=1}^{B}(n-t+1)k\right)
	=
	O(Bnk)
	\le
	O(n^2k).
	$
	The greedy computation of the maximum-score base of \(\mathcal{M}/S_{t-1}\) uses \(O(n)\) independence-oracle calls per iteration, and hence \(O(Bn)\le O(n^2)\) such calls in total.
\end{proof}

\section{Single knapsack constraints}\label{sec5}
This section gives a one-guess score-density greedy algorithm based on the proportional top-2 rule for \textit{knapsack-constrained non-monotone \(k\)-submodular maximization} (K$k$SM) with \(k \ge 2\), summarized in Algorithms~\ref{alo2} and~\ref{alo3}, and proves that it achieves a \((\sqrt{2}-1)\)-approximation.

We first consider the case \(c_e\in(0,1]\) for all \(e\in V\); the preprocessing of zero-cost elements is described at the end of this section. The main idea is as follows. First, choose an arbitrary feasible singleton solution \(\mathbf{y}\) and fix it. Then Algorithm~\ref{alo3}, initialized with \(\mathbf{y}\), assigns an exact score \(G_e\) to every element \(e\in U:=V\setminus \operatorname{supp}(\mathbf{y})\) using the proportional top-2 rule, and uses \(G_e/c_e\) as its density. It then selects an element \(a\) of maximum density. If adding \(a\) still satisfies the knapsack constraint, the algorithm assigns \(a\) a label according to the proportional top-2 rule and adds it to the current solution. This process is repeated for at most \(n-1\) iterations, producing a density-greedy solution \(\mathbf{s}^{\mathbf{y}}\) based on \(\mathbf{y}\). Finally, the algorithm returns a solution with the largest function value among all singleton solutions \(\mathbf{y}\) and their corresponding greedy solutions \(\mathbf{s}^{\mathbf{y}}\).

\begin{algorithm}[htbp]
	\caption{Knapsack one-guess greedy}
	\label{alo2}
	\begin{algorithmic}[1]
		\State \textbf{Input:} value oracle \(\mathcal{O}_f\); costs \(c_e\in(0,1]\) for all \(e\in V\); budget \(1\).
		\State \textbf{Output:} a feasible solution \(\mathbf{s}\).
		\State \(\mathcal{C}\gets\emptyset\).
		\ForAll{\(e\in V\) and \(i\in[k]\)}
		\State Let \(\mathbf{y}\) be the singleton solution that selects only \(e\) and assigns it label \(i\).
		\State Run Algorithm~\ref{alo3} with \(\mathbf{y}\) as input, and obtain a randomized greedy solution \(\mathbf{s}^{\mathbf{y}}\).
		\State \(\mathcal{C}\gets \mathcal{C}\cup\{\mathbf{y},\mathbf{s}^{\mathbf{y}}\}\).
		\EndFor
		\State Choose \(\mathbf{s}\in\arg\max\{f(\mathbf{x}):\mathbf{x}\in\mathcal{C}\}\).
		\State \Return \(\mathbf{s}\).
	\end{algorithmic}
\end{algorithm}

\begin{algorithm}[htbp]
	\caption{Proportional top-2 density greedy}
	\label{alo3}
	\begin{algorithmic}[1]
		\State \textbf{Input:} value oracle \(\mathcal{O}_f\); costs \(c_e\in(0,1]\) for all \(e\in V\); budget \(1\); a feasible singleton solution \(\mathbf{y}\).
		\State \textbf{Output:} a feasible solution \(\mathbf{s^y}\).
		\State \(\mathbf{s}\gets \mathbf{0}\), \(U\gets V\setminus\operatorname{supp}(\mathbf{y})\).
		\While{\(U\ne\emptyset\)}
		\ForAll{\(e\in U\)}
		\State Compute \(w_{e,i}\gets f((\mathbf{y} \sqcup \mathbf{s})+i\mathbf{1}_e)-f(\mathbf{y} \sqcup \mathbf{s})\) for all \(i\in[k]\).
		\State Let \(\alpha_e,\beta_e\) be two labels with the largest values of \(w_{e,i}\), with \(w_{e,\alpha_e}\ge w_{e,\beta_e}\).
		\State \(y_{e,1}\gets\max\{w_{e,\alpha_e},0\}\), \(y_{e,2}\gets\max\{w_{e,\beta_e},0\}\).
		\If{\(y_{e,1}=0\)}
		\State \(p_{e,1}\gets1\), \(p_{e,2}\gets0\), \(G_e\gets0\).
		\Else
		\State \(p_{e,1}\gets y_{e,1}/(y_{e,1}+y_{e,2})\), \(p_{e,2}\gets y_{e,2}/(y_{e,1}+y_{e,2})\).
		\State \(G_e\gets (y_{e,1}^2+y_{e,2}^2)/(y_{e,1}+y_{e,2})\).
		\EndIf
		\EndFor
		\State Choose an element \(a\in\arg\max_{e\in U} G_e/c_e\).
		\State \(U\gets U\setminus\{a\}\).
		\If{\(c(\mathbf{s}) +c(\mathbf{y})+c_a\le 1\)}
		\State Choose a label \(i\) by setting
		$
		\Pr[i=\alpha_{a}]=p_{a,1}
		\text{ and }
		\Pr[i=\beta_{a}]=p_{a,2}.
		$
		\State \(\mathbf{s}\gets \mathbf{s}+i\mathbf{1}_{a}\).
		\EndIf
		\EndWhile
		\State \(\mathbf{s}^{\mathbf{y}}\gets \mathbf{y}\sqcup\mathbf{s}\).
		\State \Return \(\mathbf{s}^\mathbf{y}\).
	\end{algorithmic}
\end{algorithm}

Algorithm~\ref{alo2} has the following approximation guarantee and query complexity.

\begin{theorem}\label{kna_approx}
	Algorithm~\ref{alo2} outputs a solution \(\mathbf{s}\) satisfying
	$
	\mathbb{E}\bigl[f(\mathbf{s})\bigr]
	\ge
	(\sqrt{2}-1)f(\mathbf{o}),
	$
	using at most \(O(n^3k^2)\) calls to the value oracle \(\mathcal{O}_f\), where \(\mathbf{o}\) is an optimal solution to K$k$SM.
\end{theorem}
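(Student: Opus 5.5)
The plan is to analyze, for a fixed optimal solution \(\mathbf{o}\), the single branch of Algorithm~\ref{alo2} that guesses a well-chosen singleton \(\mathbf{y}\). We then run the potential argument of Theorem~\ref{mat_approx} inside Algorithm~\ref{alo3}, replacing matroid exchange by cost-based deletion of fractional residual mass. First the trivial case: if \(|\operatorname{supp}(\mathbf{o})|\le 1\), then \(\mathbf{o}\) itself is one of the candidates in \(\mathcal{C}\) and we are done. Otherwise, let \(\mathbf{y}\) maximize \(f\) over the singletons \(\mathbf{o}_e\mathbf{1}_e\) with \(e\in\operatorname{supp}(\mathbf{o})\). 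Let \(z\) be a maximum-cost element of \(\operatorname{supp}(\mathbf{o})\setminus\operatorname{supp}(\mathbf{y})\), and let \(\mathbf{r}\) be \(\mathbf{o}\) with \(\operatorname{supp}(\mathbf{y})\) and \(z\) removed. Since \(\mathbf{y}\) is one of the guessed singletons, Algorithm~\ref{alo2} runs this branch, and it returns at least \(f(\mathbf{s}^{\mathbf{y}})\). Note that the scores in Algorithm~\ref{alo3} are computed at \(\mathbf{y}\sqcup\mathbf{s}\), so the current state in every local bound is \(\mathbf{y}\sqcup\mathbf{s}\).

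The hybrid is taken fractional. For each \(r\in\operatorname{supp}(\mathbf{r})\) we keep a mass \(x_r\in[0,1]\), initially \(1\). We define \(F(\mathbf{s},x)\) as \(\mathbb{E}[f(\mathbf{h})]\), where \(\mathbf{h}\) is \(\mathbf{y}\sqcup\mathbf{s}\) together with an independent random subset of \(\mathbf{r}\) that contains each \(r\) with probability \(x_r\) and gives it label \(\mathbf{o}_r\) (restricted to elements outside \(\operatorname{supp}(\mathbf{s})\)). Every sample \(\mathbf{h}\) satisfies \(\mathbf{y}\sqcup\mathbf{s}\preceq\mathbf{h}\), so the \((C,A,Q)\)-certificate of Lemma~\ref{lem:proportional-top2-certificates} applies pointwise, and we average afterwards.

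When Algorithm~\ref{alo3} accepts \(e\), we update the masses as follows. If \(e\in\operatorname{supp}(\mathbf{r})\), we first consume its own mass \(x_e\): with probability \(x_e\) this is a replacement, costing at most \(C\,G_e\), and otherwise it is a pure addition, costing at most \(A\,G_e\). We then delete further residual mass of other elements, of total cost at most \(c_e(1-x_e)\), or \(c_e\) when \(e\notin\operatorname{supp}(\mathbf{r})\). Deleting mass \(\delta\) from an element \(r\) lowers \(F\) by \(\delta\cdot\mathbb{E}[\Delta_{r,\mathbf{o}_r}f(\cdot)]\). By orthant submodularity and \(Q\)-deletion domination this is at most \(\delta\,Q\,G_r(\mathbf{y}\sqcup\mathbf{s})\), which equals \((\delta c_r)\,Q\,G_r/c_r\).

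The key claim, and the main obstacle, is that up to the first time an element of \(\operatorname{supp}(\mathbf{r})\) is rejected by the knapsack, every \(r\) with \(x_r>0\) is still in \(U\). This holds because elements of \(\mathbf{r}\) leave \(U\) only by being accepted, in which case their mass is consumed first, or by being rejected. Density maximality of \(e\) then gives \(G_r/c_r\le G_e/c_e\). So deleting cost \(c_e(1-x_e)\) costs at most \((1-x_e)Q\,G_e\), and the expected total loss is at most
\[
x_eC+(1-x_e)(A+Q)\ \le\ \sqrt2\,G_e .
\]
Since the rule is exact-score, the expected gain is exactly \(G_e\). Hence
\[
\Phi=(1-\rho)f(\mathbf{y}\sqcup\mathbf{s})+\rho F,\qquad \rho=\sqrt2-1,
\]
is a submartingale over accepted steps, and rejected steps change nothing. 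We must also check that zero-score elements cause no harm; this follows from Lemma~\ref{lem:zero-score}.

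For termination, suppose an element \(a\in\operatorname{supp}(\mathbf{r})\) is rejected for the first time. Then \(c(\mathbf{s})+c(\mathbf{y})+c_a>1\), while \(c_a\le c_z\) and \(c(\mathbf{r})+c_z+c(\mathbf{y})\le 1\). Hence \(c(\mathbf{s})\ge c(\mathbf{r})\), so all residual mass has been deleted by then. If no such rejection ever occurs, every element of \(\mathbf{r}\) was accepted and its mass consumed. In either case the final hybrid is \(\mathbf{y}\sqcup\mathbf{s}\), and any steps after that point only involve \(f(\mathbf{y}\sqcup\mathbf{s})\). Here we fall back on the observation that with \(F=f\) the potential still cannot decrease, since \(L\le\sqrt2\,G_e\) with addition only.

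Therefore
\[
\mathbb{E}f(\mathbf{s}^{\mathbf{y}})\ \ge\ (1-\rho)f(\mathbf{y})+\rho f(\mathbf{y}\sqcup\mathbf{r})\ \ge\ \rho\bigl(f(\mathbf{y})+f(\mathbf{y}\sqcup\mathbf{r})\bigr),
\]
using \(\rho\le\tfrac12\) and \(f\ge0\). By orthant submodularity and \(f(\mathbf{0})=0\), \(f(\mathbf{o})-f(\mathbf{y}\sqcup\mathbf{r})\le f(\mathbf{o}_z\mathbf{1}_z)\le f(\mathbf{y})\), which gives \(\rho f(\mathbf{o})\). For the query count there are \(nk\) branches, each with at most \(n\) iterations of \(O(nk)\) queries, for a total of \(O(n^3k^2)\). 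Zero-cost elements are deferred to the preprocessing mentioned in the text.
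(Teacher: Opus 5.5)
Your proposal is correct and follows essentially the paper's proof. It uses the same guessed singleton $\mathbf{y}$, maximum-cost element $z$, and residual $\mathbf{r}$ (the paper's $\mathbf{q}$), and the same fractional hybrid in which the accepted element's own mass is deleted first and other deletions are charged by density domination before the first rejection in $\operatorname{supp}(\mathbf{r})$ (Lemmas~\ref{frchange} and~\ref{lem:5.3}). The cost-covering and final accounting also match: your step $\rho\bigl(f(\mathbf{y})+f(\mathbf{y}\sqcup\mathbf{r})\bigr)$ is just the paper's $(1-2\rho)f(\mathbf{y})\ge 0$ rewritten. The remaining differences are cosmetic: you work with $f$ at state $\mathbf{y}\sqcup\mathbf{s}$ instead of $f_{\mathbf{y}}$, and your argument absorbs the case $\operatorname{supp}(\mathbf{r})=\emptyset$. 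You should, however, state that each accepted step deletes exactly $\min\{c_e,\text{remaining residual cost}\}$ rather than merely at most $c_e$, because your covering argument depends on it.
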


To prove the above theorem, we first introduce the following notions.

Fix a feasible singleton solution \(\mathbf{y}\) for K\(k\)SM, and define a function
\(f_{\mathbf{y}}:(k+1)^{V_{\mathbf{y}}}\to\mathbb{R}\), where
\(V_{\mathbf{y}}:=V\setminus\operatorname{supp}(\mathbf{y})\), by
\[
f_{\mathbf{y}}(\mathbf{x})=f(\mathbf{y}\sqcup\mathbf{x})-f(\mathbf{y}).
\]
Clearly, \(f_{\mathbf{y}}\) is still \(k\)-submodular. Notably, \(f_{\mathbf{y}}\) is not necessarily nonnegative, but the local analysis in Section~\ref{sec3} only uses orthant submodularity and pairwise monotonicity, and therefore the results of that section still apply to \(f_{\mathbf{y}}\). For convenience, we write \(g:=f_{\mathbf{y}}\).

Let \(\mathbf{s}\in(k+1)^{V_{\mathbf{y}}}\) be the current partial greedy solution maintained by Algorithm~\ref{alo3}. Note that, in Algorithm~\ref{alo3}, the marginal gain
\[
f\bigl((\mathbf{y}\sqcup\mathbf{s})+i\mathbf{1}_e\bigr)-f(\mathbf{y}\sqcup\mathbf{s})
=
g(\mathbf{s}+i\mathbf{1}_e)-g(\mathbf{s}).
\]
Thus, for an element \(e\), the score \(G_e^f(\mathbf{y}\sqcup\mathbf{s})\) viewed under \(f\) is the same as the score \(G_e^g(\mathbf{s})\) viewed under \(g\). When there is no ambiguity, we simply write this common score as \(G_e(\mathbf{s})\).

Let \(\mathbf{q}\in(k+1)^{V_{\mathbf{y}}}\) be a comparison solution used for the hybrid analysis, where
$
c(\mathbf{q})\le 1-c(\mathbf{y}).
$
For each \(e\in\operatorname{supp}(\mathbf{q})\), write \(\ell_e=\mathbf{q}_e\). Define a residual probability vector
\[
\lambda=(\lambda_e)_{e\in V_{\mathbf{y}}}\in[0,1]^{V_{\mathbf{y}}},
\qquad
\lambda_e=0
\quad
\text{for all } e\in V_{\mathbf{y}}\setminus\operatorname{supp}(\mathbf{q}).
\]
This vector is interpreted as follows: in the constructed hybrid solution, each element \(e\in\operatorname{supp}(\mathbf{q})\) independently keeps its label \(\ell_e\) from the comparison solution with probability \(\lambda_e\), and otherwise does not appear. We require
\begin{equation}\label{eq:lambda-conflict-free}
	\lambda_e=0
	\qquad
	\text{for every } e\in\operatorname{supp}(\mathbf{q})\cap\operatorname{supp}(\mathbf{s}),
	\notag
\end{equation}
so that the algorithmic solution and the residual part do not conflict on the same element. Let \(R(\lambda)\in(k+1)^{V_{\mathbf{y}}}\) denote the random residual solution generated according to \(\lambda\), and define
\begin{equation}\label{eq:hybrid-H}
	H(\mathbf{s},\lambda)
	=
	\mathbb{E}\bigl[g(\mathbf{s}\sqcup R(\lambda))\bigr].
	\notag
\end{equation}
When \(\lambda_e=1\) for all \(e\in\operatorname{supp}(\mathbf{q})\) and \(\mathbf{s}=\mathbf{0}\), we have \(H(\mathbf{0},\mathbf{1})=g(\mathbf{q})\). When \(\lambda=\mathbf{0}\), we have \(H(\mathbf{s},\mathbf{0})=g(\mathbf{s})\).
The potential \(H(\mathbf{s},\lambda)\) is used only in the analysis; the algorithm never evaluates or optimizes it.

\subsection{One-step hybrid loss bound}

The following lemma gives the one-step hybrid loss bound needed for the knapsack analysis. Fix an element \(a\) that has been selected and accepted by the knapsack constraint. Before sampling its random label, the analysis process deletes residual mass of total cost at most \(c_a\) from the residual probability vector \(\lambda\), giving priority to the residual mass of \(a\) itself. If the score densities of the elements corresponding to the other deleted coordinates are all no larger than the score density of \(a\), then the expected decrease in the hybrid value caused by accepting \(a\) and performing this residual deletion is at most \(\sqrt{2}G_a(\mathbf{s})\). This is precisely the local charging step that replaces one-to-one exchange under the knapsack constraint.

\begin{lemma}\label{frchange}
	Condition on the history of Algorithm~\ref{alo3} up to the point where the element \(a\) has been selected and accepted by the knapsack constraint, but before its random label is sampled. Let \(\mathbf{s}\in (k+1)^{V_{\mathbf{y}}}\) denote the current partial greedy solution, and let \(\mathbf{q}\in (k+1)^{V_{\mathbf{y}}}\) be a comparison solution with residual probability vector
	\(\lambda\in[0,1]^{V_{\mathbf{y}}}\), as defined above.
	 If \(a\in\operatorname{supp}(\mathbf{q})\), write \(\ell_a\) for its label in \(\mathbf{q}\). Denote by \(G_a=G_a(\mathbf{s})\) the score of \(a\), and by \(P_a(\mathbf{s})\) the proportional top-2 label distribution.
	
	Define
	\[
	\theta=
	\begin{cases}
		\lambda_a, & a\in \operatorname{supp}(\mathbf{q}),\\
		0, & a\notin \operatorname{supp}(\mathbf{q}).
	\end{cases}
	\]
	Suppose that a deletion vector
	$
	\delta=(\delta_e)_{e\in V_{\mathbf{y}}}
	$
	is chosen such that
	\[
	\delta_e\in[0,\lambda_e]
	\quad\text{for } e\in\operatorname{supp}(\mathbf{q}),
	\qquad
	\delta_e=0
	\quad\text{for } e\in V_{\mathbf{y}}\setminus\operatorname{supp}(\mathbf{q}).
	\]
	Here \(\delta_e\) represents the probability mass removed from coordinate \(e\) of \(\lambda\). Write
	\[
	D=\sum_{e\in V_{\mathbf{y}}}\delta_e c_e
	=
	\sum_{e\in\operatorname{supp}(\mathbf{q})}\delta_e c_e.
	\]
	If the following conditions hold:
	\begin{enumerate}
		\item[(i)] \(\delta_a=\theta\) and \(\theta c_a\le D\le c_a\);
		\item[(ii)] for every \(e\neq a\) with \(\delta_e>0\), the density domination
		\[
		\frac{G_e(\mathbf{s})}{c_e}\le \frac{G_a(\mathbf{s})}{c_a}
		\tag{12}\label{density}
		\]
		holds,
	\end{enumerate}
	then, letting \(\lambda'=\lambda-\delta\) and \(i\sim P_a(\mathbf{s})\), we have
	\[
	H(\mathbf{s},\lambda)
	-
	\mathbb{E}_i\bigl[H(\mathbf{s}+i\mathbf{1}_a,\lambda')\bigr]
	\le
	\eta G_a(\mathbf{s}),
	\]
	where \(\eta=\sqrt{2}\).
\end{lemma}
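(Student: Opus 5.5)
The plan is to exploit that \(H(\mathbf{s},\lambda)\) is multilinear in \(\lambda\), since the coordinates of \(R(\lambda)\) are independent. This lets us split the one-step change into two parts. The first part handles \(a\)'s own residual coordinate and the label assigned to \(a\). The second part removes the mass \(\delta_e\) from the other coordinates \(e\neq a\). Each part is charged using the \((C,A,Q)\)-certificate of Lemma~\ref{lem:proportional-top2-certificates}. Those certificates only use orthant submodularity and pairwise monotonicity, so they apply to \(g=f_{\mathbf{y}}\).

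\emph{Step 1 (element \(a\)).} Let \(\lambda^{-a}\) denote \(\lambda\) with coordinate \(a\) set to \(0\), and let \(R^{-a}\) be the residual sampled from \(\lambda^{-a}\). Conditioning on \(a\)'s coordinate gives
\[
H(\mathbf{s},\lambda)=\theta\,\mathbb{E}\bigl[g(\mathbf{s}\sqcup R^{-a}+\ell_a\mathbf{1}_a)\bigr]+(1-\theta)\,\mathbb{E}\bigl[g(\mathbf{s}\sqcup R^{-a})\bigr].
\]
Set \(\tilde\lambda=\lambda-\delta_a\mathbf{1}_a=\lambda^{-a}\). Then \(\mathbb{E}_i[H(\mathbf{s}+i\mathbf{1}_a,\tilde\lambda)]=\mathbb{E}_i\mathbb{E}[g(\mathbf{s}\sqcup R^{-a}+i\mathbf{1}_a)]\).

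For each realization \(\mathbf{h}=\mathbf{s}\sqcup R^{-a}\), we have \(\mathbf{s}\preceq\mathbf{h}\) and \(\mathbf{h}_a=0\). The conflict-free requirement guarantees that \(\mathbf{h}\) is a valid solution.
\begin{itemize}
\item In the \(\theta\)-branch, the loss is bounded by the \(C\)-replacement bound with \(C=1\).
\item In the \((1-\theta)\)-branch, the loss is bounded by the \(A\)-addition bound.
\end{itemize}
Hence the first part costs at most \(\bigl(\theta C+(1-\theta)A\bigr)G_a\).

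\emph{Step 2 (other deletions).} Next, lower the coordinates \(e\neq a\) one at a time, from \(\tilde\lambda_e\) to \(\tilde\lambda_e-\delta_e\), in the state \(\mathbf{s}+i\mathbf{1}_a\). By linearity in \(\lambda_e\), each such move decreases the hybrid value by
\[
\delta_e\,\mathbb{E}\bigl[\Delta_{e,\ell_e}g(\mathbf{x})\bigr],
\]
where \(\mathbf{x}\) is the current random hybrid with \(e\) removed. Since \(\mathbf{x}\succeq\mathbf{s}+i\mathbf{1}_a\succeq\mathbf{s}\), orthant submodularity gives \(\Delta_{e,\ell_e}g(\mathbf{x})\le\Delta_{e,\ell_e}g(\mathbf{s})\). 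The \(Q\)-deletion domination then gives \(\Delta_{e,\ell_e}g(\mathbf{s})\le Q\,G_e(\mathbf{s})\). Finally, the density condition \eqref{density} gives \(Q\,G_e(\mathbf{s})\le Q\,c_eG_a/c_a\). Summing over \(e\neq a\) and using (i), the second part costs at most
\[
Q\,\frac{G_a}{c_a}\,(D-\theta c_a)\le Q(1-\theta)G_a.
\]

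\emph{Combining.} The total expected loss is at most
\[
\theta C\,G_a+(1-\theta)(A+Q)\,G_a\le\max\{C,A+Q\}\,G_a=\sqrt2\,G_a.
\]
The zero-score case (\(G_a=0\)) is covered by Lemma~\ref{lem:zero-score}. In that case every marginal at every \(\mathbf{h}\succeq\mathbf{s}\) vanishes, and the deletion bound still holds because all marginals \(\Delta_{e,\ell_e}g(\mathbf{s})\le Q\,G_e\) with \(G_e\le c_eG_a/c_a=0\).

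The main obstacle I expect is the bookkeeping in Step 2. The deletions happen after \(a\) has been added with a random label, while other coordinates are only partially reduced. I must verify that every intermediate hybrid still dominates \(\mathbf{s}\) and never conflicts with \(\mathbf{s}\) or with \(a\). This is why \(a\)'s own mass is removed first: doing so ensures that \(a\) never appears in the residual once it carries an algorithmic label. I also need that the multilinear derivative is an expectation of marginals at points \(\succeq\mathbf{s}\).
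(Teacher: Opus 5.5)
Your proposal is correct and follows essentially the same route as the paper. You first strip \(a\)'s own residual mass and, coupling the other residual coordinates, bound that part by \(\bigl(\theta C+(1-\theta)A\bigr)G_a\) via the replacement and addition certificates; you then remove the remaining mass coordinate by coordinate, charging each removal through orthant submodularity, \(Q\)-deletion domination, and density domination to get \(Q(1-\theta)G_a\), which is exactly the paper's \(T_1+T_2\) decomposition (your separate remark on the zero-score case is harmless but unnecessary, since the certificates already cover \(G_a=0\)).
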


\begin{proof}
	Throughout the proof, if \(a\notin \operatorname{supp}(\mathbf{q})\), then the label \(\ell_a\) may be specified arbitrarily, because the corresponding coefficient is \(\theta=0\), and hence it does not affect the argument.
	
	We first remove the probability mass of \(a\) from \(\lambda\). Define a new residual probability vector \(\mu\in[0,1]^{V_{\mathbf{y}}}\) by
	\[
	\mu_a=0,\qquad \mu_e=\lambda_e\quad (e\ne a).
	\]
	If \(a\notin\operatorname{supp}(\mathbf{q})\), then this simply means \(\mu=\lambda\). Since \(\delta_a=\theta\) by assumption, we have
	$
	\lambda'_a=\lambda_a-\delta_a=0
	$
	when \(a\in\operatorname{supp}(\mathbf{q})\); if \(a\notin\operatorname{supp}(\mathbf{q})\), then the coordinate of \(\lambda'\) corresponding to \(a\) is already \(0\). Therefore, under the state \(\mathbf{s}+i\mathbf{1}_a\), the random residual solutions generated from \(\mu\) and \(\lambda'\) do not contain \(a\), and hence the constructed hybrid solutions are valid.
	
	Decompose the hybrid decrease as
	\[
	H(\mathbf{s},\lambda)
	-
	\mathbb{E}_i\bigl[H(\mathbf{s}+i\mathbf{1}_a,\lambda')\bigr]
	=
	T_1+T_2,
	\notag
	\]
	where
	\[
	T_1
	=
	H(\mathbf{s},\lambda)
	-
	\mathbb{E}_i\bigl[H(\mathbf{s}+i\mathbf{1}_a,\mu)\bigr],
	\]
	and
	\[
	T_2
	=
	\mathbb{E}_i\bigl[H(\mathbf{s}+i\mathbf{1}_a,\mu)\bigr]
	-
	\mathbb{E}_i\bigl[H(\mathbf{s}+i\mathbf{1}_a,\lambda')\bigr].
	\]
	
	We first estimate \(T_1\). Let \(R_{-a}(\rho)\) denote the random residual solution obtained by independently sampling elements in \(V_{\mathbf{y}}\setminus\{a\}\) according to the restriction of a residual probability vector \(\rho\): each \(e\in\operatorname{supp}(\mathbf{q})\setminus\{a\}\) appears with label \(\ell_e\) with probability \(\rho_e\), and otherwise it does not appear. Since \(\mu_e=\lambda_e\) for all \(e\ne a\), the random solutions \(R_{-a}(\lambda)\) and \(R_{-a}(\mu)\) have the same distribution, and thus can be coupled to have the same realization, denoted by \(\mathbf{r}\). Let
	$
	\mathbf{u}=\mathbf{s}\sqcup \mathbf{r}.
	$
	Since the residual vector satisfies \(\lambda_e=0\) whenever \(e\in\operatorname{supp}(\mathbf{s})\), the supports of \(\mathbf{s}\) and \(\mathbf{r}\) do not conflict, and hence
	$
	\mathbf{s}\preceq \mathbf{u}.
	$
	Moreover, since \(\mathbf{r}_a=0\) and \(a\notin\operatorname{supp}(\mathbf{s})\), we have
	$
	\mathbf{u}_a=0.
	$
	
In the previous hybrid solution \(\mathbf{s}\sqcup R(\lambda)\), after fixing the residual part other than \(a\) as \(\mathbf{r}\), \(a\) has two possible states in \(R(\lambda)\): it appears with label \(\ell_a\) with probability \(\theta\), and it does not appear with probability \(1-\theta\). Thus the conditional expected value of the previous hybrid is
	\[
	\theta\, g(\mathbf{u}+\ell_a\mathbf{1}_a)+(1-\theta)g(\mathbf{u}).
	\]
	On the other hand, in the updated hybrid solution \((\mathbf{s}+i\mathbf{1}_a)\sqcup R(\mu)\), the element \(a\) has already been added as an algorithmic element and receives label \(i\sim P_a(\mathbf{s})\); meanwhile, \(\mu_a=0\), so \(R(\mu)\) no longer contains \(a\). Therefore, conditioned on the same \(\mathbf{r}\), the conditional expected value of the updated hybrid over the random label of \(a\) is
	\[
	\mathbb{E}_{i\sim P_a(\mathbf{s})}\bigl[g(\mathbf{u}+i\mathbf{1}_a)\bigr].
	\]
	Thus, conditioned on \(\mathbf{r}\), \(T_1\) equals
	\[
	\theta\, g(\mathbf{u}+\ell_a\mathbf{1}_a)+(1-\theta)g(\mathbf{u})
	-
	\mathbb{E}_i\bigl[g(\mathbf{u}+i\mathbf{1}_a)\bigr].
	\]
	Rewriting this expression gives
	\begin{align*}
		&\theta\, g(\mathbf{u}+\ell_a\mathbf{1}_a)+(1-\theta)g(\mathbf{u})
		-
		\mathbb{E}_i\bigl[g(\mathbf{u}+i\mathbf{1}_a)\bigr]\\
		&=\theta\bigl(g(\mathbf{u}+\ell_a\mathbf{1}_a)-g(\mathbf{u})\bigr)
		-
		\mathbb{E}_i\bigl[g(\mathbf{u}+i\mathbf{1}_a)-g(\mathbf{u})\bigr]\\
		&=\theta\,\Delta_{a,\ell_a}g(\mathbf{u})
		-
		\mathbb{E}_i\bigl[\Delta_{a,i}g(\mathbf{u})\bigr]\\
		&=\theta\Bigl(\Delta_{a,\ell_a}g(\mathbf{u})
		-\mathbb{E}_i\bigl[\Delta_{a,i}g(\mathbf{u})\bigr]\Bigr)
		+(1-\theta)\Bigl(-\mathbb{E}_i\bigl[\Delta_{a,i}g(\mathbf{u})\bigr]\Bigr).
	\end{align*}
	
	We now apply the local certificates. Since \(\mathbf{s}\preceq\mathbf{u}\) and \(\mathbf{u}_a=0\), the \(C\)-replacement bound applied to element \(a\), current state \(\mathbf{s}\), and extended state \(\mathbf{u}\) gives
	\[
	\Delta_{a,\ell_a}g(\mathbf{u})
	-
	\mathbb{E}_i\bigl[\Delta_{a,i}g(\mathbf{u})\bigr]
	\le
	C G_a(\mathbf{s}).
	\]
	Similarly, the \(A\)-addition bound gives
	\[
	-\mathbb{E}_i\bigl[\Delta_{a,i}g(\mathbf{u})\bigr]
	\le
	A G_a(\mathbf{s}).
	\]
	Therefore, for every fixed \(\mathbf{r}\),
	\begin{align*}
		&\theta\Bigl(\Delta_{a,\ell_a}g(\mathbf{u})
		-\mathbb{E}_i\bigl[\Delta_{a,i}g(\mathbf{u})\bigr]\Bigr)
		+(1-\theta)\Bigl(-\mathbb{E}_i\bigl[\Delta_{a,i}g(\mathbf{u})\bigr]\Bigr)\\
		&\le
		\theta C G_a(\mathbf{s})+(1-\theta)A G_a(\mathbf{s}).
	\end{align*}
	The right-hand side is independent of \(\mathbf{r}\). Taking expectation over \(\mathbf{r}\), we obtain
	\[
	T_1\le \bigl(\theta C+(1-\theta)A\bigr)G_a(\mathbf{s}).
	\tag{13}\label{T1}
	\]
	
	We next estimate \(T_2\). Moving from \(\mu\) to \(\lambda'\) only decreases the residual mass of coordinates \(e\ne a\) in \(\mu\). Let
	\[
	E_\delta=\{e\in\operatorname{supp}(\mathbf{q}): e\ne a,\ \delta_e>0\}.
	\]
	If \(E_\delta=\emptyset\), then \(\mu=\lambda'\), and hence \(T_2=0\). Otherwise, assume
	\[
	E_\delta=\{r_1,r_2,\ldots,r_m\}.
	\]
	We remove the mass \(\delta_{r_j}\) from \(\mu_{r_j}\) one coordinate at a time, in this order. Define a sequence of residual vectors
	\[
	\nu^0,\nu^1,\ldots,\nu^m
	\]
	as follows. Set \(\nu^0=\mu\). For each \(j=1,\ldots,m\), let \(\nu^j\) be obtained from \(\nu^{j-1}\) by decreasing only coordinate \(r_j\) by \(\delta_{r_j}\), namely
	\[
	\nu^j_{r_j}=\nu^{j-1}_{r_j}-\delta_{r_j},
	\qquad
	\nu^j_e=\nu^{j-1}_e \quad (e\ne r_j).
	\]
	Then \(\nu^m=\lambda'\).
	This is valid because \(r_j\ne a\) implies \(\mu_{r_j}=\lambda_{r_j}\), and the assumption gives
	$
	0<\delta_{r_j}\le \lambda_{r_j}.
	$
	By telescoping,
	\begin{align*}
		T_2
		&=
		\mathbb{E}_i\bigl[
		H(\mathbf{s}+i\mathbf{1}_a,\nu^0)
		-
		H(\mathbf{s}+i\mathbf{1}_a,\nu^m)
		\bigr]\\
		&=
		\mathbb{E}_i\left[
		\sum_{j=1}^m
		\bigl(
		H(\mathbf{s}+i\mathbf{1}_a,\nu^{j-1})
		-
		H(\mathbf{s}+i\mathbf{1}_a,\nu^j)
		\bigr)
		\right].
	\end{align*}
	
	Fix a label \(i\) and a coordinate \(r_j\). As before, for a residual probability vector \(\rho\), let \(R_{-r_j}(\rho)\) denote the random residual solution obtained by sampling all elements except \(r_j\): each \(e\in\operatorname{supp}(\mathbf{q})\setminus\{r_j\}\) appears with label \(\ell_e\) with probability \(\rho_e\), and otherwise it does not appear. Since \(\nu^{j-1}_e=\nu^j_e\) for all \(e\ne r_j\), we couple \(R_{-r_j}(\nu^{j-1})\) and \(R_{-r_j}(\nu^j)\) so that they have the same realization, denoted by \(\mathbf{t}\). Let
	$
	\mathbf{h}=(\mathbf{s}+i\mathbf{1}_a)\sqcup \mathbf{t}.
	$
	Since \(r_j\ne a\), \(\mathbf{t}_{r_j}=0\). Also, \(r_j\notin\operatorname{supp}(\mathbf{s})\); otherwise \(\lambda_{r_j}=0\), contradicting \(\delta_{r_j}>0\). Hence
	$
	\mathbf{h}_{r_j}=0.
	$
	Moreover, \(\mathbf{h}\) extends \(\mathbf{s}\) without changing any label already assigned in \(\mathbf{s}\), and therefore
	$
	\mathbf{s}\preceq \mathbf{h}.
	$
	
	Now compare
	$
	H(\mathbf{s}+i\mathbf{1}_a,\nu^{j-1})
	\text{ and }
	H(\mathbf{s}+i\mathbf{1}_a,\nu^j).
	$
	They differ only in the probability that \(r_j\) appears. Under the former, \(r_j\) appears with probability \(\nu^{j-1}_{r_j}\); under the latter, \(r_j\) appears with probability
	\[
	\nu^j_{r_j}=\nu^{j-1}_{r_j}-\delta_{r_j}.
	\]
	Therefore, their difference is exactly the removed probability mass \(\delta_{r_j}\) times the expected marginal gain of changing \(r_j\) from absent to present with label \(\ell_{r_j}\), namely
	\[
	H(\mathbf{s}+i\mathbf{1}_a,\nu^{j-1})
	-
	H(\mathbf{s}+i\mathbf{1}_a,\nu^j)
	=
	\delta_{r_j}\,
	\mathbb{E}_{\mathbf{t}}
	\left[
	\Delta_{r_j,\ell_{r_j}}g(\mathbf{h})
	\right].
	\tag{14}\label{14}
	\]
	For each fixed \(\mathbf{t}\), since \(\mathbf{s}\preceq\mathbf{h}\) and \(\mathbf{h}_{r_j}=0\), orthant submodularity gives
	\[
	\Delta_{r_j,\ell_{r_j}}g(\mathbf{h})
	\le
	\Delta_{r_j,\ell_{r_j}}g(\mathbf{s}).
	\tag{15}\label{15}
	\]
	By the \(Q\)-deletion domination,
	\[
	\Delta_{r_j,\ell_{r_j}}g(\mathbf{s})
	\le
	QG_{r_j}(\mathbf{s}).
	\tag{16}\label{16}
	\]
	Substituting~\eqref{15} and~\eqref{16} into~\eqref{14}, for every fixed \(i\) and \(j\), we obtain
	\[
	H(\mathbf{s}+i\mathbf{1}_a,\nu^{j-1})
	-
	H(\mathbf{s}+i\mathbf{1}_a,\nu^j)
	\le
	\delta_{r_j}QG_{r_j}(\mathbf{s}).
	\tag{17}\label{17}
	\]
	The right-hand side is independent of the label \(i\). Summing~\eqref{17} over \(j=1,\ldots,m\) and then taking expectation over \(i\sim P_a(\mathbf{s})\), we get
	\[
	T_2
	\le
	Q\sum_{j=1}^m \delta_{r_j}G_{r_j}(\mathbf{s})
	=
	Q\sum_{r\ne a}\delta_rG_r(\mathbf{s}).
	\tag{18}\label{18}
	\]
	
	We now use density domination to convert the right-hand side into \(G_a(\mathbf{s})\). For every \(r\ne a\) with \(\delta_r>0\), assumption~\eqref{density} gives
	\[
	\frac{G_r(\mathbf{s})}{c_r}\le \frac{G_a(\mathbf{s})}{c_a}.
	\]
	Since all costs in this section are positive, i.e., \(c_r>0\) and \(c_a>0\), this is equivalent to
	\[
	G_r(\mathbf{s})\le \frac{G_a(\mathbf{s})}{c_a}c_r.
	\]
	Therefore,
	\[
	\sum_{r\ne a}\delta_rG_r(\mathbf{s})
	\le
	\frac{G_a(\mathbf{s})}{c_a}\sum_{r\ne a}\delta_rc_r.
	\tag{19}\label{19}
	\]
	On the other hand, since \(D=\sum_{e\in\operatorname{supp}(\mathbf{q})}\delta_ec_e\) and \(\delta_a=\theta\), we have
	\[
	\sum_{r\ne a}\delta_rc_r
	=
	D-\delta_ac_a
	=
	D-\theta c_a.
	\]
	Using the assumption \(D\le c_a\), we get
	\[
	D-\theta c_a
	\le
	c_a-\theta c_a
	=
	(1-\theta)c_a.
	\tag{20}\label{20}
	\]
	Substituting~\eqref{19} and~\eqref{20} into~\eqref{18}, we obtain
	\[
	T_2
	\le
	Q\frac{G_a(\mathbf{s})}{c_a}(1-\theta)c_a
	=
	Q(1-\theta)G_a(\mathbf{s}).
	\tag{21}\label{21}
	\]
	
	Finally, combining \(T_1\) and \(T_2\), by~\eqref{T1} and~\eqref{21},
	\begin{align*}
		H(\mathbf{s},\lambda)
		-
		\mathbb{E}_i\bigl[H(\mathbf{s}+i\mathbf{1}_a,\lambda')\bigr]
		&=
		T_1+T_2\\
		&\le
		\bigl(\theta C+(1-\theta)A\bigr)G_a(\mathbf{s})
		+
		Q(1-\theta)G_a(\mathbf{s})\\
		&=
		\bigl(\theta C+(1-\theta)(A+Q)\bigr)G_a(\mathbf{s}).
	\end{align*}
	Let \(\eta=\max\{C,A+Q\}\), since \(0\le\theta\le 1\), we have
	\[
	\theta C+(1-\theta)(A+Q)\le \eta.
	\]
	Therefore,
	\[
	H(\mathbf{s},\lambda)
	-
	\mathbb{E}_i\bigl[H(\mathbf{s}+i\mathbf{1}_a,\lambda')\bigr]
	\le
	\eta G_a(\mathbf{s}).
	\]
	 By the \((C,A,Q)\)-certificate in Lemma~\ref{lem:proportional-top2-certificates}, we have \(\eta=\sqrt{2}\). This proves the lemma.
\end{proof}

\subsection{Residual-deletion process for a fixed comparison solution}

 Lemma~\ref{frchange} gives a one-step hybrid loss bound for any residual deletion satisfying the cost and density domination conditions. The next lemma shows that such deletions can be chosen throughout the greedy process for any fixed comparison solution \(\mathbf{q}\). Before the first element of \(\operatorname{supp}(\mathbf{q})\) is rejected by the knapsack, every remaining residual element of \(\mathbf{q}\) is still a candidate; hence the density-maximality of the accepted element gives the required density domination. A potential argument then implies that, once the accepted cost covers \(c(\mathbf{q})\), the final greedy solution captures at least a \(\rho\)-fraction of \(g(\mathbf{q})\).

\begin{lemma}\label{lem:5.3}
	Fix a feasible singleton solution \(\mathbf{y}\) for K\(k\)SM. Let \(\mathbf{q}\in(k+1)^{V_{\mathbf{y}}}\) be an arbitrary comparison solution. Consider the execution of Algorithm~\ref{alo3}. Let \(\mathcal{A}_{\mathbf{q}}\) be the set of accepted steps, i.e., steps in which an element is successfully added to the knapsack, from the beginning of the algorithm until immediately before the first element in \(\operatorname{supp}(\mathbf{q})\) is rejected due to insufficient budget. If no element in \(\operatorname{supp}(\mathbf{q})\) is ever rejected, then \(\mathcal{A}_{\mathbf{q}}\) consists of all accepted steps up to the time when all elements in \(\operatorname{supp}(\mathbf{q})\) have been processed.
	
	Suppose that the total cost of the elements accepted in \(\mathcal{A}_{\mathbf{q}}\) is at least \(c(\mathbf{q})\). Then the final partial greedy solution \(\mathbf{s}^{\mathrm{fin}} \in (k+1)^{V_{\mathbf{y}}}\) maintained by Algorithm~\ref{alo3} satisfies  
	\[
	\mathbb{E}\bigl[g(\mathbf{s}^{\mathrm{fin}})\bigr]
	\ge
	\rho g(\mathbf{q}),
	\qquad
	\rho=\frac{1}{1+\sqrt{2}}=\sqrt{2}-1.
	\]
\end{lemma}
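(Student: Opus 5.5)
The plan is to run the potential argument with the analysis-only residual vector \(\lambda\) and to invoke Lemma~\ref{frchange} at every accepted step in \(\mathcal{A}_{\mathbf{q}}\). Initialize \(\mathbf{s}=\mathbf{0}\) and \(\lambda_e=1\) for \(e\in\operatorname{supp}(\mathbf{q})\), so that \(H(\mathbf{0},\lambda)=g(\mathbf{q})\). The deletion rule is fixed in advance. When an element \(a\) is accepted at a step of \(\mathcal{A}_{\mathbf{q}}\), first set \(\delta_a=\theta\), i.e.\ remove all residual mass of \(a\). Then remove further residual mass from the other coordinates in \(\operatorname{supp}(\mathbf{q})\), in any fixed order, until the removed cost equals \(\min\{c_a,\ \theta c_a+\sum_{e\neq a}\lambda_e c_e\}\). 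This gives \(\theta c_a\le D\le c_a\), which is condition (i). Moreover, \(D=c_a\) holds unless the residual mass is exhausted.

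Outside \(\mathcal{A}_{\mathbf{q}}\) the analysis behaves as follows. At rejected steps nothing changes. At accepted steps after \(\mathcal{A}_{\mathbf{q}}\) we will show that \(\lambda=\mathbf{0}\), so we delete nothing.

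Before the accepted step, we must ensure \(\lambda_e=0\) for all \(e\in\operatorname{supp}(\mathbf{s})\); this is needed for \(H\) to be well defined. It holds because an element is accepted only once and its own mass is removed at acceptance. Mass of a \(\mathbf{q}\)-element processed but not accepted needs no such care, since within \(\mathcal{A}_{\mathbf{q}}\) no \(\mathbf{q}\)-element has been rejected yet.

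Condition (ii) is verified as follows. Every \(e\neq a\) with \(\lambda_e>0\) is an element of \(\operatorname{supp}(\mathbf{q})\) that has neither been accepted nor rejected before the first rejection of a \(\mathbf{q}\)-element. Elements leave \(U\) only by being selected, and then they are either accepted or rejected. Hence such an \(e\) is still in \(U\) when \(a\) is chosen. Since \(a\) maximizes \(G_e/c_e\) over \(U\), the density domination (12) holds, and Lemma~\ref{frchange} applies. It gives \(\mathbb{E}[\text{hybrid decrease}]\le\sqrt2\,G_a(\mathbf{s})\), while the expected gain of \(g(\mathbf{s})\) is exactly \(G_a(\mathbf{s})\) by the exact-score property.

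The potential is \(\Phi=(1-\rho)g(\mathbf{s})+\rho H(\mathbf{s},\lambda)\) with \(\rho=\sqrt2-1\). Exactly as in the proof of Theorem~\ref{mat_approx}, \(\Phi\) is a submartingale over accepted steps in \(\mathcal{A}_{\mathbf{q}}\), and it is unchanged at rejected steps. Hence, at the end of \(\mathcal{A}_{\mathbf{q}}\), we have \(\mathbb{E}[\Phi]\ge\rho g(\mathbf{q})\).

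The main obstacle is showing that \(\lambda=\mathbf{0}\) at the end of \(\mathcal{A}_{\mathbf{q}}\), which is exactly where the cost hypothesis enters. Suppose mass remained. Then every step of \(\mathcal{A}_{\mathbf{q}}\) had \(D=c_a\), so the total deleted cost equals the total accepted cost, which is at least \(c(\mathbf{q})=\sum_e c_e\cdot 1\), the total initial mass-cost. This forces all mass to be deleted, a contradiction since all costs are positive. Once \(\lambda=\mathbf{0}\) we have \(H=g(\mathbf{s})\), so \(\Phi=g(\mathbf{s})\) at that time.

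For the steps after \(\mathcal{A}_{\mathbf{q}}\), we still need \(\mathbb{E}[g(\mathbf{s}^{\mathrm{fin}})]\ge\mathbb{E}[g(\mathbf{s})]\) at the end of \(\mathcal{A}_{\mathbf{q}}\). Each later accepted step has expected gain \(G_a(\mathbf{s})\ge0\) by the exact-score property, and rejected steps change nothing. Taking expectations over the whole history then gives \(\mathbb{E}[g(\mathbf{s}^{\mathrm{fin}})]\ge\rho g(\mathbf{q})\).
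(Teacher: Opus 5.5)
Your proposal is correct and follows essentially the same route as the paper. It uses the same deletion rule \(D=\min\{c_a,R\}\) with priority on \(a\)'s own mass, the same verification of density domination via candidacy within \(\mathcal{A}_{\mathbf{q}}\) so that Lemma~\ref{frchange} applies, the same potential \(\Phi=(1-\rho)g(\mathbf{s})+\rho H(\mathbf{s},\lambda)\), the same cost-covering argument that the residual mass is exhausted, and the same nonnegativity of exact scores for the remaining accepted steps; the only cosmetic difference is that you continue with zero deletions once \(\lambda=\mathbf{0}\), whereas the paper stops the analysis at the first accepted step \(T\) with \(R^T=0\).
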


\begin{proof}
	We construct a residual-deletion process used only for the analysis. Initially, \(\mathbf{s}^0=\mathbf{0}\), and \(\lambda^0_e=1\) for all \(e\in\operatorname{supp}(\mathbf{q})\). Thus
	\[
	H(\mathbf{s}^0,\lambda^0)=g(\mathbf{q}),
	\qquad
	g(\mathbf{s}^0)=0.
	\tag{22}\label{22}
	\]
	We follow the accepted steps of Algorithm~\ref{alo3}. Notably, Rejected steps whose elements are not in \(\supp(\mathbf q)\) do not
	change the greedy solution \(\mathbf s\) or the residual vector
	\(\lambda\), and hence no residual deletion is performed at such steps.
	
	Suppose that, before the \(t\)-th accepted step, the current greedy solution is \(\mathbf{s}^{t-1}\) and the residual vector is \(\lambda^{t-1}\). Define the total residual cost as
	\[
	R^{t-1}
	=
	\sum_{e\in\operatorname{supp}(\mathbf{q})}
	\lambda^{t-1}_e c_e .
	\]
	If \(R^{t-1}=0\), the analysis process stops. Otherwise, let \(a_t\) be the element accepted by the algorithm in the \(t\)-th accepted step. We remove residual mass of total cost
	$
	D^t=\min\{c_{a_t},R^{t-1}\}
	$
	from the current residual vector \(\lambda^{t-1}\), giving priority to the residual mass of \(a_t\) itself.
	
	Specifically, define
	\[
	\theta^t=
	\begin{cases}
		\lambda^{t-1}_{a_t}, & a_t\in\operatorname{supp}(\mathbf{q}),\\
		0, & a_t\notin\operatorname{supp}(\mathbf{q}).
	\end{cases}
	\]
	First set \(\delta^t_{a_t}=\theta^t\). This is feasible. Indeed, if \(D^t=R^{t-1}\), then \(D^t=R^{t-1}\ge \theta^t c_{a_t}\); if \(D^t=c_{a_t}\), then \(D^t=c_{a_t}\ge \theta^t c_{a_t}\), since \(0\le\theta^t\le1\).
	
	If, after removing the residual mass of \(a_t\), we still need to remove additional mass from the positive coordinates of \(\lambda^{t-1}\) in order to make up cost
	$
	D^t-\theta^t c_{a_t},
	$
	then such mass exists. Indeed, after removing \(a_t\), the remaining residual cost is \(R^{t-1}-\theta^t c_{a_t}\), and
	\[
	D^t-\theta^t c_{a_t}
	\le
	R^{t-1}-\theta^t c_{a_t}.
	\]
	Let \(\delta^t\) be the deletion vector constructed in this step, and define
	$
	\lambda^t=\lambda^{t-1}-\delta^t .
	$
	Then the total residual cost satisfies
	\begin{equation}\label{23}
		R^t
		=
		R^{t-1}-D^t
		=
		\max\{0,R^{t-1}-c_{a_t}\}.
		\tag{23}
	\end{equation}
	
	We next show that the deletion vector \(\delta^t\) constructed in this step satisfies the density domination condition required by Lemma~\ref{frchange}. The analysis has not yet passed the first time an element in \(\operatorname{supp}(\mathbf{q})\) is rejected due to insufficient budget. We claim that every element \(r\in\supp(\mathbf{q})\) with \(\lambda^{t-1}_r>0\) is still in the current candidate set, unless
	\(r=a_t\). Indeed, if such an element had been accepted earlier, then its residual mass would have been removed when it was accepted. If it had been rejected earlier, then it would be the first rejected element in \(\supp(q)\), a contradiction. Since the current greedy step chooses \(a_t\) to maximize
	$
	G_e(\mathbf{s}^{t-1})/c_e,
	$
	we have, for every element \(r\ne a_t\) with positive residual mass,
	\[
	\frac{G_r(\mathbf{s}^{t-1})}{c_r}
	\le
	\frac{G_{a_t}(\mathbf{s}^{t-1})}{c_{a_t}}.
	\]
	Therefore, the additional residual mass needed to reach deletion cost \(D^t\) can be chosen from elements satisfying the above density domination. By the preceding discussion of \(D^t\), this additional mass exists, and all conditions \emph{(i)} and \emph{(ii)} of Lemma~\ref{frchange} are satisfied. Write
	$g^t=G_{a_t}(\mathbf{s}^{t-1})$, then the expected hybrid decrease in this step is at most \(\eta g^t\), where $\eta =\sqrt{2}$.
	
	Define the potential
	\[
	\Phi_t=(1-\rho)g(\mathbf{s}^t)+\rho H(\mathbf{s}^t,\lambda^t),
	\]
	where \(\rho=1/(1+\eta)\). Fix the history \(\mathcal{F}_{t-1}\) before the \(t\)-th accepted step. Since the proportional top-2 rule is an exact-score rule, the conditional expected gain of the algorithmic solution is
	\[
	\mathbb{E}\bigl[
	g(\mathbf{s}^t)-g(\mathbf{s}^{t-1})
	\mid\mathcal{F}_{t-1}
	\bigr]
	=
	g^t.
	\]
	By Lemma~\ref{frchange},
	\[
	\mathbb{E}\bigl[
	H(\mathbf{s}^t,\lambda^t)-H(\mathbf{s}^{t-1},\lambda^{t-1})
	\mid\mathcal{F}_{t-1}
	\bigr]
	\ge
	-\eta g^t.
	\]
	Thus
	\begin{align}
		\mathbb{E}\bigl[
		\Phi_t-\Phi_{t-1}
		\mid\mathcal{F}_{t-1}
		\bigr]
		&\ge
		(1-\rho)g^t-\rho\eta g^t \notag\\
		&=
		(1-\rho(1+\eta))g^t
		=
		0.
		\tag{24}\label{eq:potential-increment}
	\end{align}

	Since the total cost of the elements accepted in \(\mathcal{A}_{\mathbf{q}}\) is at least \(c(\mathbf{q})=R^0\), it follows from~\eqref{23} that there exists an accepted step in \(\mathcal{A}_{\mathbf{q}}\) after which the residual cost becomes zero. Let \(T\) be the first such accepted step. Then \(R^{t-1}>0\) for every \(t\le T\), so the inequality~\eqref{eq:potential-increment} applies to all steps \(t=1,\ldots,T\). Therefore,
	\[
	\mathbb{E}[\Phi_T]\ge \Phi_0. \tag{25}\label{25*}
	\]
	Moreover,
	\[
	R^T
	=
	\max\left\{
	0,\,
	R^0-\sum_{t=1}^T c_{a_t}
	\right\}
	=
	0.
	\]
	Since all costs are positive in the present analysis, \(R^T=0\) implies \(\lambda^T=\mathbf{0}\). Hence
	\[
	\Phi_T=H(\mathbf{s}^T,\mathbf{0})=g(\mathbf{s}^T). \tag{26}\label{26*}
	\]
	Initially, by~\eqref{22},
	\[
	\Phi_0
	=
	(1-\rho)g(\mathbf{0})+\rho H(\mathbf{0},\mathbf{1})
	=
	\rho g(\mathbf{q}). \tag{27}\label{27*}
	\]
	 Substituting~\eqref{26*} and~\eqref{27*} into~\eqref{25*}, we obtain
	$
	\mathbb{E}\bigl[g(\mathbf{s}^T)\bigr]
	=
	\mathbb{E}[\Phi_T]
	\ge
	\Phi_0
	=
	\rho g(\mathbf{q}).
	$
	
	After the analysis stops, if the algorithm continues to accept elements, then each remaining conditional expected marginal gain is still equal to a nonnegative exact score. Hence the total conditional expected gain in the remaining finite number of steps is nonnegative. Then
	\[
	\mathbb{E}\bigl[g(\mathbf{s}^{\mathrm{fin}})\bigr]
	\ge
	\mathbb{E}\bigl[g(\mathbf{s}^T)\bigr]
	\ge
	\rho g(\mathbf{q}).
	\]
	This proves the lemma.
\end{proof}

\subsection{Proof of Theorem~\ref{kna_approx}}
We now complete the proof of Theorem~\ref{kna_approx}. The Lemma~\ref{lem:5.3} reduces the problem to a cost covering condition: given a singleton solution \(\mathbf{y}\), consider the greedy branch initialized at \(\mathbf{y}\). If the total cost of a comparison solution \(\mathbf{q}\) is covered by the elements accepted during the greedy process, then this branch obtains incremental value at least \(\rho f_{\mathbf{y}}(\mathbf{q})\). Let \(\mathbf{y}\) be the singleton solution of maximum value contained in the optimal solution. We take \(\mathbf{q}\) to be the remaining part  of the optimal solution after deleting \(\mathbf{y}\) and a maximum-cost element \(z\). Deleting the maximum cost element \(z\) guarantees that \(\mathbf{q}\) satisfies the cost covering condition, while the marginal gain of \(z\) itself is compensated by the value of \(\mathbf{y}\).

\begin{proof}[Proof of Theorem~\ref{kna_approx}]
	Let \(\rho=\sqrt{2}-1\), and write \(O=\operatorname{supp}(\mathbf{o})\). If \(O=\emptyset\), the claim is trivial. If \(|O|=1\), then Algorithm~\ref{alo2} enumerates all singleton solutions and returns a feasible solution of value at least \(f(\mathbf{o})\), and hence the claim holds. In the following, assume that \(|O|\ge 2\).
	
	Choose a feasible singleton solution
	\[
	\mathbf{y}\in
	\arg\max\bigl\{
	f(\mathbf{x}):
	\mathbf{x}\preceq \mathbf{o}
	\text{ and }
	|\operatorname{supp}(\mathbf{x})|=1
	\bigr\},
	\]
	and write \(\operatorname{supp}(\mathbf{y})=\{y\}\). Since Algorithm~\ref{alo2} enumerates all feasible singleton solutions, one call to Algorithm~\ref{alo3} is made with input \(\mathbf{y}\).
	
	Choose an element of maximum cost in \(O\setminus\{y\}\):
	\[
	z\in\arg\max\{c_e:e\in O\setminus\{y\}\}.
	\]
	Assign \(z\) the same label as in \(\mathbf{o}\), and let the resulting singleton solution be \(\mathbf{z}\). Let \(\mathbf{q}\) be obtained from \(\mathbf{o}\) by deleting the elements \(y\) and \(z\), while keeping all remaining elements and their labels unchanged. If \(\operatorname{supp}(\mathbf{q})=\emptyset\), then
	\[
	f(\mathbf{o})=f(\mathbf{y})+f_{\mathbf{y}}(\mathbf{z}).
	\]
	By orthant submodularity,
	\[
	f_{\mathbf{y}}(\mathbf{z})
	=
	f(\mathbf{y}\sqcup\mathbf{z})-f(\mathbf{y})
	\le
	f(\mathbf{z})-f(\mathbf{0})
	=
	f(\mathbf{z}).
	\]
	By the choice of \(\mathbf{y}\), we have \(f(\mathbf{z})\le f(\mathbf{y})\). Therefore, \(f(\mathbf{o})\le 2f(\mathbf{y})\). Since Algorithm~\ref{alo2} keeps \(\mathbf{y}\) as a candidate solution, its output has value at least
	\[
	f(\mathbf{y})\ge \frac{1}{2}f(\mathbf{o})\ge \rho f(\mathbf{o}).
	\]
	
	It remains to consider the case \(\operatorname{supp}(\mathbf{q})\ne\emptyset\). Consider the call to Algorithm~\ref{alo3} with input \(\mathbf{y}\). Let \(\mathbf{s}\) be the current partial greedy solution immediately before the first element in \(\operatorname{supp}(\mathbf{q})\) is rejected due to insufficient budget; if no element in \(\operatorname{supp}(\mathbf{q})\) is ever rejected, let \(\mathbf{s}\) be the partial greedy solution after all elements in \(\operatorname{supp}(\mathbf{q})\) have been processed.
	
	If the first rejected element is \(e\in\operatorname{supp}(\mathbf{q})\), then
	\[
	c(\mathbf{s})+c_y+c_e>1\ge c(\mathbf{o}).
	\]
	Thus
	\[
	c(\mathbf{s})
	>
	c(\mathbf{o})-c_y-c_e.
	\tag{28}\label{24}
	\]
	Since \(e\in\operatorname{supp}(\mathbf{q})\subseteq O\setminus\{y\}\), and \(z\) is an element of maximum cost in \(O\setminus\{y\}\), we have \(c_e\le c_z\). Moreover,
	\[
	c(\mathbf{q})=c(\mathbf{o})-c_y-c_z. \tag{29}\label{25}
	\]
	Comparing~\eqref{24} and~\eqref{25}, we get \(c(\mathbf{s})>c(\mathbf{q})\).
	
	If no element in \(\operatorname{supp}(\mathbf{q})\) is ever rejected, then all elements in \(\operatorname{supp}(\mathbf{q})\) are feasible when processed and are accepted by the algorithm. Hence
	$
	c(\mathbf{s})\ge c(\mathbf{q}).
	$
	Thus, using \(\mathbf{q}\) as the comparison solution and applying Lemma~\ref{lem:5.3}, the final partial greedy solution \(\mathbf{s}^{\mathrm{fin}}\in(k+1)^{V_{\mathbf{y}}}\) maintained by Algorithm~\ref{alo3} satisfies
	\[
	\mathbb{E}\bigl[f_{\mathbf{y}}(\mathbf{s}^{\mathrm{fin}})\bigr]
	\ge
	\rho f_{\mathbf{y}}(\mathbf{q}).
	\]
	Let \(\mathbf{s}^{\mathbf{y}}=\mathbf{y}\sqcup \mathbf{s}^{\mathrm{fin}}\) be the final output of Algorithm~\ref{alo3}. Then
	\[
	\mathbb{E}\bigl[f(\mathbf{s}^{\mathbf{y}})\bigr]
	=
	f(\mathbf{y})
	+
	\mathbb{E}\bigl[f_{\mathbf{y}}(\mathbf{s}^{\mathrm{fin}})\bigr]
	\ge
	f(\mathbf{y})+\rho f_{\mathbf{y}}(\mathbf{q}). \tag{30}\label{26}
	\]
	On the other hand, define
	\[
	m=f_{\mathbf{y}\sqcup\mathbf{q}}(\mathbf{z})
	=
	f(\mathbf{y}\sqcup\mathbf{q}\sqcup\mathbf{z})
	-
	f(\mathbf{y}\sqcup\mathbf{q}).
	\]
	Then
	\[
	f(\mathbf{o})=f(\mathbf{y})+f_{\mathbf{y}}(\mathbf{q})+m.
	\]
	By orthant submodularity,
	\[
	m=f_{\mathbf{y}\sqcup\mathbf{q}}(\mathbf{z})
	\le
	f(\mathbf{z})-f(\mathbf{0})
	=
	f(\mathbf{z}).
	\]
	By the definition of \(\mathbf{y}\), \(m\le f(\mathbf{z})\le f(\mathbf{y})\). Now compare the lower bound in~\eqref{26} with \(\rho f(\mathbf{o})\):
	\begin{align*}
		f(\mathbf{y})+\rho f_{\mathbf{y}}(\mathbf{q})-\rho f(\mathbf{o})
		&=
		f(\mathbf{y})+\rho f_{\mathbf{y}}(\mathbf{q})
		-
		\rho\bigl(f(\mathbf{y})+f_{\mathbf{y}}(\mathbf{q})+m\bigr)\\
		&=
		(1-\rho)f(\mathbf{y})-\rho m\\
		&\ge
		(1-\rho)f(\mathbf{y})-\rho f(\mathbf{y})\\
		&=
		(1-2\rho)f(\mathbf{y})
		\ge 0,
	\end{align*}
	where the last inequality follows from \(\rho=\sqrt{2}-1<1/2\). Hence
	\[
	\mathbb{E}\bigl[f(\mathbf{s}^{\mathbf{y}})\bigr]\ge \rho f(\mathbf{o}).
	\]
	
	Algorithm~\ref{alo2} returns a solution with the largest function value among all feasible singleton solutions and the corresponding density-greedy solutions. Denote this output by \(\mathbf{s}^*\). Then
	\[
	\mathbb{E}\bigl[f(\mathbf{s}^*)\bigr]
	\ge
	\mathbb{E}\bigl[f(\mathbf{s}^{\mathbf{y}})\bigr]
	\ge
	\rho f(\mathbf{o})
	=
	(\sqrt{2}-1)f(\mathbf{o}).
	\]
	
	It remains to prove the query complexity. Algorithm~\ref{alo2} enumerates at most \(nk\) singleton solutions, and hence calls Algorithm~\ref{alo3} at most \(nk\) times. For a fixed feasible singleton solution, Algorithm~\ref{alo3} processes at most \(n\) elements. In each iteration, it computes \(k\) marginal gains for every element in the current candidate set. Thus the number of value-oracle queries in one call to Algorithm~\ref{alo3} is
	$
	O\left(\sum_{t=0}^{n-1}(n-t)k\right)
	=
	O(n^2k).
	$
	Multiplying by the \(nk\) calls gives a total of \(O(n^3k^2)\) value-oracle queries. The values of singleton solutions themselves require only \(O(nk)\) additional queries. This proves the theorem.
\end{proof}

It remains to explain how to handle zero-cost elements, which were excluded only to make score densities well defined and to ensure that zero residual cost implies zero residual mass.

\paragraph{Preprocessing zero-cost elements.}
	If elements with \(c_e=0\) are allowed, define
	$
	Z_{\mathbf{y}}
	=
	\{e\in V_{\mathbf{y}}:c_e=0\}.
	$
	Before Algorithm~\ref{alo3} enters the density greedy phase, we preprocess the elements in \(Z_{\mathbf{y}}\) one by one. For the current solution \(\mathbf{s}\) and each \(e\in Z_{\mathbf{y}}\), we still compute the score \(G_e(\mathbf{s})\) and the label distribution \(P_e(\mathbf{s})\) according to the proportional top-2 rule, then accept \(e\) directly and assign it a random label according to \(P_e(\mathbf{s})\). Since \(c_e=0\), these operations consume no budget. After this preprocessing step, all zero-cost elements are removed from the candidate set, and the subsequent density greedy phase is run only on positive-cost elements. Hence every density \(G_e(\mathbf{s})/c_e\) is well defined.
	
	This preprocessing phase does not change the approximation ratio. Indeed, each accepted zero-cost element \(a\) still has conditional expected gain equal to its exact score \(G_a(\mathbf{s})\ge 0\). In the hybrid analysis, if \(a\in\operatorname{supp}(\mathbf{q})\), we delete all of its residual mass:
	\[
	\delta_a=\lambda_a,
	\qquad
	\delta_e=0\quad(e\ne a).
	\]
	Since \(c_a=0\), the residual cost deleted in this step is \(0\), and no residual mass needs to be deleted from any other element \(r\ne a\). Thus the one-step hybrid loss contains only the replacement/addition part, which is bounded by the \(C\)-replacement and \(A\)-addition bounds as
	\[
	\bigl(\lambda_a C+(1-\lambda_a)A\bigr)G_a(\mathbf{s})
	\le
	\eta G_a(\mathbf{s}).
	\]
	If \(a\notin\operatorname{supp}(\mathbf{q})\), then only the addition part appears, which is also covered by the same bound. Therefore, a zero-cost step satisfies the same one-step potential increment inequality as a positive-cost accepted step. Specifically, let
	\[
	\Phi^+
	=
	(1-\rho)g(\mathbf{s}+i\mathbf{1}_a)
	+
	\rho H(\mathbf{s}+i\mathbf{1}_a,\lambda')
	\]
	be the potential after accepting the zero-cost element \(a\) with label \(i\), and let
	\[
	\Phi
	=
	(1-\rho)g(\mathbf{s})+\rho H(\mathbf{s},\lambda)
	\]
	be the potential before accepting it. By the exact score property,
	\[
	\mathbb{E}\bigl[g(\mathbf{s}+i\mathbf{1}_a)-g(\mathbf{s})\bigr]
	=
	G_a(\mathbf{s}).
	\]
	On the other hand, by the one-step hybrid loss bound above,
	\[
	H(\mathbf{s},\lambda)
	-
	\mathbb{E}\bigl[H(\mathbf{s}+i\mathbf{1}_a,\lambda')\bigr]
	\le
	\eta G_a(\mathbf{s}).
	\]
	Hence
	\begin{align*}
		\mathbb{E}[\Phi^+-\Phi]
		&\ge
		(1-\rho)G_a(\mathbf{s})-\rho\eta G_a(\mathbf{s})\\
		&=
		\bigl(1-\rho(1+\eta)\bigr)G_a(\mathbf{s})
		=
		0,
	\end{align*}
	because \(\rho=1/(1+\eta)\). Thus each zero-cost accepted step in the preprocessing phase does not decrease the potential in conditional expectation.
	
	Every zero-cost accepted step in the preprocessing phase satisfies the same nonnegative potential increment inequality. Therefore, the expected potential at the end of the preprocessing phase is at least the initial potential. After that, all remaining elements have positive cost, and \(R^t=\sum_e\lambda^t_e c_e=0\) implies \(\lambda^t=\mathbf{0}\). Hence the positive-cost part of the proof of Lemma~\ref{lem:5.3} can be applied starting from the state after the preprocessing phase, giving the same conclusion. Since the proof of Theorem~\ref{kna_approx} only uses the conclusion of Lemma~\ref{lem:5.3}, the theorem remains valid once the lemma includes this preprocessing phase. The preprocessing phase processes at most \(n\) additional elements and therefore does not change the overall \(O(n^3k^2)\) order of value-oracle queries.

\section{Conclusions and future directions}\label{sec6}
In this work, we developed an exact-score proportional top-2 randomized framework for non-monotone \(k\)-submodular maximization. For both matroid  and knapsack constraints, the framework breaks the previous \(1/3\) barrier and achieves a \((\sqrt2-1)\)-approximation. The resulting algorithms are fully discrete, simple, and easy to implement.

Several directions remain open. First, it would be interesting to extend the framework to more general constraint families. The local analysis in Section~\ref{sec3} is independent of the feasibility constraint, and hence may be reusable. A natural direction is to combine our local exact-score analysis with the \(k\)-multilinear-extension framework developed by Zhou, Huang and Wang~\cite{zhou2025improved}, and to investigate whether the \(1/3\) barrier can also be surpassed for a constant number of knapsack
constraints.

Second, one may try to improve or generalize our framework. Section~\ref{sec3} shows that, under the scalar score used in this paper, no top-2 rule satisfying the \((C,A,Q)\)-certificate in Definition~\ref{d2.5} can improve upon the coefficient \(\eta=\sqrt2\). This does not rule out other local randomized rules, such as top-3 rules, different scalar scores, or different forms of local loss bounds, as possible ways to improve the \((\sqrt2-1)\)-approximation.

Finally, it remains largely open to understand oracle lower bounds for constrained non-monotone \(k\)-submodular maximization. Apart from the asymptotically tight \(1/2\) barrier for monotone unconstrained \(k\)-submodular maximization due to Iwata, Tanigawa, and Yoshida~\cite{iwata2016improved}, much less is known in constrained non-monotone settings. Although, for \(k=1\), oracle lower bounds are known for non-monotone submodular maximization under matroid and cardinality constraints~\cite{oveisgharan2011submodular}, these lower bounds do not immediately transfer to \(k\ge2\). The reason is that the problem structure changes substantially: by pairwise monotonicity, for every unassigned element there is always at least one label with nonnegative marginal gain. Thus a natural and direct open problem is whether, already for \(k\ge2\), non-monotone \(k\)-submodular maximization under a cardinality constraint admits an oracle lower bound strictly below \(1/2\).

\end{document}